\newcommand{\var}{\mathrm{var}}
\newcommand{\E}{\mathbb{E}}
\newcommand{\p}{\mathbb{P}}
\newcommand{\cov}{\mathrm{cov}}
\newcommand{\1}{\mathds{1}}
\theoremstyle{plain}
\newtheorem{theorem}{Theorem}[section]
\newtheorem{proposition}[theorem]{Proposition}
\newtheorem{lemma}[theorem]{Lemma}
\theoremstyle{remark}
\newtheorem{remark}{Remark}
\begin{document}

\begin{frontmatter}
\title{Time-lagged marginal expected shortfall}
\runtitle{TMES}

\begin{aug}
\author[A]{\fnms{Jiajun}~\snm{Liu}\ead[label=e1]{jiajun.liu@xjtlu.edu.cn}}
\author[A]{\fnms{Xuannan}~\snm{Liu}\ead[label=e2]{xuannan.liu18@student.xjtlu.edu.cn}}
\and
\author[A]{\fnms{Yuwei}~\snm{Zhao}\ead[label=e3]{yuwei.zhao@xjtlu.edu.cn}}
\address[A]{Department of Financial and Actuarial Mathematics,
Xi'an-Jiaotong Liverpool University, Suzhou, China \printead[presep={ ,\ }]{e1,e2,e3}}

\end{aug}

\begin{abstract}
Marginal expected shortfall (MES) is an important measure when assessing and quantifying the contribution of the financial institution to a systemic crisis.
In this paper, we propose {\em time-lagged marginal expected shortfall}
(TMES) as a dynamic extension of the MES, accounting for time lags in assessing systemic risks. A natural estimator for the TMES is proposed, and its asymptotic properties are
studied. To address challenges in constructing confidence intervals for the TMES in practice, we apply the stationary bootstrap method to generate confidence bands for the TMES estimator. Extensive simulation studies were conducted to investigate the asymptotic properties of empirical and bootstrapped TMES. Two practical applications of TMES, supported by real data analyses, effectively demonstrate its ability to account for time lags in risk assessment.

\end{abstract}

\begin{keyword}[class=MSC]
\kwd[Primary ]{60G70}
\kwd{62M10}
\kwd[; secondary ]{91G70}
\end{keyword}

\begin{keyword}
\kwd{Marginal expected shortfall}
\kwd{regular variation}
\kwd{time series analysis}
\kwd{bootstrap}
\kwd{central limit theorem}
\end{keyword}

\end{frontmatter}


\section{Introduction}

The 2007--2009 financial crisis and the COVID--19 pandemic attract attention from the academic and industry
to the study of systemic risks. The {\em marginal expected shortfall} (MES), one of the popular systemic risk measures, is used to assess how a systemic crisis affects a specific component, i.e., 
\begin{align*}
\mathrm{MES}(u)= \E [X \mid Y > u]\,,
\end{align*}  
where $(X,Y)$ is a bivariate random vector.  The random variable $X$ represents the value of a product in the market and the random variable $Y$ represents the loss of a market
portfolio, or an index of macroeconomic factors. The number $u \in \mathbb{R}$ is a high threshold such that the event $\{Y >
u\}$ stands for the occurrence of a systemic event and $\p(Y > u)$ is extremely
small. In practice, those values mentioned above are usually modeled as time series.
In this paper, we are interested in a
dynamic version of the MES to capture the time effect of systemic
risks. 
Suppose that we observe a time series of bivariate vectors
$(X_t, Y_t)_{t\in \mathbb{Z}}$ and the {\em time-lagged marginal expected shortfall}
(TMES) at lag $h\ge 0$ is given by
\begin{align}\label{eq:tmes0000}
\delta(h) = \lim_{u\to +\infty} \E[X_t \mid Y_{t-h} > u ] \,,
\end{align}
if existed. This type of extremal conditional expectation appears in different forms across other studies. The {\em extremogram} introduced in \cite{davis2009} is defined for a strictly stationary heavy-tailed time series $(X_t)$ and is given by \begin{align}\label{eq:extremogram}
\rho (h) = \lim_{u \to + \infty} \E[ \1(X_t >u) \mid X_{t-h} >u]\,,
\end{align}
if existed, where $\1(\cdot)$ is an indicator function satisfying 
\begin{align*}
  \1(A) = \begin{cases}
    1\,, & \text{the event } A \text{ happens}\,,\\
    0\,, & \text{otherwise}\,.
\end{cases}
\end{align*}
Here $\1 (X_t >u)$ follows a Bernoulli distribution which is not heavy-tailed.  
A {\em causal tail coefficient} introduced in \cite{gnecco2021causal} is defined for a heavy-tailed random vector $(X_i)_{i=1,\ldots,H}$ for some positive integer $H$ and is given by  
\begin{align} \label{eq:causaltail}
\lim_{u\to + \infty} \E[F_{X_i}(X_i) \mid X_j >u]\,, \quad i\neq j\,.  
\end{align}
Given that $X_i$ is continuous with its cumulative distribution function $F_{X_i}$, $F_{X_i}(X_i)$ follows a uniform distribution on the interval $[0,1]$. Therefore, besides the case when both $X_t$ and $Y_{t-h}$ in \eqref{eq:tmes0000} are heavy-tailed, we are also interested in the case when $X_t$ is light-tailed and $Y_{t-h}$ is heavy-tailed.

The essential condition for the existence of the TMES is {\em partial regular
  variation}, a variant of regular variation. We say that the time series $(X_t, Y_t)$ is partially
regularly varying if $(Y_t)$ is regularly varying. Regular variation
is essential for the existence of the limit $\lim_{u \to +\infty}
\mathrm{MES}(u)$ and the extremogram~\eqref{eq:extremogram}; see \cite{Cai2014RS,
  Cai2020, Goegebeur2021, CHEN2022238, davis2009} for example. The classical
definition of regular variation of a time series introduced in
\cite{basrak2009} relies on 
multivariate regular variation and vague
convergence; please refer to \cite{embrechts1997,resnick2007} for more
details. By replacing the vague convergence with the $M_0$-convergence
introduced in \cite{hult2006}, Segers~et~al.~\cite{segers2017}
extended the regular variation concept to a random element taking
values from a general metric space, and consequently, provided a new
definition of regular variation of a time series. Following this idea,
we will use the $M_O$-convergence studied in \cite{lindskog2014} to
replace the $M_0$-convergence for the definition of partial regular
variation.

We are interested in the
estimation problem of the TMES and its applications in practice. In a
series of papers by Cai~et~al.~\cite{Cai2014RS, Cai2020}, the
estimation of the MES is studied under various settings. Estimating the extremogram, as a special
example of TMES, and its applications in the time and frequency
domain are studied in a series of papers by
Mikosch~et~al.~\cite{davis2009, davis2013measures,davis2012142,
  mikosch2014, mikosch2015}. The TMES for a heavy-tailed time series
and its estimation problem are studied in \cite{goegebeur2024marginal}. In this paper, we will
construct an estimator for the TMES and derive its asymptotic
distribution. Since our proposed TMES estimator does not depend on any
parametric model, we propose a bootstrap algorithm to derive the
asymptotic distribution of the TMES estimator and consequently to
generate confidence intervals for the TMES estimator.  

The contribution of this work is threefold. First, we propose TMES as a novel measure and it serves as a dynamic extension of the MES in literature. By incorporating time lags in risk assessment, TMES provides a more nuanced approach to evaluating risks over time. Second, we show the existence of the TMES under the partial regular variation of the time series, allowing TMES to accommodate a variety of marginal distributions, including both light-tailed and heavy-tailed scenarios. Third, a natural estimator for TMES is developed, along with the study of its asymptotic properties, ensuring theoretical soundness and practical applicability. To address challenges in constructing confidence intervals for TMES, we employ the stationary bootstrap method to generate confidence bands. This enhances the reliability of TMES estimates in real-world applications.

The organization of the paper is as follows. In
Section~\ref{sec:tmes}, we introduce the TMES, partial regular variation and the existence
conditions for TMES with theoretical examples. The asymptotic properties of the
empirical TMES as the natural estimator of the TMES are studied in
Section~\ref{sec:empiricaltmes}. In Section~\ref{sec:bootstrap}, we
propose a stationary bootstrap algorithm for the empirical
TMES. According to simulations of two theoretical examples in
Section~\ref{sec:tmes}, the asymptotic properties of the empirical
TMES will be verified in Section~\ref{sec:simulation}. Finally, we
show applications of the TMES to two real data in
Section~\ref{sec:realdataanalyses}. For the sake of clarity, technical proofs of Theorem~\ref{thm:morvdef}, Proposition~\ref{prop:consistency}, Theorem~\ref{thm:clt} and Theorem~\ref{thm:bootclt} are gathered in the Appendix. 

\section{Time-lagged marginal expected shortfall} \label{sec:tmes}
In this section, we define the TMES and provide the essential conditions for its existence.
Let $(X_t, \mathbf{Y}_t)_{t\in \mathbb{Z}}$ be a strictly stationary time series 
taking values from $\mathbb{R}^{1+l}$ for an integer $l\ge 1$, where $X_t$ is a random variable and
$\mathbf{Y}_t$ is a random vector taking values from $\mathbb{R}^l$. Let the
set $A \subset \mathbb{R}^l$ be {\em bounded away from the origin}, i.e. $\inf_{x\in
  A} \|x\| >0$ with $\|\cdot \|$ as a norm in $\mathbb{R}^l$. The {\em time-lagged marginal
  expected shortfall} of $(X_t, \mathbf{Y}_t)$ at lag $h\ge 0$ is given
by 
\begin{align} \label{eq:mes00}
\delta_A (h) = \delta(h) = \lim_{u\to +\infty} \E [X_t \mid u^{-1} \mathbf{Y}_t \in
  A]\,. 
\end{align}
For ease of notation, we will prefer $\delta(h)$, which suppresses the dependence of $\delta_A(h)$ on the set $A$. By choosing $l=1$ and
$A=(1, +\infty)$, $\delta(0)$ agrees with the limit $\lim_{u\to +\infty} \mathrm{MES}(u)$.  

If $X_t$ and $\mathbf{Y}_t$ are independent, $\delta(h) = \E[X_t]$
trivially. We will think about a {\em centered TMES} at lag $h\ge 0$, $\delta_0(h) = \delta(h) - \E[X_t]$ in real data analysis. Moreover,
\begin{align*}
\delta_0 (h) & = \lim_{u\to \infty} \E[X_t -\E[X_t] \mid u^{-1} \mathbf{Y}_{t-h} \in A] \\
& = \lim_{u\to \infty}
       \frac{1}{\p(u^{-1} \mathbf{Y}_{t-h} \in A)} \E\big[ (X_t -\E[X_t]) \1(u^{-1}\mathbf{Y}_{t-h} \in A)
       \big]\\ 
& = \lim_{u\to \infty} \frac{1}{\p(u^{-1} \mathbf{Y}_{t-h} \in A)} \E
  \big[ (X_t -\E[X_t]) (\1(u^{-1}\mathbf{Y}_{t-h} \in A)-\p(u^{-1} \mathbf{Y}_{t-h} \in A) \big]\\ 
& = \lim_{u\to \infty} \frac{1}{\p(u^{-1} \mathbf{Y}_{t-h} \in A)} \cov
  \big( X_t, \1(u^{-1}\mathbf{Y}_{t-h} \in  A) \big)\,. 
\end{align*}
Then $\delta_0(h)$ is
asymptotically a cross-covariance function of $(X_t,\1(u^{-1}\mathbf{Y}_{t-h}\in A)).$

\subsection{Partial regular variation}
The definition of partial regular variation relies on the
$M_O$-convergence. We follow the arguments in \cite{lindskog2014} to
introduce the $M_O$-convergence of measures on a complete and
separable space $(S,d)$ with $d$ as metric and its Borel $\sigma$-algebra is denoted by
$\mathcal{B}$. We fix a closed subset
$\mathbb{C} \in \mathcal{B}$ and $\mathbb{O} = S\setminus \mathbb{C}$. We will denote the
space by $(S,d,\mathbb{C})$. The
sub-$\sigma$-algebra $\mathcal{B}_O$ is given by $\mathcal{B}_O = \{A \cap \mathbb{O}: A\in \mathcal{B} \}$. We suppose that the
space $(S,d,\mathbb{C})$ is equipped with a {\em
  partial scalar multiplication} $<\lambda, x> : [0, +\infty) \times S \to S$ satisfying
that for $\lambda, \lambda_1, \lambda_2 \in [0, +\infty)$ and $x\in S$:
\begin{enumerate}
\item the mapping $<\lambda, x> \mapsto \lambda x$ is continuous;
\item $<1, x> = x$, $<\lambda_1, <\lambda_2, x>> = <\lambda_1 \lambda_2, x>$, $<0, x> \in \mathbb{C}$;
\item $d(x, \mathbb{C}) < d(<\lambda, x>, \mathbb{C})$ if $\lambda>1$ and $x \in
  \mathbb{O}$. 
\end{enumerate}
We define $\mathbb{C}_r =\{x\in S: d(x, \mathbb{C})\le r\}$ and 
a subset $A \in \mathcal{B}$ is said to be {\em bounded away from $\mathbb{C}$} if
there exists $r>0$ such that $A \subset S\setminus\mathbb{C}_r$. Let $M_O(S)$ be the
class of Borel measures on $\mathbb{O}$ whose restrictions to $S\setminus
\mathbb{C}_r$ is finite for every $r>0$ and $\mathcal{C}_O$ be the class of
real-valued, non-negative, bounded and continuous functions on
$\mathbb{O}$ which vanishes on $\mathbb{C}_r$ for some $r>0$. The collection of finite Borel measures on the sub-$\sigma$-algebra $\mathcal{B}_{S \setminus \mathbb{C}_r}=\{A\cap (S \setminus \mathbb{C}_r): A\in \mathcal{B}_O\}$ is denoted by $M_b(S \setminus \mathbb{C}_r)$. The convergence $\mu_n \to \mu$ in $M_b(S\setminus \mathbb{C}_r)$ holds if and only if $\int f \dif \mu_n \to \int f \dif \mu$ for all continuous functions $f\in \mathcal{C}_O$ vanish on $\mathbb{C}_r$.  For
$\mu \in M_O$, let $\mu^{(r)}(\cdot) = \mu(\cdot \setminus \mathbb{C}_r) \in M_b (S\setminus \mathbb{C}_r)$.  According
to the Portmanteau theorem for the $M_O$-convergence (Theorem 2.1
in \cite{lindskog2014}), the following statements are
equivalent: 
\begin{enumerate}
\item $\mu_n \to \mu $ in $M_O$ as $n\to \infty$. 
\item $\int f\dif \mu_n \to \int f\dif \mu$ for each $f \in \mathcal{C}_O$. 
\item $\lim_{n\to \infty} \mu_n(A) = \mu(A)$ for all $A \in \mathcal{B}_O$ bounded away from $\mathbb{C}$ satisfying $\mu (\partial A) =0$.
\item $\mu_n^{(r)} \to \mu^{(r)}$ in $M_b(S \setminus \mathbb{C}_r)$ for all but
  at most countably many $r>0$.
\end{enumerate}

Recall from \cite{bingham1987} that a function $g:(0,\infty) \to (0, \infty)$
is regularly varying with index $\xi>0$, i.e., $g \in \mathcal{R}_{\xi}$, if $\lim_{x\to +\infty} g(\lambda x)/g(x)
= \lambda^{\xi}$ for all $\lambda>0$ and a sequence $(c_n)$ is regularly varying
with index $\xi>0$ if $\lim_{n\to \infty} c_{\lfloor \lambda n \rfloor}/c_n = \lambda^{\xi}$ for
all $\lambda>0$. Here $\lfloor \lambda n \rfloor$ denotes the integer part of $\lambda
n$. Regular variation based on the $M_O$-convergence is defined in
Theorem 3.1 of \cite{lindskog2014}, which is listed below. 
\begin{theorem}\label{thm:morv}
A measure $\nu \in M_O$ is {\em regularly varying} with index $\xi >0$ if
one of the following equivalent conditions is satisfied:
\begin{enumerate}
\item There exist a nonzero $\mu \in M_O$ and a regularly varying sequence
  $(c_n)$ of positive numbers such that $c_n \nu\big(<n, \cdot> \big) \to \mu(\cdot)$ in $M_O$
  as $n\to \infty$. 
\item There exist a nonzero $\mu \in M_O$ and a regularly varying function
  $c$ such that $c(t) \nu(<t, \cdot>) \to \mu(\cdot)$ in $M_O$ as $t\to \infty$.
\item There exist a nonzero $\mu \in M_O$ and a set $A \in \mathcal{B}_O$ satisfying
  $A \subset S\setminus \mathbb{C}_r$ for some $r>0$ such that $\nu(<t, A>)^{-1} \nu(<t, \cdot>) \to
  \mu(\cdot)$ in $M_O$ as $t\to \infty$. 
\end{enumerate}
Each statement 1-3 above implies
that the measure $\mu$ has the homogeneity property $\mu (<\lambda, A>) = \lambda^{-\xi} \mu(A)$
for some $\xi \ge 0$, all $A \in \mathcal{B}_O$ bounded away from $\mathbb{C}$ and $\lambda >0$.  
\end{theorem}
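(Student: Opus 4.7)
The plan is to prove the cycle 2 $\Rightarrow$ 1 $\Rightarrow$ 2 $\Rightarrow$ 3 $\Rightarrow$ 2, which establishes 1 $\Leftrightarrow$ 2 $\Leftrightarrow$ 3, and then to extract the homogeneity of $\mu$ from the argument for 3 $\Rightarrow$ 2. The implication 2 $\Rightarrow$ 1 is immediate: restricting the continuous parameter to integers yields $c_n := c(n)$, which inherits regular variation from $c$, and the $M_O$-convergence is preserved along subsequences.

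For 1 $\Rightarrow$ 2, I would extend the regularly varying sequence $(c_n)$ to a regularly varying function $c:(0,\infty)\to(0,\infty)$ by the classical monotone interpolation of Proposition~1.5.4 in \cite{bingham1987}, producing $c \in \mathcal{R}_\xi$ with $c(n)/c_n \to 1$. The uniform convergence theorem for regularly varying functions together with Potter's bounds then shows that $c(t)\nu(\langle t, \cdot\rangle) \to \mu(\cdot)$ in $M_O$ along the continuous parameter. The implication 2 $\Rightarrow$ 3 is pure algebra: choose a set $A\in\mathcal{B}_O$ bounded away from $\mathbb{C}$ with $\mu(A)\in(0,\infty)$ and $\mu(\partial A)=0$, which exists because $\mu$ is nonzero and finite on each $S\setminus\mathbb{C}_r$ and only countably many radii can carry boundary mass, and then divide the convergence in condition 2 applied to a generic continuity set $B$ by that applied to $A$,
\begin{align*}
\frac{\nu(\langle t, B\rangle)}{\nu(\langle t, A\rangle)} = \frac{c(t)\nu(\langle t, B\rangle)}{c(t)\nu(\langle t, A\rangle)} \to \frac{\mu(B)}{\mu(A)},
\end{align*}
which is exactly condition 3 with limit $\mu/\mu(A)$.

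The core of the argument is 3 $\Rightarrow$ 2 together with the homogeneity. Set $c(t) := 1/\nu(\langle t, A\rangle)$, so that $c(t)\nu(\langle t, \cdot\rangle)\to \mu(\cdot)$ is tautological after normalizing $\mu(A)=1$. To show $c \in \mathcal{R}_\xi$, I use the semigroup identity $\langle \lambda t, x\rangle = \langle t, \langle \lambda, x\rangle\rangle$ from axiom~2 and the distance monotonicity in axiom~3, which together guarantee that $\langle\lambda, A\rangle$ lies in $\mathcal{B}_O$ and is bounded away from $\mathbb{C}$, and is a $\mu$-continuity set for all but countably many $\lambda>0$. For such $\lambda$,
\begin{align*}
\frac{c(\lambda t)}{c(t)} = \frac{\nu(\langle t, A\rangle)}{\nu(\langle t, \langle\lambda, A\rangle\rangle)} \to \frac{1}{\mu(\langle\lambda, A\rangle)} =: g(\lambda).
\end{align*}
Writing $c(\lambda_1 \lambda_2 t)/c(t)$ as a product of two convergent ratios forces $g$ to satisfy the multiplicative Cauchy equation $g(\lambda_1 \lambda_2) = g(\lambda_1) g(\lambda_2)$, and its measurability then yields $g(\lambda) = \lambda^\xi$ for some $\xi \ge 0$ by the classical Karamata argument. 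Consequently $c \in \mathcal{R}_\xi$ (condition 2), and $\mu(\langle \lambda, A\rangle) = \lambda^{-\xi}\mu(A)$, which extends to arbitrary $A\in\mathcal{B}_O$ bounded away from $\mathbb{C}$ by outer regularity of $\mu$.

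The main obstacle is the partial (rather than global) nature of the scalar multiplication: every occurrence of $\langle\lambda, A\rangle$ must be verified to lie in $\mathcal{B}_O$, be bounded away from $\mathbb{C}$, and avoid $\mu$-discontinuities, all of which require repeated use of axioms~1--3 and a countable-exception argument. Once this bookkeeping is under control, the Karamata identification of $g$ as a power and the propagation of homogeneity from the generating set $A$ to all sets follow the classical Bingham--Goldie--Teugels recipe adapted to the $M_O$-topology.
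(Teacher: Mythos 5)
The paper never proves this statement: it is imported verbatim as Theorem~3.1 of \cite{lindskog2014}, so there is no in-paper proof to compare against, and your attempt should be judged against the original Lindskog--Resnick--Roy argument, which it essentially reconstructs (sequential-to-continuous equivalence, normalization by $\nu(\langle t,A\rangle)$, and Karamata's functional-equation identification of the homogeneity index). Two steps of your sketch are thinner than they should be. First, in $1\Rightarrow 2$ the uniform convergence theorem and Potter's bounds only control the scalar normalizer $c$; the real work is relating the measures $\nu(\langle t,\cdot\rangle)$ and $\nu(\langle n,\cdot\rangle)$ for $n=\lfloor t\rfloor\le t<n+1$, which requires writing $\langle t,A\rangle=\langle n,\langle t/n,A\rangle\rangle$ and a sandwich argument using axiom~3 together with continuity of $\lambda\mapsto\mu(\langle\lambda,A\rangle)$ at $\lambda=1$ on continuity sets; your sketch does not mention this. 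Second, your claim that axioms~2--3 guarantee $\langle\lambda,A\rangle$ is bounded away from $\mathbb{C}$ for all $\lambda>0$ is only justified for $\lambda\ge 1$ (axiom~3 gives $d(\langle\lambda,x\rangle,\mathbb{C})>d(x,\mathbb{C})$ only when $\lambda>1$; for $\lambda<1$ it yields an upper bound), so the multiplicative Cauchy equation for $g$ should first be derived on $[1,\infty)$ and then extended to $(0,1)$ via $g(\lambda)g(1/\lambda)=g(1)=1$, rather than asserted for generic $\lambda$. With those repairs the argument matches the cited proof; neither issue is a conceptual error, but as written both are gaps.
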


We consider the definitions of partial regular variation on three complete and separable 
spaces $(S,d,\mathbb{C})$, $(S^{(m)},d_m, \mathbb{C}^{(m)})$ and
$(S^{(\infty)}, d_{\infty}, \mathbb{C}^{(\infty)})$. The first space $(S,d,
\mathbb{C})$ is $S= \mathbb{R}^{1+l}$ and 
$\mathbb{C} = \mathbb{R}\times \mathbf{0}^{(l)}$ with $d$ as the Euclidean distance
in $\mathbb{R}^{1+l}$ and $\mathbf{0}^{(l)}$ as a zero vector in $\mathbb{R}^l$. Write
$<\lambda, (x, y_1,\ldots, y_l)>= (x, 
\lambda y_1,\ldots,\lambda y_l)=(x, 
\lambda\mathbf{y}) $ for $\lambda\in [0, +\infty)$ and $(x, \mathbf{y} )\in
\mathbb{R}^{1+l}$ as the partial scalar multiplication. The second space is $S^{(m)}= \mathbb{R}^{(l+1)m}$
for $m\in \mathbb{N}$ equipped with a metric
\begin{align*}
d_m  \big( (x_{1t}, \mathbf{y}_{1t})_{t=1,2,\ldots,m}, (x_{2t},
  \mathbf{y}_{2t})_{t=1,2,\ldots, m} \big) = \sum_{i=1}^m 2^{-i} \max
  \big( d\big((x_{1i}, \mathbf{y}_{1i}), (x_{2i}, \mathbf{y}_{2i}) \big), 1
  \big)\,. 
\end{align*}
We choose
$\mathbb{C}^{(m)} = \prod_{i=1}^m \mathbb{C}$. The partial scalar
multiplication for $S^{(m)}$ is given by
\begin{align*}
<\lambda, (x_{1t}, \mathbf{y}_{1t})_{t=1,2,\ldots,m} > = \big( (x_{j1}, \lambda \mathbf{y}_{j1}), \ldots,
(x_{jm}, \lambda \mathbf{y}_{jm}) \big)\,. 
\end{align*}
The third space $S^{(\infty)}$ is the space of sequences in $\mathbb{R}^{1+l}$,
$\{(x_t, \mathbf{y}_t) \in \mathbb{R}^{1+l}: t\in\mathbb{N}\}$ with a metric 
\begin{align*}
d_{\infty}\big( (x_{1t}, \mathbf{y}_{1t})_{t\in \mathbb{N}}, (x_{2t},
  \mathbf{y}_{2t})_{t\in \mathbb{N}} \big) = \sum_{i\in \mathbb{N}} 2^{-i}\max
  \big( d\big( (x_{1i}, \mathbf{y}_{1i}), (x_{2i}, \mathbf{y}_{2i}) \big), 1
  \big)\,, 
\end{align*}
and we choose $\mathbb{C}^{(\infty)} = \prod_{i=1}^{\infty} \mathbb{C}$. Similar to the partial
scalar multiplication in $S^{(m)}$, the partial scalar multiplication is
defined componentwisely, i.e., $<\lambda, (x_t, \mathbf{y}_t)_{t\in \mathbb{N}}>=
(x_t, \lambda \mathbf{y}_t)_{t \in \mathbb{N}}$.
\begin{remark}
 The formulation of $(S^{(m)},d_m)$ and $(S^{(\infty)},d_{\infty})$ is based on the
 arguments in Chapter 1, \cite{billingsley1999}, which ensures that
 $(S^{(m)}, d_m)$ and $(S^{(\infty)}, d_{\infty})$ are complete and
 separable. Moreover, the spaces $(S^{(m)},d_m)$ and
 $(S^{(\infty)},d_{\infty})$ are equipped with corresponding product topologies.  
\end{remark}

With the corresponding cones, regular variation based on the
$M_O$-convergence can be defined for the three spaces  $(S,d,\mathbb{C})$, $(S^{(m)},d_m, \mathbb{C}^{(m)})$ and
$(S^{(\infty)}, d_{\infty}, \mathbb{C}^{(\infty)})$ according to Theorem~\ref{thm:morv}. Assume that $V \in
\mathcal{R}_{-\xi}$ with $\xi>0$. We say that a random
vector $(X, \mathbf{Y} )$ taking values from $(S, d, \mathbb{C})$ is
{\em partially regularly varying} with the cone $\mathbb{C}$ and index $\xi>0$ if there
exists $\mu \in M_O(S)$ such that as $u\to +\infty$, 
\begin{align} \label{eq:prv00}
\frac{1}{V(u)} \p (u^{-1} (X, \mathbf{Y}) \in \cdot) \to \mu(\cdot)\,, \quad \text{
  in } M_O(S)\,. 
\end{align}
By replacing $(S,d, \mathbb{C})$ with $(S^{(m)}, d_m, \mathbb{C}^{(m)})$
and $(S^{(\infty)}, d_{\infty}, \mathbb{C}^{(\infty)}) $ in the definition of
partial regular variation, we have the following definitions.
\begin{enumerate}[(a)]
\item A random vector $(X_t, \mathbf{Y}_t)_{t=1,2,\ldots,m}$ taking values from
$(S^{(m)}, d_m, \mathbb{C}^{(m)})$ is said to be {\em
  partially regularly varying} with the cone $\mathbb{C}^{(m)}$ and
index $\xi>0$ such that there exists a measure $\mu^{(m)} \in M_O(S^{(m)})$ such
that  
\begin{align} \label{eq:prv01}
\frac{1}{V(u)} \p (u^{-1} (X_t, \mathbf{Y}_t)_{t=1,2,\ldots,m} \in \cdot) \to \mu^{(m)}(\cdot)\,, \quad \text{
  in } M_O(S^{(m)})\,. 
\end{align}
\item A random element $(X_t, \mathbf{Y}_t)_{t\in \mathbb{N}}$ taking values from
$(S^{(\infty)}, d_{\infty}, \mathbb{C}^{(\infty)})$ is {\em partially regularly varying} with the cone
$\mathbb{C}^{(\infty)}$ and index $\xi>0$ if there exists a measure
$\mu^{(\infty)} \in M_O(S^{(\infty)})$ such that 
\begin{align}\label{eq:prv02}
\frac{1}{V(u)} \p (u^{-1} (X_t, \mathbf{Y}_t)_{t\in \mathbb{N}} \in \cdot) \to
  \mu^{(\infty)}(\cdot)\,, \quad \text{ in } M_O(S^{(\infty)})\,.
\end{align}
\end{enumerate}
Similar to Theorem 4.1 in \cite{segers2017}, we can prove that the
two definitions are equivalent.

\begin{theorem} \label{thm:morvdef}
  Let $\xi>0$ and $ V \in \mathcal{R}_{-\xi}$.
 \begin{enumerate}
 \item A random element $(X_t, \mathbf{Y}_t)_{t\in \mathbb{N}}$ is partially regularly varying with the cone
  $\mathbb{C}^{(\infty)}$ and index $\xi$ if and only if for all $m\in \mathbb{N}$,
  the random vector $(X_t, \mathbf{Y}_t)_{t=1,\ldots,m}$ is partially
  regularly varying with the cone $\mathbb{C}^{(m)}$ and index
  $\xi >0$.
\item Moreover, if $(X_t, \mathbf{Y}_t)_{t\in \mathbb{N}}$ is strictly stationary, the condition $\mu^{(\infty)}(
\{(X_t, \mathbf{Y}_t)_{t\in \mathbb{N}}: (X_1, \mathbf{Y}_1) \notin \mathbb{C}\}) >0$ is equivalent to the
condition that $\mu^{(\infty)}$ is non-zero. 
\end{enumerate}
\end{theorem}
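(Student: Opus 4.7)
The proof imitates Theorem~4.1 of~\cite{segers2017}. The whole argument pivots on the coordinate projection $\pi_m : S^{(\infty)} \to S^{(m)}$, which is continuous in the product topology generated by $d_\infty$, commutes with the partial scalar multiplication $<\lambda,\cdot>$, and satisfies
\begin{align*}
d_m(\pi_m x,\, \mathbb{C}^{(m)}) \le d_\infty(x,\, \mathbb{C}^{(\infty)}) \le d_m(\pi_m x,\, \mathbb{C}^{(m)}) + 2^{-m},
\end{align*}
obtained by truncating the defining series of $d_\infty$ at the $m$-th term. Consequently, $\pi_m^{-1}(B)$ is bounded away from $\mathbb{C}^{(\infty)}$ whenever $B$ is bounded away from $\mathbb{C}^{(m)}$, and if $f\in \mathcal C_O(S^{(m)})$ then $f\circ \pi_m$ (extended by zero past its natural domain) lies in $\mathcal C_O(S^{(\infty)})$.

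For the forward direction of (1), I would set $\mu^{(m)} := \mu^{(\infty)}\circ \pi_m^{-1}$; the observations above show $\mu^{(m)} \in M_O(S^{(m)})$. Because the scaled pre-limit measures satisfy $\mu_u^{(\infty)}\circ\pi_m^{-1} = \mu_u^{(m)}$ tautologically, applying the Portmanteau characterization (condition~2 listed after Theorem~\ref{thm:morv}) to test functions of the form $f\circ \pi_m$ immediately yields the $M_O$-convergence demanded by \eqref{eq:prv01}. The reverse direction takes more work. First, consistency of the family $(\mu^{(m)})_m$ under the canonical projections $\pi_m^{m'}:S^{(m')}\to S^{(m)}$, $m'\ge m$, is inherited from the pre-limit level by passing to continuity sets in the limit $u \to \infty$. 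I would then construct $\mu^{(\infty)}$ by declaring $\mu^{(\infty)}(\pi_m^{-1}(B)) := \mu^{(m)}(B)$ on the cylinder semi-algebra and extending via Carath\'eodory separately on each shell $\{d_\infty(\cdot, \mathbb{C}^{(\infty)}) > r\}$; the two-sided inequality above guarantees that these shells are generated by cylinder sets, so the extension is unique and lies in $M_O(S^{(\infty)})$.

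The main obstacle is verifying the $M_O$-convergence \eqref{eq:prv02} itself. The plan is to fix $f\in \mathcal C_O(S^{(\infty)})$ vanishing on $\mathbb{C}^{(\infty)}_r$, approximate it uniformly by a cylinder function $\tilde f_m \circ \pi_m$ with $\tilde f_m \in \mathcal C_O(S^{(m)})$ vanishing on $\mathbb{C}^{(m)}_{r/2}$ (here the tail bound $d_\infty - d_m\circ\pi_m \le 2^{-m}$ combined with the uniform continuity of $f$ on the closed shell $\{d_\infty \ge r/2\}$ are essential), and then exploit the triangle inequality with the estimate
\begin{align*}
\Bigl|\int f\, d\mu_u^{(\infty)} - \int \tilde f_m\, d\mu_u^{(m)}\Bigr| \le \|f - \tilde f_m\circ\pi_m\|_\infty \cdot \mu_u^{(\infty)}(S^{(\infty)}\setminus \mathbb{C}^{(\infty)}_{r/2}).
\end{align*}
The total mass on the right is bounded uniformly in $u$ by $\mu_u^{(m_0)}(S^{(m_0)}\setminus \mathbb{C}^{(m_0)}_{r/4})$ for any $m_0$ with $2^{-m_0} < r/4$, and this stays finite in the limit by the hypothesis at level $m_0$. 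Taking $m$ large first (to kill the cylinder approximation errors) and then $u \to \infty$ (to invoke the finite-dimensional hypothesis) closes the argument. The delicate step is producing $\tilde f_m$ while keeping it honestly in $\mathcal C_O(S^{(m)})$.

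For part~(2), strict stationarity of $(X_t,\mathbf{Y}_t)_{t\in\mathbb Z}$ forces the one-sided shift $\sigma:(x_t)_{t\in\mathbb N}\mapsto (x_{t+1})_{t\in\mathbb N}$ to preserve every scaled pre-limit measure, so $\sigma\#\mu^{(\infty)} = \mu^{(\infty)}$. If $\mu^{(\infty)}\ne 0$, pick $A \in \mathcal{B}_O$ bounded away from $\mathbb{C}^{(\infty)}$ with $\mu^{(\infty)}(A)>0$; unfolding $d_\infty(x,\mathbb{C}^{(\infty)}) = \sum_i 2^{-i}\min(\|\mathbf{y}_i\|,1)$ and bounding the tail by $2^{-m}$ shows $A \subset \bigcup_{i=1}^{m}\{x:\|\mathbf{y}_i\|>r'\}$ for some finite $m$ and $r'>0$, so $\mu^{(\infty)}(\{x:\|\mathbf{y}_i\|>r'\})>0$ for some $i\le m$. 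Shift-invariance transfers this mass down to $i=1$, giving $\mu^{(\infty)}(\{x:(x_1,\mathbf{y}_1)\notin \mathbb{C}\}) \ge \mu^{(\infty)}(\{x:\|\mathbf{y}_1\|>r'\}) > 0$. The converse implication is immediate.
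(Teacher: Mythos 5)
Your forward direction and part (2) are sound and essentially coincide with the paper's argument (pushforward under the continuous projection for the former; stationarity plus the decomposition $S^{(\infty)}\setminus \mathbb{C}^{(\infty)}=\bigcup_s\{z_s\notin\mathbb{C}\}$ for the latter). The reverse direction, however, has two genuine gaps. First, the construction of $\mu^{(\infty)}$: you declare $\mu^{(\infty)}(\pi_m^{-1}(B)):=\mu^{(m)}(B)$ and invoke Carath\'eodory ``separately on each shell,'' asserting that the two-sided metric inequality makes the extension exist and be unique. Uniqueness is fine, but \emph{existence} of a countably additive extension from the cylinder semi-algebra is exactly the hard part: consistency only gives finite additivity, and the pre-measure property requires a tightness/compact-class argument. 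Worse, the shells $\{d_\infty(\cdot,\mathbb{C}^{(\infty)})>r\}$ are not product spaces, which is precisely the obstruction the paper flags before circumventing it: the paper restricts $\mu^{(m)}$ to the complement of the sets $A_k^{(m)}$, normalizes to obtain tight \emph{probability} measures $P_k^{(m)}$ on the full product space $S^{(m)}$, applies the Daniell--Kolmogorov extension theorem there, and only then reassembles $\mu^{(\infty)}$ as a countable sum of the resulting finite measures over nested shells. Your sketch skips the step that makes this work.

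Second, your verification of the $M_O$-convergence approximates an arbitrary $f\in\mathcal{C}_O(S^{(\infty)})$ uniformly by a cylinder function $\tilde f_m\circ\pi_m$, appealing to ``uniform continuity of $f$ on the closed shell $\{d_\infty\ge r/2\}$.'' That shell is not compact, and a bounded continuous function on a non-compact metric space need not be uniformly continuous, so the modulus-of-continuity bound $\|f-\tilde f_m\circ\pi_m\|_\infty\to 0$ does not follow as stated. The argument can be repaired by first reducing to uniformly continuous test functions (if the Portmanteau theorem for $M_O$-convergence is invoked in that form), but as written the step fails. The paper avoids the issue entirely by working with sets rather than functions: it establishes a $\pi$-system criterion (Theorem~\ref{thm:mocriterion}) and checks convergence on the collection $\mathcal{A}$ of preimages $Q_m^{-1}(B)$ of $\mu^{(m)}$-smooth sets bounded away from $\mathbb{C}^{(m)}$, where the finite-dimensional hypothesis applies directly. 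You should either adopt that set-based route or make the reduction to uniformly continuous test functions explicit.
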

We will use \eqref{eq:prv01} for all $m\ge
1$ as the definition for partial regular variation in the rest of the
paper. 
\begin{remark}
Partial regular variation is defined for both a random vector and a time series. 
Both one-component regular variation in \cite{hitz2016} and multi-component regular
variation in \cite{segers2020} are defined for a random vector, not for a time series. One-component regular variation and
multi-component regular variation alter the classical definition of
multivariate regular variation and include multivariate regular variation as an example. Partial regular variation rules out multivariate regular variation. Particularly the marginal distribution of $X_t$ is not necessarily regularly varying when $(X_t, \mathbf{Y}_t)$ is partially regularly varying.  
\end{remark}

\subsection{Existence of TMES}

As shown in \eqref{eq:mes00}, TMES is defined as a limit of
conditional expectations whose existence requires extra
conditions besides partial regular variation.

{\bf Condition (E)}
\begin{enumerate}
\item A strictly stationary time series of random
vectors $(X_t, \mathbf{Y}_t)_{t\in \mathbb{N}}$ taking values from $\mathbb{R}^{1+l}$ for
some $l\ge 1$ is
partially regularly varying with the cone $\mathbb{C} = \mathbb{R} \times
\mathbf{0}^{(l)}$ and index $\xi>0.$ 
\item Let $A \in \mathcal{B}(\mathbb{R}^l)$ satisfy that $\mathbb{R} \times A$ is bounded
 away from $\mathbb{C}$, $\mu^{(1)}$-smooth and $0< \mu^{(1)}(\mathbb{R} \times A) < +\infty$. 
\item For an integer $h\ge 0,$ an integer $r\ge 1$ and a real number
  $\varepsilon>0$, we have 
  \begin{align}  \label{eq:ui}
\sup_{u>0} \E [|X_{h+1}|^{r+\varepsilon}\mid u^{-1}\mathbf{Y}_1 \in A] < \infty\,.
  \end{align}
\end{enumerate}

Suppose that condition {\bf (E)} holds. Partial regular variation
yields that for $B\in \mathcal{B}(\mathbb{R})$ and $h\ge 0$,
\begin{align*}
  & \p (X_{h+1} \in B \mid u^{-1} (X_1, \mathbf{Y}_1) \in \mathbb{R}\times A) \\
  & = \frac{\p((X_{h+1}, \mathbf{Y}_{h+1}) \in B \times \mathbb{R}^l,u^{-1} (X_1,
  \mathbf{Y}_1) \in \mathbb{R}\times A )}{ \p (u^{-1} (X_1, \mathbf{Y}_1) \in \mathbb{R}\times A)}\\  
  & = \frac{(V(u))^{-1} \p((X_{h+1}, \mathbf{Y}_{h+1}) \in B \times \mathbb{R}^l,u^{-1} (X_1,
  \mathbf{Y}_1) \in \mathbb{R}\times A )}{ (V(u))^{-1} \p (u^{-1} (X_1,
    \mathbf{Y}_1) \in \mathbb{R}\times A)}\\
  & \to \frac{\mu^{(h+1)} \big( (B\times \mathbb{R}^l) \times \mathbb{R}^{1+l} \times \cdots \times (\mathbb{R}\times A)
    \big)}{ \mu^{(1)}(\mathbb{R} \times A)}\,, \quad u\to +\infty\,.
\end{align*}
According to the arguments in the proof of Theorem~\ref{thm:morvdef} (see \cite{liu2025b}), $(\mu^{(l)})_{l \ge 1}$ form a consistent family of measures and thus,
the above limit is non-negative and less than or equal to $1$
for all $B\in \mathcal{B}(\mathbb{R})$. Moreover, by taking $B=\mathbb{R}$, 
\[\frac{\mu^{(h+1)} \big( (\mathbb{R} \times \mathbb{R}^l) \times \mathbb{R}^{1+l} \times \cdots \times (\mathbb{R}\times A)
    \big)}{ \mu^{(1)}(\mathbb{R} \times A)} = \frac{\mu^{(1)}(\mathbb{R} \times A)}{ \mu^{(1)}(\mathbb{R} \times
    A)} =1\,.\]
There exists a random variable $\widetilde{X}_h$ whose distribution is
determined by 
\begin{align*}
\p(\widetilde{X}_h \in \cdot) =  \frac{\mu^{(h+1)} \big( (\cdot\times \mathbb{R}^l) \times \mathbb{R}^{1+l} \times \cdots \times (\mathbb{R}\times A)
    \big)}{ \mu^{(1)}(\mathbb{R}\times A)}\,,
\end{align*}
such that
\begin{align}\label{eq:tailprocess}
  X_{h+1}\mid u^{-1} \mathbf{Y}_1 \in A \overset{w}{\to}
  \widetilde{X}_h\,,
\end{align}
where $\overset{w}{\to}$ stands for weak
convergence. To ensure the convergence of moments, we need
a uniform integrability condition \eqref{eq:ui}.

\begin{theorem} \label{thm:existencedeltah}
Given that condition {\bf (E)} holds, the limit 
\begin{align*}
\delta(h) = \lim_{u\to \infty} \E\big[ X_{h+1} \mid u^{-1} \mathbf{Y}_1 \in A
  \big] \,,
\end{align*}
exists for $h\ge 0$.
\end{theorem}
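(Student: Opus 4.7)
The approach is to combine the weak convergence already extracted in the discussion preceding the theorem with the uniform integrability assumption of Condition~\textbf{(E)}.3. Concretely, the text right above the statement shows that partial regular variation forces
\begin{align*}
X_{h+1} \mid u^{-1}\mathbf{Y}_1 \in A \;\overset{w}{\to}\; \widetilde{X}_h
\end{align*}
as $u\to+\infty$, where $\widetilde{X}_h$ is the random variable defined through the ratio of $\mu^{(h+1)}$-measures. What remains is to upgrade this weak convergence to convergence of first moments, which is exactly the classical content of ``weak convergence plus uniform integrability implies convergence in $L^1$''.

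The first step is to verify uniform integrability of the family of conditional laws $\{\mathcal{L}(X_{h+1}\mid u^{-1}\mathbf{Y}_1\in A): u>0\}$. By Condition~\textbf{(E)}.3 with $r\ge 1$ we have $\sup_{u>0}\E[|X_{h+1}|^{1+\varepsilon}\mid u^{-1}\mathbf{Y}_1\in A]<\infty$, and a standard de la Vall\'ee-Poussin / truncation argument then yields uniform integrability: for any $M>0$,
\begin{align*}
\E\bigl[|X_{h+1}|\,\1(|X_{h+1}|>M) \,\bigm|\, u^{-1}\mathbf{Y}_1\in A\bigr] \;\le\; M^{-\varepsilon}\,\E\bigl[|X_{h+1}|^{1+\varepsilon} \,\bigm|\, u^{-1}\mathbf{Y}_1\in A\bigr]\,,
\end{align*}
which tends to $0$ uniformly in $u$ as $M\to\infty$. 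The same bound, together with Fatou applied along the weak-convergence subsequence and the Portmanteau theorem on the closed set $\{|x|\ge M\}$, also shows that $\E[|\widetilde{X}_h|^{1+\varepsilon}]<\infty$, so the candidate limit $\E[\widetilde{X}_h]$ is well-defined and finite.

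The second step assembles these ingredients. By the Skorokhod representation theorem on $\mathbb{R}$, pass to a probability space on which a sequence of representatives of $X_{h+1}\mid u_n^{-1}\mathbf{Y}_1\in A$ converges almost surely to $\widetilde{X}_h$ along any given sequence $u_n\to\infty$; uniform integrability then gives $L^1$-convergence of these representatives, hence
\begin{align*}
\E[X_{h+1}\mid u_n^{-1}\mathbf{Y}_1\in A]\;\longrightarrow\;\E[\widetilde{X}_h]\,.
\end{align*}
Since the limit is independent of the chosen sequence $u_n$, the full limit as $u\to+\infty$ exists and equals $\E[\widetilde{X}_h]$, which we define to be $\delta(h)$. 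The only mildly delicate point is checking that $\mathbb{R}\times A$ being bounded away from $\mathbb{C}$, $\mu^{(1)}$-smooth with $0<\mu^{(1)}(\mathbb{R}\times A)<\infty$ legitimately lets us use the Portmanteau statement in Theorem~\ref{thm:morv} to divide by $\mu^{(1)}(\mathbb{R}\times A)$ and obtain genuine weak convergence of probability measures on $\mathbb{R}$; this is essentially what the block of equations before the theorem already verifies, so no new obstacle arises beyond routine bookkeeping.
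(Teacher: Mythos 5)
Your proposal is correct and follows essentially the same route as the paper: both start from the weak convergence $X_{h+1}\mid u^{-1}\mathbf{Y}_1\in A \overset{w}{\to}\widetilde{X}_h$ established before the theorem, pass to an almost surely convergent Skorokhod representation, and invoke uniform integrability supplied by \eqref{eq:ui} (the paper cites Theorem 25.12 in \cite{billingsley1995}, you give the equivalent truncation bound explicitly) to upgrade to convergence of expectations. Your extra care with subsequences $u_n\to\infty$ and with the finiteness of $\E[|\widetilde{X}_h|]$ is just more explicit bookkeeping of the same argument.
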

\begin{proof}
Recall that under condition {\bf (E)}, the limit
\begin{align*}
X_{h+1} \mid u^{-1} \mathbf{Y}_1 \in A \overset{w}{\to } \widetilde{X}_h\,,
  \quad u\to +\infty\,.
\end{align*}
According to Skorohod's theorem (see e.g. Theorem 25.6 in \cite{billingsley1995}),
there exist random variables $Z_u$ and $Z$ on a common probability
space $(\Omega, \mathcal{F}, Q)$ such that $Z_u$ has the same distribution as $X_{h+1}
\mid u^{-1} \mathbf{Y}_1 \in A$, $Z$ has the same distribution as $\widetilde{X}_h$
and $Z_u \to Z$ as $u \to \infty$ holds with probability $1$. Denote
$\E_{Q}$ as the expectation under the common probability measure $Q$.

Due to Theorem 25.12 and the corollary on the same page in
\cite{billingsley1995}, the inequality \eqref{eq:ui} ensures that 
\begin{align*}
\E_Q[|Z|^r] < \infty\,, \quad \E_Q [|Z_u|^r] \to \E_Q [|Z|^r]\,, \quad u\to \infty\,, r\ge 1\,.
\end{align*}
This completes the proof. 
\end{proof}

\begin{remark}
Condition {\bf (E)} is sufficient for the existence of $\delta(h)$, but it
is not necessary. As special examples of TMES, the existence of the extremogram \eqref{eq:extremogram} and the causal tail coefficient \eqref{eq:causaltail} is guaranteed by regular variation. For the example in Section~\ref{sec:examplemax}, regular variation implies \eqref{eq:ui}. We find the formula for the TMES of the example in Section~\ref{sec:examplecopula} and thus the sufficient condition for the existence of the TMES becomes trivial. 
\end{remark}
\begin{remark}
The quantity to measure {\em tail adversarial stability} in a regularly varying time series is proposed in \cite{zhang2021, zhang2022} and can be understood as an upper bound of the TMES with properly chosen $(X_t, \mathbf{Y}_t)$. Consequently, condition {\bf (E)} is not sufficient for its existence and it will not be in the scope of our studies.  
\end{remark}

\subsection{Examples} \label{sec:example}

We are interested in two classes of models. The first class of
models consists of regularly varying time series of random vectors that are
classical objects studied in extreme value theory; see
~e.g. \cite{davis2013measures}. We will verify condition {\bf (E)} for
this class of models. The models in the second class are
constructed through a regularly varying time series and a copula. We will provide an alternative to
\eqref{eq:ui} for this class of models.

\subsubsection{Example: pieces of a regularly varying random
  field} \label{sec:examplemax}
If all the finite-dimensional distributions of a real-valued random field
$(Z_{i_1,i_2})_{(i_1, i_2) \in \mathbb{Z}^2}$ are multivariate regularly varying
with index $\xi>3$, the random field $(Z_{i_1, i_2})$ is regularly
varying with index $\xi>3$; see \cite{Wu2020} for more details. We
assume that all the $Z_{i_1, i_2}$'s take non-negative values. Let
$X_t= {Z}_{0,t}$ and $Y_t = {Z}_{1,t}$ for $t\in 
\mathbb{Z}$. The random vector $(X_t,
Y_t, \ldots, X_{t-h}, Y_{t-h})$ is multivariately regularly varying with
the tail index $\xi>3$. The marginal distributions of $X_t$ and
$Y_t$ share the same regular varying distribution. Thus it is easy to verify that the first two conditions in Condition {\bf
  (E)} are satisfied. 

To prove 
that \eqref{eq:ui} holds for $(X_t, Y_t)$, it is sufficient to show that for $0< r
< \xi-1 $ with some $\varepsilon\in (0,0.5)$,
\begin{align} \label{eq:mmamoment}
\sup_{u>0}\E[X_t^{r+\varepsilon} \mid Y_{t-h}>u] < +\infty\,.
\end{align}
For all $u>0$, an application of integration by parts yields
that
\begin{align*}
& \E [X_t^{r+ \varepsilon} \mid Y_{t-h} > u] = - \int_0^{\infty} x^{r+ \varepsilon} \dif \p (X_t > x\mid
                                Y_{t-h}>u)\\
  & \quad = (r+ \varepsilon)\int_0^{\infty} x^{r+\varepsilon -1} \p(X_t\ge x\mid Y_{t-h}>u)\dif x + \big[ x^{r+\varepsilon} \p(X_t\ge x \mid
    Y_{t-h}>u)\big]\Big|_0^{\infty}\\
& \quad = (r+ \varepsilon) \int_0^{\infty} x^{r + \varepsilon -1} \p(X_t> x\mid Y_{t-h}>u)\dif x\,.  
\end{align*}
According to Potter's bound (see Theorem 1.5.6 in \cite{bingham1987}), for $C_0 >1$ and
$\varepsilon >0$,
there exists $x_0>0$ such that for $x>1$ and $u>x_0$, we have
\begin{align*}
\p (X_t>xu \mid Y_{t-h} >u) \le \frac{\p(X_t>xu)}{ \p(Y_{t-h} >u)} 
  =\frac{\p(X_t>xu)}{\p(X_t>u)} \frac{\p(X_t>u)}{\p(Y_{t-h}>u)}\le C_0 x^{-\xi + \varepsilon}\,.
\end{align*}
Since $X_t$ is regularly varying, we have
$\E[X_t^{\xi-1}]<\infty$, which implies that $\sup_{u\in (0,x_0)}
\E[X_t^{r+\varepsilon}\mid Y_{t-h} >u]<\infty$. 
\begin{align*}
  & \sup_{u>x_0} \E [X_t^{r+\varepsilon} \mid Y_{t-h} > u]\\
  & \le \sup_{u>x_0} u^{-r -\varepsilon -1} \int_0^1 x^{r+\varepsilon} \dif x
    +u^{-r -\varepsilon } \int_1^{\infty} x^{r+ \varepsilon-1} \p(X_t> ux \mid Y_{t-h}>u) \dif x \\
  & \le
  x_0^{-r-\varepsilon-1}+ C_0 x_0^{-r-\varepsilon} \int_1^{\infty} x^{r-\xi+ 2\varepsilon-1}\dif x  \le x_0^{-r-\varepsilon-1}+ C_0 x_0^{-r-\varepsilon} \int_1^{\infty} x^{-2+ 2\varepsilon}\dif x\\
  & <\infty\,.  
\end{align*}

\subsubsection{Example: copula-based time series} \label{sec:examplecopula}

We consider a bivariate time series $(X_t, Y_t)_{t \in \mathbb{N}}$ with
continuous marginal distributions. The strictly stationary time series $(Y_t)$ is
regularly varying with tail index $\xi>0$. The vector $(F_{X_t}(X_t),
F_{Y_t}(Y_t))$ forms a copula $C$ where $F_{X_t}$ and $F_{Y_t}$ are the
distribution functions of $X$ and $Y$ respectively. The copula $C$ is
independent of $(Y_t)$. Thus,
$(X_t, Y_t)$ is partially regularly varying with index $\xi> 0$ and the
cone $\mathbb{C}=\mathbb{R} \times \{0\}$. The extremogram of $(Y_t)$ exists and is
denoted by $\rho_Y(h) = \rho(h) = \lim_{u \to \infty} \p(Y_{1+h} >u \mid Y_1
> u)$ for $h\ge 0$. Trivially $\rho(0)=1$. We are interested in the TMES of $(X_t,
Y_t)$ given by  
\[ \delta (h)=\lim_{u \rightarrow \infty}\E [X_{t} \mid Y_{t-h} > u] \,. \]

\begin{proposition} \label{prop:latent}
 Let $(X_t, Y_t)_{t\in \mathbb{N}}$ be the bivariate time series defined
 above. Suppose that $\delta(0)$ exists and $\E[|X|]<+\infty$. Then, 
\begin{align} \label{eq:latent}
\delta(h) = (1-\rho(h))\E [X_1]+\rho(h)\delta (0)\,.
\end{align} 
\end{proposition}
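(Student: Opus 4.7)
The plan is to decompose $\E[X_t\mid Y_{t-h}>u]$ by conditioning additionally on whether $Y_t$ exceeds the threshold $u$, use the copula assumption to eliminate $X_t$ in favour of a functional of $Y_t$ alone, and match the resulting two pieces with $\rho(h)\delta(0)$ and $(1-\rho(h))\E[X_1]$.

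Because the copula $C$ between $X_t$ and $Y_t$ is independent of the rest of the $Y$-process, $X_t$ is conditionally independent of $(Y_s)_{s\neq t}$ given $Y_t$. Writing $g(y):=\E[X_t\mid Y_t=y]$, the tower property gives
\[
\E[X_t\mid Y_{t-h}>u]=\E[g(Y_t)\mid Y_{t-h}>u],
\]
which I would split as
\begin{align*}
\E[g(Y_t)\mid Y_{t-h}>u]
&=\E[g(Y_t)\mid Y_t>u,\,Y_{t-h}>u]\,\p(Y_t>u\mid Y_{t-h}>u)\\
&\quad+\E[g(Y_t)\mid Y_t\le u,\,Y_{t-h}>u]\,\p(Y_t\le u\mid Y_{t-h}>u).
\end{align*}
By the definition of the extremogram the two marginal probabilities converge to $\rho(h)$ and $1-\rho(h)$, so what remains is to identify the two conditional expectations.

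On the event $\{Y_t>u\}$ one has $Y_t\to\infty$ as $u\to\infty$, so the first conditional expectation reduces, via the same Skorokhod/uniform-integrability argument used in the proof of Theorem~\ref{thm:existencedeltah} (available because $\E|X_1|<\infty$), to the limit defining $\delta(0)$. For the second conditional expectation I would argue that on $\{Y_t\le u,\,Y_{t-h}>u\}$ the conditional law of $Y_t$ converges weakly to the unconditional marginal of $Y_1$ as $u\to\infty$: whenever the current value fails to be extreme despite a lag-$h$ past exceedance, $Y_t$ is drawn from the bulk, and the copula-based construction prevents $X_t$ from picking up any residual information about $Y_{t-h}$ beyond what is carried by $Y_t$ itself. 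This yields $\E[g(Y_t)\mid Y_t\le u,\,Y_{t-h}>u]\to\E[g(Y_1)]=\E[X_1]$, and combining the two pieces produces \eqref{eq:latent}.

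The delicate step will be the bulk limit in the second conditional expectation: ruling out any residual bias in the non-extreme part of $Y_t$ caused by conditioning on a past extreme. This is precisely where the hypothesis that $C$ is independent of $(Y_t)$ pays off, since it guarantees that $X_t$ sees the past of the process only through $g(Y_t)$, reducing the problem to a property of the heavy-tailed sequence $(Y_t)$ that can be controlled through its regular variation together with $\E|X_1|<\infty$.
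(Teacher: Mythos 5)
Your decomposition --- splitting the event $\{Y_{t-h}>u\}$ according to whether $Y_t$ itself exceeds $u$, and exploiting the conditional independence of $X_t$ and $Y_{t-h}$ given $Y_t$ --- is exactly the one the paper uses. The gap lies in how you evaluate the two conditional expectations. For the extreme piece you assert $\E[g(Y_t)\mid Y_t>u,\,Y_{t-h}>u]\to\delta(0)$ ``via the same Skorokhod/uniform-integrability argument''. But $\delta(0)$ is by definition the limit of $\E[X_t\mid Y_t>u]$, conditioned on the \emph{single} exceedance; under the joint conditioning $\{Y_t>u,\,Y_{t-h}>u\}$ the limiting law of $u^{-1}Y_t$ is governed by the bivariate tail measure of $(Y_{t-h},Y_t)$ and is in general not the Pareto$(\xi)$ limit obtained from $\{Y_t>u\}$ alone. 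Hence $\E[g(Y_t)\mid Y_t>u,\,Y_{t-h}>u]$ need not converge to $\delta(0)$ unless $g(y)=\E[X_t\mid Y_t=y]$ itself stabilises as $y\to\infty$ --- a property not implied by the mere existence of $\delta(0)$ and which you never establish. The paper sidesteps this entirely: it factorises $\p(X_t\le x,\,Y_{t-h}>u\mid Y_t>u)=\p(X_t\le x\mid Y_t>u)\,\p(Y_{t-h}>u\mid Y_t>u)$ at the level of events, so that only $\E[X_t\mid Y_t>u]$ --- the quantity whose limit \emph{is} $\delta(0)$ by hypothesis --- ever appears, multiplied by a probability converging to $\rho(h)$. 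Also note that $\E|X_1|<\infty$ alone does not supply the uniform integrability you invoke for expectations conditioned on events of vanishing probability.

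The bulk piece has the same defect in milder form. You flag the claim that the law of $Y_t$ given $\{Y_t\le u,\,Y_{t-h}>u\}$ converges to the unconditional marginal as ``the delicate step'', but you never close it; asymptotic independence of the bulk of $Y_t$ from a past exceedance is not a consequence of regular variation of $(Y_t)$ and would itself require proof. The paper again avoids the issue through the event-level factorisation: it identifies $(V(u))^{-1}\p(Y_{t-h}>u,\,Y_t\le u)$ with $1-\rho(h)$ by a one-line complementation using the extremogram, and sends $\E[X_t\1(Y_t\le u)]$ to $\E[X_1]$ by dominated convergence from $\E|X_1|<\infty$. In short: you have the right decomposition, but the two limit identifications you rely on are asserted rather than proved, and the first is false in general without an additional hypothesis on $g$.
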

\begin{proof}[Proof of Proposition~\ref{prop:latent}]
Let $V(u)= \p(Y_t >u)$. 
Since $X_t$ is independent of $Y_{t-h}$ for $h>0$ conditionally on $Y_t$,
\begin{align*}
\delta (h) =&\lim_{u \rightarrow \infty}\int_{-\infty}^{\infty} x \dif \p (X_{t} \leq x \mid Y_{t-h} > u)\\
=&\lim_{u \rightarrow \infty}\frac{1}{V(u)} \int_{-\infty}^{\infty} x \dif \p
   (X_{t} \leq x, Y_{t-h} > u)\\
=&\lim_{u \rightarrow \infty}\frac{1}{V(u)} \int_{-\infty}^{\infty} x \dif \big( \p (X_{t} \leq
   x\mid Y_t \le u)\p (Y_{t-h} >u \mid Y_{t}\le u) \p (Y_{t}\le u) \big)\\
& + \lim_{u \rightarrow \infty}  \frac{1}{V(u)}\int_{-\infty}^{\infty} x \dif\big(\p (X_{t} \leq x\mid
  Y_{t} > u)\p (Y_{t-h} > u \mid Y_{t} > u) \p (Y_{t}> u) \big)\\
=&Q_1 + Q_2\,.
\end{align*}
Trivially, $\p (Y_{t}\le u) \to 1$ as $u\to \infty$. Moreover, 
\begin{align*}
  \lim_{u \rightarrow \infty} (V(u))^{-1} \p (Y_{t-h} > u \mid Y_{t} \le u)
  & = \lim_{u \rightarrow \infty}\frac{ (V(u))^{-1} \p (Y_{t-h} > u )-\p(Y_{t-h} >
    u \mid Y_t >u )\big)}{\p(Y_t \le u)}\\
 & = 1 -\rho(h)\,, 
\end{align*}
and 
\begin{align*}
\lim_{u \to \infty} \int_{-\infty}^{\infty} x \, d\, \p(X_t \le x, Y_t \le u) =\E[X]\,.
\end{align*}
Thus, we have $Q_1 = (1- \rho(h))\E[X]$. 
\begin{align*} 
Q_2 =&\lim_{u \rightarrow \infty} (V(u))^{-1} \p(Y_t > u) \p(Y_{t-h} > u \mid Y_{t}>
       u) \int_{-\infty}^{\infty} x \dif \p (X_{t} \leq x \mid Y_{t} >u)\\
=&\rho(h)\delta(0)\,.
\end{align*}
This completes the proof. 
\end{proof}

As $\rho(h)$ decreases,
the difference between $\delta(h)$ and $\E[X_1]$ becomes smaller, that is,
the influence of the systemic risk $\{Y_{t-h}>u\}$ on $X_t$
decays. We further remark that under conditions of Proposition \ref{prop:latent}, it holds that 
$$\delta_0(h)=\rho(h)\delta_0(0),$$ indicating that the $\delta_0(h)$ is proportional to $\rho(h).$

\section{Empirical TMES and its asymptotic properties} \label{sec:empiricaltmes}
The estimation of TMES is naturally applicable across finance, insurance, risk management, and meteorology, which accounts for time lags in assessing systemic risk. In this section, we propose empirical TMES and study its asymptotic properties. Concretely, we fix an integer $h \ge 0$ and a set $A\in \mathcal{B}(\mathbb{R}^l)$. We are interested in the estimation of
$\delta(h)$. According to the definition of partial regular variation, we can
replace $u$ in \eqref{eq:mes00} with an increasing sequence $(a_n)$
satisfying $n\p(a_n^{-1} \mathbf{Y}_t \in A) \to 1$ as $n\to \infty$. Let
$(m_n)$ be a sequence of positive integers such that $m_n \to +\infty$ and
$m_n/n \to 0$ as $n\to \infty$. Suppose that we obtain the observations $(X_t,
\mathbf{Y}_t)_{t=-h+1,\ldots, 0, 1, \ldots, n}$. The {\em empirical TMES} at
lag $h\ge 0$ is given by
\begin{align} \label{eq:tmes}
\widehat{\delta}(h) = \frac{m_n}{n} \sum_{t=1}^n X_t \1 (a_{m_n}^{-1}
  \mathbf{Y}_{t-h} \in A)= \frac{m_n}{n} \sum_{t=1}^n X_t I_{t-h}\,, 
\end{align}
where $I_t = \1 (a_{m_n}^{-1} \mathbf{Y}_t \in A)$. For ease of
notation, we suppress $\widehat{\delta}(h)$'s and $I_t$'s dependence on
$n$. Trivially, we have 
\begin{align*}
\E [\widehat{\delta}(h)] = (m_n \p (a_{m_n}^{-1}
\mathbf{Y}_{t-h} \in A)) \E[ X_{h+1} \mid a_{m_n}^{-1} \mathbf{Y}_1 \in A]
\to \delta(h)\,, \quad n\to \infty\,.
\end{align*}  

For the proofs of the central limit theorem for $\widehat{\delta}(h)$, we
will use the $\alpha$-mixing conditions, which are also used in proving the
asymptotic properties of the extremogram; see e.g. \cite{davis2009,
  davis2013measures}. Let $\mathcal{F}_l^k = \sigma\big( (X_t, \mathbf{Y}_t)_{l\le t
  \le k} \big)$ be a $\sigma$-field generated by $(X_t, \mathbf{Y}_t)_{l\le
  t \le k}$. The {\em $\alpha$-mixing rate} $\alpha(\cdot)$ is given by 
  \begin{align*}
  \alpha(k) = \sup_{B \in \mathcal{F}_{-\infty}^0\,, D \in \mathcal{F}_k^{+\infty}} \big| \p(B \cap D) -
    \p(B) \p(D) \big|\,,\quad k\ge 0\,.
  \end{align*}
{\bf Condition (M)} 

The $\mathbb{R}^{1+l}$-valued strictly stationary time series $(X_t,
\mathbf{Y}_t)$ with generic element $(X, \mathbf{Y})$ satisfies condition {\bf (E)} with $r\ge 4$
  in \eqref{eq:ui}. Let $(m_n)$ and $(r_n)$ be two sequences of
positive integers such 
that $r_n \to \infty$, $m_n \to \infty$, $r_n^2/m_n \to 0$ and $m_n^4r_n^3/n \to 0$ as $n\to
\infty$.
\begin{enumerate}
\item The moments $\E[X] =0$, $\E[X^4] < \infty$.
\item For all $\varepsilon>0$, 
\begin{align}
\label{eq:anticlustering}
\lim_{l\to \infty} \limsup_{n\to \infty} m_n \sum_{s = l+1}^{r_n}\p\big( \| \mathbf{Y}_1 \| >
  \varepsilon a_{m_n}\,, \| \mathbf{Y}_{1+s} \| > \varepsilon a_{m_n} \big) =0\,,
\end{align}
where $\|\cdot \|$ stands for the Euclidean norm in $\mathbb{R}^l$. 
\item The mixing rate $\alpha(h)$ of $(X_t, \mathbf{Y}_t)_{t\in \mathbb{Z}}$ satisfies the conditions 
  \begin{align}
  \label{eq:mixing}
\lim_{n\to \infty} m_n^7\sum_{h=r_n +1}^{\infty}\alpha (h) =0\,, \quad \lim_{n\to \infty} m_n
    \alpha (r_n) =0\,.  
  \end{align}
\end{enumerate}
There exists $\varepsilon>0$ such that $ A \subset \{\mathbf{y}\in \mathbb{R}^{l}: \|\mathbf{y} \| > \varepsilon\}
$. This implies that
\begin{align*}
 m_n \p(\| \mathbf{Y}_1\| > \varepsilon a_{m_n}, \| \mathbf{Y}_{1+s} \| > \varepsilon
  a_{m_n}) \ge m_n \p\big( a_{m_n}^{-1} (\mathbf{Y}_1,
  \mathbf{Y}_{1+s}) \in A \times A \big)\,,
\end{align*}
and consequently,
\begin{align*}
\lim_{l \to \infty } \limsup_{n\to \infty} m_n \sum_{s=l+1}^{r_n} \p
  (a_{m_n}^{-1}(\mathbf{Y}_1, \mathbf{Y}_{1+s}) \in A\times A) =0\,.
\end{align*}
The condition \eqref{eq:mixing} on the mixing rate is satisfied if
$\alpha(h)$ decays exponentially fast to zero as $h\to \infty$. Examples with
this property include regularly varying linear processes, GARCH processes,
max-moving averages, Brown-Resnick processes; see
e.g. \cite{davis2013measures, cho2016,damek2023whittle}. Given that $r\ge 4$ in \eqref{eq:ui}, we have the limit
$\lim_{n\to \infty} \E[X_{h+1}^4 \mid a_{m_n}^{-1} \mathbf{Y}_1 \in A] < \infty$
exists and
\begin{align*}
\E[ |X_h X_{h+l}| \mid a_{m_n}^{-1} ( \mathbf{Y}_{0}, \mathbf{Y}_l) \in A \times A] &\le \frac{2\p(a_{m_n}^{-1}\mathbf{Y}_0 \in A)}{\p(a_{m_n}^{-1} ( \mathbf{Y}_{0}, \mathbf{Y}_l) \in
  A \times A)}\E[ X_h^2  \mid a_{m_n}^{-1}  \mathbf{Y}_{0} \in A] \\
& < \infty\,.
\end{align*}
An application of the dominated convergence theorem ensures that 
\begin{align*}
\tau_h(s) = \lim_{n\to \infty} \E[X_{h+1} X_{h+1+s} \mid a_{m_n}^{-1} ( \mathbf{Y}_{1},
\mathbf{Y}_{s+1}) \in A \times A]\,, \quad s\ge 0\,, 
\end{align*}
exist for all $h,s\ge 0$.

\subsection{Asymptotic properties of empirical TMES}
In this section, we will calculate the asymptotic mean and variance of empirical TMES and will show a pre-asymptotic central limit theorem for
$\widehat{\delta}(h)$ under condition {\bf (M)}. Recall that the extremogram $\rho(s)$ of the sequence $(\mathbf{Y}_t)$ for $s\ge 0$ is given by 
\begin{align*}
    \rho(s) = \lim_{n\to \infty} \p(a_n^{-1} \mathbf{Y}_s \in A\mid a_n^{-1} \mathbf{Y}_0 \in A)\,,
\end{align*}
which exists according to the definition of partial regular variation.

\begin{proposition} \label{prop:consistency}
Assume that $(X_t, \mathbf{Y}_t)_{t\in \mathbb{N}}$ satisfies condition {\bf (M)}.
If moreover
\begin{align}\label{eq:infinitesum}
  \sum_{s\ge 0} \big(|\tau_h(s)| + \rho(s) \big)<+\infty \,,
\end{align}
then as $n\to \infty$,
\begin{align}
\label{eq:consistency1}
\E\big[ \widehat{\delta}(h) \big] & \to  \delta(h)\,,\\  \label{eq:consistency2}
\frac{n}{m_n}\var \big( \widehat{\delta}(h)\big)
 & \to   \Big( \tau_h(0) + 2 \sum _{s=1}^{\infty} \rho(s) \tau_h(s) \Big) =: \sigma_h^2\,.
\end{align}
Moreover, \eqref{eq:consistency1} and \eqref{eq:consistency2} imply that  
\begin{align}\label{eq:consistency3}
\widehat{\delta}(h) \overset{\p}{\to } \delta(h)\,.
\end{align}  
where $\overset{\p}{\to }$ stands for convergence in probability.
\end{proposition}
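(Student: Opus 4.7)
The plan is to establish \eqref{eq:consistency1} for the mean and \eqref{eq:consistency2} for the variance, and then deduce the in-probability convergence \eqref{eq:consistency3} by a Chebyshev argument: since $\var(\widehat{\delta}(h))=O(m_n/n)\to 0$ and $\E[\widehat{\delta}(h)]\to\delta(h)$, the bias tends to zero and $\widehat{\delta}(h)-\E[\widehat{\delta}(h)]\overset{\p}{\to}0$, yielding \eqref{eq:consistency3}.

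The mean calculation is immediate from strict stationarity: $\E[\widehat{\delta}(h)]=m_n\E[X_1 I_{1-h}]$, which after a shift of index factors as $\bigl(m_n\,\p(a_{m_n}^{-1}\mathbf{Y}_1\in A)\bigr)\cdot\E[X_{h+1}\mid a_{m_n}^{-1}\mathbf{Y}_1\in A]$. The first factor tends to $1$ by the defining property of $(a_n)$ and the second to $\delta(h)$ by Theorem~\ref{thm:existencedeltah}, since Condition~(M) incorporates Condition~(E) with $r\ge 4$. For the variance, set $Z_t=X_t I_{t-h}$; stationarity gives
\begin{equation*}
\frac{n}{m_n}\var\bigl(\widehat{\delta}(h)\bigr)=m_n\var(Z_1)+2m_n\sum_{k=1}^{n-1}\bigl(1-\tfrac{k}{n}\bigr)\cov(Z_1,Z_{1+k}).
\end{equation*}
The diagonal reduces via $I^2=I$ to $m_n\E[X_1^2 I_{1-h}]-m_n(\E[X_1 I_{1-h}])^2$; the first term tends to $\tau_h(0)$ by the same factorization as for the mean, and the second is $O(1/m_n)$. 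For each fixed $k\ge 1$, a shift of index, partial regular variation, and the definition of $\tau_h(k)$ yield $m_n\cov(Z_1,Z_{1+k})\to\rho(k)\tau_h(k)$ pointwise, since the product-of-means contribution is again $O(1/m_n)$.

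The delicate step is to interchange sum and limit. I would split $\sum_{k=1}^{n-1}$ at thresholds $L<r_n$. The head $\sum_{k=1}^L$ converges termwise, and by \eqref{eq:infinitesum} combined with $\rho(k)\le 1$ it approaches $\sum_{k=1}^{\infty}\rho(k)\tau_h(k)$ as $L\to\infty$. On the intermediate block $L<k\le r_n$, Cauchy--Schwarz and the uniform integrability of Condition~(E) bound $m_n|\cov(Z_1,Z_{1+k})|$ by a constant multiple of $m_n\,\p(a_{m_n}^{-1}(\mathbf{Y}_1,\mathbf{Y}_{1+k})\in A\times A)$; since $A\subset\{\|\mathbf{y}\|>\varepsilon\}$, the anticlustering condition \eqref{eq:anticlustering} forces this contribution to zero as $L\to\infty$ uniformly in $n$. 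On the tail $k>r_n$, a Davydov-type covariance inequality with exponent $p\in[4,r+\varepsilon]$ and the moment bound $\|Z_1\|_p=O(m_n^{-1/p})$ give $m_n|\cov(Z_1,Z_{1+k})|\le C\,(m_n\alpha(k))^{1-2/p}$, which the rates in \eqref{eq:mixing} force to vanish after summation. The main obstacle is precisely this assembly: choosing $p$ and balancing the moment scaling $m_n^{-1/p}$ against the high powers $m_n^7$ and $m_n^4 r_n^3$ appearing in Condition~(M), so that head, middle, and tail contributions all collapse as intended.
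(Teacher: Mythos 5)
Your overall architecture coincides with the paper's: the same mean computation, the diagonal term tending to $\tau_h(0)$, termwise convergence of the head $\sum_{k\le L}$ to $\sum_{k\le L}\rho(k)\tau_h(k)$, the intermediate block $L<k\le r_n$ killed by the anticlustering condition \eqref{eq:anticlustering}, and Chebyshev for \eqref{eq:consistency3}. The one place where you diverge is the tail $k>r_n$, and that is where your argument has a genuine gap. Your Davydov bound gives $m_n\abs{\cov(Z_1,Z_{1+k})}\le C\,(m_n\alpha(k))^{1-2/p}$ with $p$ effectively limited to about $4$ (Condition {\bf (M)} only guarantees $\E[X^4]<\infty$ and \eqref{eq:ui} with $r\ge 4$), so the tail contribution is of order $m_n^{1/2}\sum_{k>r_n}\alpha(k)^{1/2}$. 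This is not controlled by the stated hypotheses: condition \eqref{eq:mixing} bounds $m_n^{7}\sum_{k>r_n}\alpha(k)$, but a sum of the concave power $\alpha(k)^{1/2}$ is not dominated by any fixed power of $m_n$ times $\sum_{k>r_n}\alpha(k)$. Cauchy--Schwarz only yields $\sum_{k=r_n+1}^{n}\alpha(k)^{1/2}\le n^{1/2}\big(\sum_{k>r_n}\alpha(k)\big)^{1/2}$, and the resulting bound $o(1)^{1/2}\, n^{1/2} m_n^{-3}$ need not vanish, since $n$ may grow like $m_n^{1/\beta}$ with $\beta<1/4$. So the assertion that ``the rates in \eqref{eq:mixing} force this to vanish after summation'' is precisely the step you cannot complete as written.

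The paper closes this by a different device: it first truncates at level $m_n$, invoking Lemma~\ref{lem:Xtsmallmn} to discard the contribution of $\{\abs{X_t}>m_n\}$, and then applies the covariance inequality for \emph{bounded} variables (Theorem 17.2.1 of Ibragimov--Linnik), which gives $\abs{\cov(Z_1,Z_{1+k})}\le 4 m_n^2\,\alpha(k)$. The tail of the covariance sum is then at most $4 m_n^{3}\sum_{k>r_n}\alpha(k)$, which \eqref{eq:mixing} kills outright since $m_n^{3}\le m_n^{7}$. If you replace your Davydov step by this truncation-plus-sup-norm bound, the remainder of your proof goes through and matches the paper's argument.
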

Proposition~\ref{prop:consistency} shows that $\widehat{\delta}(h)$ is a
consistent estimator of $\delta(h)$ under condition {\bf (M)}. We will show a pre-asymptotic central limit theorem for
$\widehat{\delta}(h)$. 
\begin{theorem}\label{thm:clt}
 Assume that $(X_t, \mathbf{Y}_t)_{t\in \mathbb{N}}$ satisfies condition {\bf
   (M)} and \eqref{eq:infinitesum} holds. Suppose that there exists a
 $ q \geq 6$  such that $\E [|X|^q] < +\infty $ and $n/m_n^{q-1} \to 0$
 as $n \to \infty$. The limit 
  \begin{align}
  \label{eq:clt}
  \sqrt{\frac{n}{m_n}} \big(\widehat{\delta}(h) - \E[\widehat{\delta}(h)]
    \big) \overset{d}{\to }
    Z_h \,, \quad n\to \infty\,,
  \end{align}
 holds where $\overset{d}{\to }$ stands for convergence in distribution and $Z_h$ is a normal random variable with mean $0$ and variance
  $\sigma^2_h$. 
\end{theorem}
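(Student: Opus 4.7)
The plan is to write
\[
\sqrt{n/m_n}\bigl(\widehat{\delta}(h)-\E[\widehat{\delta}(h)]\bigr)=\sqrt{m_n/n}\sum_{t=1}^n\xi_t,\qquad \xi_t:=X_tI_{t-h}-\E[X_tI_{t-h}],
\]
and apply the Bernstein big-block/small-block decomposition to the triangular array $(\xi_t)$. Using the rate $r_n$ of Condition~(\textbf{M}), I would choose an auxiliary separator length $\ell_n\to\infty$ with $\ell_n/r_n\to 0$ and $(n/r_n)\alpha(\ell_n)\to 0$, which is available under~\eqref{eq:mixing}. Setting $k_n=\lfloor n/r_n\rfloor$, partition $\{1,\ldots,k_n r_n\}$ into $k_n$ consecutive intervals of length $r_n$ and split each into a long head $B_j$ of length $r_n-\ell_n$ and a trailing separator $S_j$ of length $\ell_n$. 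With $U_j=\sum_{t\in B_j}\xi_t$, $V_j=\sum_{t\in S_j}\xi_t$, and a tail remainder $R_n$, it suffices to handle $\sqrt{m_n/n}\sum_j U_j$, $\sqrt{m_n/n}\sum_j V_j$, and $\sqrt{m_n/n}R_n$ separately.

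\textbf{Negligibility of small blocks and independence approximation.} The covariance bookkeeping underlying Proposition~\ref{prop:consistency} gives, for any contiguous block of length $L\le r_n$, the bound $\var\bigl(\sum_{t=1}^{L}\xi_t\bigr)\le CL/m_n$ via~\eqref{eq:infinitesum}. Hence
\[
(m_n/n)\var\Bigl(\sum_{j=1}^{k_n}V_j\Bigr)\le C(m_n/n)\,k_n\,\ell_n/m_n\le C\ell_n/r_n\to 0,
\]
and analogously for $R_n$, so both contributions vanish in $L^2$. For $\sum_j U_j$, the Volkonskii--Rozanov characteristic-function inequality
\[
\Bigl|\E\exp\Bigl(it\sum_j U_j\Bigr)-\prod_{j=1}^{k_n}\E\exp(it U_j)\Bigr|\le 16 k_n\alpha(\ell_n)\to 0
\]
reduces the problem to the independent row-sum $\widetilde T_n:=\sqrt{m_n/n}\sum_j\widetilde U_j$ with $\widetilde U_j\overset{d}{=}U_j$ i.i.d.~in $j$.

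\textbf{Lyapunov's CLT.} The variance $\var(\widetilde T_n)=(m_n/r_n)\var(U_1)$ converges to $\sigma_h^2$ by rerunning the telescoping argument of Proposition~\ref{prop:consistency} on a single block of length $r_n-\ell_n$ and invoking $\ell_n/r_n\to 0$ together with \eqref{eq:infinitesum}. I would then verify the Lyapunov condition with exponent $q\ge 6$, namely
\[
k_n\,\E[|U_1|^q]\,(m_n/n)^{q/2}\to 0.
\]
A Rosenthal-type moment inequality for $\alpha$-mixing sums (e.g.\ Rio's bound) controls $\E[|U_1|^q]$ by $C(\var(U_1))^{q/2}+C r_n\,\E[|\xi_0|^q]$. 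Since $I_{t-h}$ has probability of order $1/m_n$, one obtains $\E[|\xi_0|^q]=O(1/m_n)$ from $\E[|X|^q]<\infty$. The variance term contributes $k_n(r_n/m_n)^{q/2}(m_n/n)^{q/2}=(r_n/n)^{q/2-1}\to 0$, while the fluctuation term is $O\bigl(n/m_n^{q-1}\bigr)$ up to constants, which vanishes precisely under the hypothesis $n/m_n^{q-1}\to 0$ combined with $m_n^4 r_n^3/n\to 0$ from Condition~(\textbf{M}).

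\textbf{Main obstacle.} The delicate step is establishing the Rosenthal-type bound on $\E[|U_1|^q]$ in a way that fully exploits the rare-event structure of $I_{t-h}$: a crude mixing moment inequality would leave $\E[|\xi_0|^q]=O(1)$ rather than $O(1/m_n)$, which would destroy the Lyapunov ratio. One must therefore interleave the $\alpha$-mixing estimates in~\eqref{eq:mixing} with the factor $\p(I_0=1)=O(1/m_n)$ inside each moment expansion, paralleling the second-moment bookkeeping in the proof of Proposition~\ref{prop:consistency} but now at the level of $q$-fold products. This is what makes the hypothesis $n/m_n^{q-1}\to 0$ tight and is where the bulk of the technical effort lies.
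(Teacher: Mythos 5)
Your overall architecture (Bernstein blocking, a characteristic-function independence approximation, then a CLT for the resulting triangular array of independent block sums with the limiting variance identified by rerunning the Proposition~\ref{prop:consistency} computation) is the same as the paper's, but two of your concrete choices do not go through under the stated hypotheses.

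First, your blocking scheme inverts the roles of the two scales. You take big blocks of length $r_n-\ell_n$ separated by gaps of length $\ell_n$ with $\ell_n/r_n\to0$, so your independence approximation costs $16k_n\alpha(\ell_n)$ with $k_n\approx n/r_n$, and you assert that an $\ell_n$ with $(n/r_n)\alpha(\ell_n)\to0$ is ``available under \eqref{eq:mixing}.'' It is not: \eqref{eq:mixing} only controls the mixing coefficients at lags $\ge r_n$ (through $m_n^7\sum_{h>r_n}\alpha(h)$ and $m_n\alpha(r_n)$) and gives no decay whatsoever for $\alpha(\ell)$ with $\ell<r_n$. The paper instead takes big blocks of length $m_n$ and separators of length exactly $r_n$, so the separator fraction $r_n/m_n\to0$ is guaranteed by $r_n^2/m_n\to0$ and the independence error is $k_n\alpha(r_n)$ with $k_n=n/m_n$, which is what the second part of \eqref{eq:mixing} is aimed at. Your scheme would require an extra assumption on $\alpha$ below the scale $r_n$.

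Second, your Lyapunov step relies on $\E[|\xi_0|^q]=O(1/m_n)$, justified by the indicator having probability $O(1/m_n)$. But $\E[|X_0|^q I_{-h}]=\E[|X_0|^q\mid a_{m_n}^{-1}\mathbf{Y}_{-h}\in A]\,\p(a_{m_n}^{-1}\mathbf{Y}_{-h}\in A)$, and the conditional $q$-th moment is only controlled through the uniform-integrability condition \eqref{eq:ui}, which condition {\bf (M)} assumes only for $r\ge 4$; with $q\ge 6$ nothing bounds $\E[|X_0|^q\mid a_{m_n}^{-1}\mathbf{Y}_{-h}\in A]$, and the unconditional bound $\E[|X|^q]<\infty$ does not help. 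The paper avoids this entirely: the hypotheses $q\ge 6$, $\E[|X|^q]<\infty$ and $n/m_n^{q-1}\to0$ are consumed in Lemma~\ref{lem:Xtsmallmn} to truncate $|X_t|$ at $m_n$, after which all summands are bounded by $m_n$; the Lindeberg condition is then verified by a fourth-moment computation (Cauchy--Schwarz reduces $\sum_i\E\big[S_i^2\1(|S_i|>\varepsilon)\big]$ to $\E[S_1^4]$ times a variance), using the covariance inequality for bounded mixing variables together with only the fourth conditional moment from \eqref{eq:ui}. You never invoke the truncation, so the $q$-th-moment hypotheses are deployed in the wrong place, and the Rosenthal-type bound on $\E[|U_1|^q]$ that your argument hinges on is, as you yourself concede in the ``main obstacle'' paragraph, not established.
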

The reason why we have a pre-asymptotic central limit theorem for
$\widehat{\delta}(h)$ is because the difference between
$\E[\widehat{\delta}(h)]$ and $\delta(h)$ is not negligible in
general. The proof of Theorem~\ref{thm:clt} follows the framework in
the proof of Theorem 3.1, \cite{davis2009} by using a
big-block-small-block argument. According to
Lemma~\ref{lem:Xtsmallmn}, we will only need to consider the case where
$|X_t|\le m_n$.  

\begin{lemma}\label{lem:Xtsmallmn}
 Assume that $(X_t, \mathbf{Y}_t)$ satisfies condition {\bf (M)}. If
 there exists an integer $ q \geq 2$ such that $\E [|X_t|^q]< \infty$ and
 $n/m_n^{q-1}\to 0$ as $n \to \infty$, then 
\begin{align}
\label{eq:largeneg}
\sqrt{\frac{m_n}{n}} \sum_{t=1}^n |X_t| I_{t-h} \1 (|X_t|>m_n)
  \overset{\p}{\to} 0\,.
\end{align}
\end{lemma}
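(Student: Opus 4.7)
The plan is a first-moment bound via Markov's inequality, exploiting that every summand is non-negative. For any $\varepsilon>0$,
\begin{align*}
\p\Bigl(\sqrt{m_n/n}\sum_{t=1}^n |X_t| I_{t-h} \1(|X_t|>m_n) > \varepsilon\Bigr) \le \varepsilon^{-1}\sqrt{m_n/n} \sum_{t=1}^n \E\bigl[|X_t| I_{t-h} \1(|X_t|>m_n)\bigr],
\end{align*}
and by strict stationarity of $(X_t,\mathbf{Y}_t)$ each of the $n$ expectations equals $\E[|X_{h+1}| I_1 \1(|X_{h+1}|>m_n)]$ after shifting the time index by $h$. Hence it suffices to prove that $\sqrt{nm_n}\,\E[|X_{h+1}| I_1 \1(|X_{h+1}|>m_n)]\to 0$ as $n\to\infty$.

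To control the remaining expectation, I would first discard the indicator via $I_1\le 1$ and then apply the truncation inequality $|X_{h+1}|\1(|X_{h+1}|>m_n) \le m_n^{1-q}|X_{h+1}|^q$, which is valid since on the event $\{|X_{h+1}|>m_n\}$ the ratio $|X_{h+1}|/m_n$ is at most $(|X_{h+1}|/m_n)^q$. This yields
\begin{align*}
\E[|X_{h+1}| I_1 \1(|X_{h+1}|>m_n)] \le m_n^{1-q}\,\E[|X|^q] = C\,m_n^{1-q},
\end{align*}
with $C = \E[|X|^q]<\infty$ by hypothesis. Multiplying by the prefactor $\sqrt{nm_n}$ gives the overall bound $C\sqrt{n/m_n^{2q-3}}$, and writing $n/m_n^{2q-3} = (n/m_n^{q-1})\cdot m_n^{-(q-2)}$, the assumption $n/m_n^{q-1}\to 0$ together with $m_n\to\infty$ and $q\ge 2$ forces this ratio to zero, completing the argument.

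There is no substantial obstacle here; the argument is essentially elementary arithmetic. The one place requiring a bit of care is the exponent bookkeeping: one must correctly combine the $\sqrt{m_n}$ prefactor with the truncation gain $m_n^{1-q}$ to obtain the net power $m_n^{(3-2q)/2}$, and then verify that the rate hypothesis $n/m_n^{q-1}\to 0$ (strictly weaker than $n/m_n^{2q-3}\to 0$ when $q\ge 3$) still delivers the required decay. A slightly sharper variant would replace $\E[|X|^q]$ by $\E[|X|^q\1(|X|>m_n)]$, which vanishes via dominated convergence, but such a refinement is not needed for the stated conclusion.
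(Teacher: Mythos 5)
Your proof is correct and follows essentially the same route as the paper's: both show that the first moment of the (non-negative) sum vanishes and conclude convergence in probability. The only difference is the elementary inequality used for the single-term expectation --- you drop $I_{t-h}\le 1$ and apply the pointwise truncation bound $|X|\1(|X|>m_n)\le m_n^{1-q}|X|^q$, whereas the paper decomposes into a covariance plus a product of means and uses Cauchy--Schwarz with Markov's inequality; your version is slightly more direct and in fact yields the sharper rate $O(\sqrt{n/m_n^{2q-3}})$ versus the paper's $O(\sqrt{n/m_n^{q-1}})$.
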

\begin{proof}
Notice that 
\begin{align*}
& \sqrt{\frac{m_n}{n}} \sum_{t=1}^n \E \big[ |X_t| I_{t-h}
  \1(|X_t|> m_n) \big]\\
& = \sqrt{\frac{m_n}{n}} \sum_{t=1}^n \Big( \cov\big( |X_t|,
  I_{t-h}\1(|X_t|>m_n) \big) + \E[|X_t|] \p(a_{m_n}^{-1}\mathbf{Y}_{t-h} \in A,
  |X_t|>m_n ) \Big)\\ 
& \le \sqrt{\frac{m_n}{n}} \sum_{t=1}^n \sqrt{\var(|X_t|) \p(|X_t|>m_n)}
  + \E[|X_t|] \p( |X_t|>m_n )\,. 
\end{align*}
In the last step, the Cauchy-Schwarz inequality is used. Since
$\E[|X_t|^q] < \infty$, an application of Markov's inequality yields that
$\p(|X_t|>m_n) \le m_n^{-q} \E [|X_t|^q] <\infty$. Since
$n/m_n^{q-1} \to 0 $ as $n\to \infty$, we have \begin{align*}
  \sqrt{\frac{m_n}{n}} \sum_{t=h+1}^n \E
\big[ |X_t| I_{t-h} \1(|X_t|>m_n) \big] = O\big( \sqrt{n/m^{q-1}}
  \big) \to 0\,, n\to \infty\,,
\end{align*}
which implies that \eqref{eq:largeneg} holds. 
\end{proof}
\begin{remark}
To ensure the conditions in Theorem~\ref{thm:clt} are satisfied, we
need to choose carefully the values of $(r_n)$ and $(m_n)$. One possible
choice is to choose $r_n$ as the integer part of $c \log n$ for a
constant $c>0$, i.e. $r_n =
\lfloor c \log n\rfloor$ and $m_n= \lfloor n^{\beta}\rfloor$ with $\beta \in ((q-1)^{-1}, 0.25)$. It is
easy to verify that the limits $r_n^2/m_n \to 0$, $m_n^4 r_n^3/n \to 0$ and
$n/m_n^{q-1} \to 0$ hold as $n \to \infty$.
\end{remark}

\section{Bootstrapping TMES}
\label{sec:bootstrap}

Since the right-hand side of \eqref{eq:consistency2} is in general
unknown, we will provide a non-parametric method to construct
confidence intervals of TMES based on a bootstrap algorithm.  

We will use the stationary bootstrap algorithm proposed in
\cite{politis1994} and modified in \cite{davis2012142}. Fix an integer
$h\ge 0$ and write $Z_t = X_t I_{t-h} - \E[X_t I_{t-h}]$, $t\ge
1$. Suppose that we observe the sample $(Z_t)_{t=1,2,\ldots,n}$ for a
positive integer $n$ and we will generate a bootstrapped sample of $(Z_t)_{t=1,\ldots,n}$, $(Z_t^{\star})_{t=1,\ldots, n}= (Z_{t^{\star}})_{t =1,2,\ldots,n}$, which can be written as 
\begin{align} \label{eq:bootstrapsequence}
\left(Z_{H_1}, \ldots, Z_{H_1 + L_1 -1}, Z_{H_2},
  \ldots, Z_{H_2+ L_2 - 1},
  \ldots, Z_{H_N}, \ldots, Z_{H_N +L_N -1}, \ldots \right)
\end{align}
where $(H_i)$ and $(L_i)$ are two sequences of positive integers independent of  $(Z_t)$. The
segment $\left(Z_{H_i},\ldots, Z_{H_i +L_i -1}\right)$ is the $i$-th block of the
bootstrapped sample. The starting positions of these blocks,
denoted by $(H_i)$, form an iid sequence of random variables uniformly
distributed on the index set $\{1, \ldots, n\}$. The lengths of these
blocks, denoted by $(L_i)$, form an iid sequence of geometrically distributed
random variables with parameter $\theta= \theta_n \in (0,1)$, i.e.,
$\p(L_i=l) = \theta(1 -\theta)^{l-1}$ with $\theta=\theta_n \to 0$ as $n\to \infty$ and
$l=1,2,\ldots$. The number of blocks in the bootstrapped sample is given
by $N=N_n = \inf \{l\ge 1: \sum_{i=1}^l L_i \ge n\}$. If
any element $Z_t$ in \eqref{eq:bootstrapsequence} has an index $t>n$,
we take $Z_t \equiv Z_{t \mod n}$. In what follows, the probability measure generated by the bootstrap
procedure is denoted by $\p^{\star}$, i.e., $\p^{\star} (\cdot) = \p(\cdot \mid
(Z_t))$. The corresponding expectation, variance and covariance are
denoted by $\E^{\star}$, $\var^{\star}$ and $\cov^{\star}$, respectively.

The {\em bootstrapped (empirical) TMES} is given by
\begin{align*}
\delta^{\star} (h) = \frac{m_n}{n} \sum_{t=1}^n Z_t^{\star}\,.
\end{align*}
Due to the stationarity of $(Z_t^{\star})$, we have
\begin{align*}
\E^{\star} [\delta^{\star} (h)] = \frac{m_n}{n} \sum_{t=1}^n
  \E^{\star}[Z_t^{\star} ] =  m_n \E^{\star}[Z_1^{\star}] = m_n \Big(n^{-1} \sum_{t=1}^n
  Z_t\Big)= m_n \overline{Z}_n\,,
\end{align*}
where $\overline{Z}_n = n^{-1} \sum_{t=1}^n Z_t$. 

The following theorem is another main result of the paper, which
proves the consistency of bootstrapped TMES and the central limit
theorem for the bootstrapped TMES. 
\begin{theorem} \label{thm:bootclt}
  Assume that the vector time series $(X_t, \mathbf{Y}_t)_{t\in \mathbb{N}}$ satisfies
  condition {\bf (M)}. If moreover for $q\ge 6$, 
  \begin{align}
  \label{bootcondition2}
 E[|X_t|^q] < \infty\,, \quad n/m_n^{q-1} \to 0\,,
  \end{align}
  and the parameter $\theta$ in
  the stationary bootstrap algorithm satisfies that as $n\to \infty$,
  \begin{align}
\label{bootcondition3}
\theta\to 0\,, \quad n\theta^2 \to \infty\,, 
\end{align}
  we have
  \begin{align}
  \label{eq:bootconsist1}
    \E^{\star} \big[ \delta^{\star}(h) \big] \overset{\p}{\to} 0\,, \quad n\to \infty\,,\\ 
  \label{eq:bootconsist2}
 \limsup_{n\to \infty}\E[ \var^{\star}\big( (n/m_n)^{1/2} \delta^{\star}(h)\big)]
      =\sigma_h^2 \,, \quad n\to \infty\,, 
  \end{align}
where $\sigma_h^2$ is given in \eqref{eq:consistency2}. Moreover, the central limit
theorem holds 
\begin{align}
\label{eq:bootclt}
\sup_x \big| \p^{\star} \big( (n/m_n)^{1/2}  \big( \delta^{\star}(h) - m_n \overline{Z}_n
  \big) \le x  \big) - \Phi(x/\sigma_h)\big| \overset{\p }{\to } 0\,, \quad n\to \infty\,,
\end{align}
where $\Phi$ is the cumulative distribution function of a standard
normal distribution. 
\end{theorem}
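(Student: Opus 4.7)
The plan is to treat the three assertions in order, leveraging the stationary bootstrap calculus of \cite{politis1994, davis2012142} and the consistency and CLT infrastructure already set up for $\widehat{\delta}(h)$.

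For \eqref{eq:bootconsist1}, the identity $\E^{\star}[\delta^{\star}(h)] = m_n \overline{Z}_n = \widehat{\delta}(h) - \E[\widehat{\delta}(h)]$ combined with Proposition~\ref{prop:consistency} yields the claim immediately. For \eqref{eq:bootconsist2}, the iid-block structure with uniform start and geometric length gives the classical stationary bootstrap variance identity
\begin{align*}
\var^{\star}\Big(\sum_{t=1}^n Z_t^{\star}\Big) = n\sum_{|k|<n} (1-\theta)^{|k|}\, \widetilde{R}_n(k) + \text{boundary terms},
\end{align*}
where $\widetilde{R}_n(k) = n^{-1}\sum_{t=1}^n Z_t Z_{((t+k-1)\bmod n)+1}$ is the circular sample autocovariance. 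Multiplying by $m_n/n$ and taking $\p$-expectation, one shows that for each fixed $k$ partial regular variation and the definitions of $\rho(\cdot)$ and $\tau_h(\cdot)$ force $m_n\E[\widetilde{R}_n(k)] \to \rho(|k|)\tau_h(|k|)$, while \eqref{eq:infinitesum} supplies a summable dominating sequence. Since $\theta\to 0$, dominated convergence delivers the target $\tau_h(0) + 2\sum_{s\ge 1}\rho(s)\tau_h(s) = \sigma_h^2$.

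The CLT \eqref{eq:bootclt} rests on the fact that, conditional on $(Z_t)$, the bootstrap sample is a concatenation of $N = N_n$ iid blocks $B_i := \sum_{j=0}^{L_i-1} Z_{((H_i + j - 1)\bmod n)+1}$. Writing $\sum_{t=1}^n Z_t^{\star} = \sum_{i=1}^N B_i - R_N$ where $R_N$ denotes the overshoot of the last block, standard geometric-sum arguments yield $N/(n\theta)\to 1$ and $R_N$ of order $1/\theta$, both in $\p^{\star}$-probability. Under the scaling $\sqrt{m_n/n}$ and the assumption $n\theta^2 \to \infty$, the random index $N$ may be replaced by the deterministic $K_n := \lfloor n\theta\rfloor$ and the overshoot contribution is asymptotically negligible. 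The conditional Lindeberg--Feller CLT then applies to the iid triangular array $\{\sqrt{m_n/n}\,B_i\}_{i\le K_n}$, requiring (i) conditional variance convergence $K_n\var^{\star}(B_1)(m_n/n) \overset{\p}{\to} \sigma_h^2$ and (ii) a conditional Lindeberg negligibility condition. Pólya's theorem finally upgrades pointwise convergence of the bootstrap distribution function to the uniform Kolmogorov bound in \eqref{eq:bootclt}.

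The principal obstacle is upgrading the expectation convergence of Step~2 to \emph{in-probability} convergence of the random bootstrap variance $\var^{\star}((n/m_n)^{1/2}\delta^{\star}(h))$, which reduces to establishing $\var_{\p}\big(\sum_{|k|<n}(1-\theta)^{|k|} m_n \widetilde{R}_n(k)\big) \to 0$. Splitting the sum at lag $r_n$, the inner part is controlled via the anti-clustering condition \eqref{eq:anticlustering} together with fourth-moment bounds on $Z_t Z_{t+k}$ supplied by $\E|X|^q<\infty$ with $q\ge 6$; the outer part is dispatched by combining the exponential weight $(1-\theta)^{|k|}$ with the mixing decay \eqref{eq:mixing}. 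Verifying the conditional Lindeberg condition additionally requires uniform-in-$n$ bounds on $\E^{\star}|B_1|^{2+\eta}$, which again exploit the $q\ge 6$ moment on $X_t$ together with the geometric moments of the block lengths. The interplay $n\theta^2\to\infty$ with $\theta\to 0$ is precisely what calibrates the block count and mean block length so that the big-block-small-block scheme from the proof of Theorem~\ref{thm:clt} carries over to the bootstrap setting.
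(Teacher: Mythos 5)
Your proposal follows essentially the same route as the paper's proof: the identity $\E^{\star}[\delta^{\star}(h)]=m_n\overline{Z}_n$ together with Proposition~\ref{prop:consistency} for \eqref{eq:bootconsist1}, the geometric-weight/circular-autocovariance expansion of $\var^{\star}$ for \eqref{eq:bootconsist2}, and, for \eqref{eq:bootclt}, the decomposition into iid blocks, negligibility of the overshoot, replacement of the random block count $N$ by a deterministic $K_n\sim n\theta$ via an Anscombe-type argument, a conditional CLT for the resulting triangular array, and P\'olya's theorem for the uniform bound. The only divergences are minor: you verify a conditional Lindeberg condition where the paper checks a Lyapunov third-moment condition (Lemma~\ref{lem:lyapunov}), and you explicitly flag the in-probability convergence of the conditional variance as the principal obstacle --- a point the paper treats only at the level of expectations, deferring the rest to the argument of Theorem 2.1 in \cite{davis2012142}, so your added care there is warranted rather than a change of method.
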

Theorem~\ref{thm:bootclt} tells that $\delta^{\star}(h)$ shares the same
asymptotic distribution as $\widehat{\delta}(h)$.

\section{Simulation studies}\label{sec:simulation}

In this section, to verify the asymptotic properties of empirical TMES ($\widehat{\delta}(h)$) and bootstrapped TMES ($\delta^{\star}(h)$) for two classes of models introduced in Section~\ref{sec:example}, we conducted a simulation study. For each model, we simulated sample paths of length n=2,000 to estimate the empirical TMES. Using the stationary bootstrap algorithm with N=300 repetitions, we generated bootstrapped sequences and corresponding bootstrapped TMES values, forming a bootstrapped distribution for $\widehat{\delta}(h)$. Additionally, we simulated N=300 sample paths of each model independently, calculating the empirical TMES for each to form a simulated distribution. According to Theorem~\ref{thm:clt} and Theorem~\ref{thm:bootclt}, both distributions should converge asymptotically and from these distributions, we constructed $90\%$-confidence intervals. To assess their normality, we compared them by using QQ-plots against the standard normal distribution. This methodology allowed us to rigorously examine the convergence properties of our estimators, ensuring that both empirical and bootstrapped TMES values exhibit the same asymptotic behavior. Through this section, we choose $m_n=20$ in estimating $\widehat{\delta}(h)$ and the
parameter $\theta$ for stationary bootstrap is chosen as $\theta = 1/10$. 

\subsection{Parts of a regularly varying random field}

We will use the {\em max-moving averages} model as an example of the
regularly varying random field. Let $(Z_{i_1, i_2})_{(i_1, i_2) \in
  \mathbb{Z}^2}$ be iid Fr\'{e}chet distributed 
with index $\xi > 0$. Define the max-moving averages 
\begin{align} \label{eq:mma}
\widetilde{Z}_{i_1, i_2} = \max_{(s_1, s_2) \in \mathbb{Z}^2} w(s_1, s_2)
  Z_{i_1-s_1, i_2 -s_2}\,, 
\end{align}
where $w(s_1, s_2) = \phi^{|s_1| + |s_2|} \1 \big( |s_1| + |s_2| \le L
\big)$. In Figure~\ref{fig:mmasamplepath}, we present a simulation of
$(X_t^{(1)}, Y_t^{(1)})_{t=1,\ldots,n} = (\widetilde{Z}_{0,t},
\widetilde{Z}_{1,t})_{t=1,\ldots,n}$ with $\phi=0.8$ and $L= +\infty$. The
empirical TMES for $(X_t^{(1)}, Y_t^{(1)})$ is given by 
\begin{align*}
\widehat{\delta}(h) = \frac{m_n}{n} \sum_{t=1}^{n-h} X_{t+h}^{(1)}
  \1(a_{m_n}^{-1} Y_t^{(1)} > 1)\,, \quad h=0,1,\ldots, 9\,.
\end{align*}
As shown in Section~\ref{sec:examplemax}, we have
\begin{align*}
  \delta(h) & = \lim_{n\to \infty} \E[X_{1+h} \mid Y_1 >a_{m_n} ]\\ &=  \lim_{n\to \infty}
  \int_0^{\infty} \p(X_{1+h} \le x \mid Y_1 >a_{m_n}) \dif  x\\ 
  &=\lim_{n\to \infty} \int_0^{\infty} 1- \frac{\p(X_{1+h} \le x) - \p (X_{1+h}\le x
  , Y_1 \le a_{m_n})}{1- \p(Y_1 \le a_{m_n})}\dif  x,
\end{align*}
where for $x,y \ge 0$,
\begin{align*}
  \p (X_{1+h}\le x) \
&= \p(Y_1 \le x) = \exp \Big(- x^{-\xi} \Big( 1 + \sum_{j>0} 4j \phi^{\xi j}
  \Big) \Big)\,, \quad x>0\,,\\ 
  \p(X_t \le x, Y_{t-h} \le y)
& = \exp \Big\{ - \sum_{i_1=-\infty}^{\infty} \sum_{i_2 = -\infty}^{\infty} \max
  \Big(x^{-\xi} \phi^{\xi(|i_1|+|i_2|)}, y^{-\xi} \phi^{\xi(|i_1-1| + |i_2 +h|)}
  \Big) \Big\}\,,
\end{align*}
according to the arguments in \cite{cho2016}. By letting $n$ take a
sufficient large value and replacing the integral with its discretized
version, we obtain the theoretical values of $\delta(h)$ at lags
$h=0,1,\ldots,9$.

The empirical TMES $\widehat{\delta}(h)$ is close to  $\delta(h)$
in Figure~\ref{fig:mmacompare}. Moreover, the theoretical values of $\delta(h)$ stay
in both the bootstrapped and simulated $90\%$-confidence
intervals. This indicates that $\widehat{\delta}(h)$ is consistent. In
Figure~\ref{fig:mmaqq}, we present the QQ-plots of the simulated
distributions for $\widehat{\delta}(0)$ and $\widehat{\delta}(3)$ and the
bootstrapped distribution of $\delta^{\star}(0)$ and $\delta^{\star}(3)$ against the standard normal
distribution. This is an evidence of the asymptotic normality of
$\widehat{\delta}(h)$ and $\delta^{\star}(h)$.

\begin{figure}[htbp]
\centering  
\includegraphics[scale=0.53]{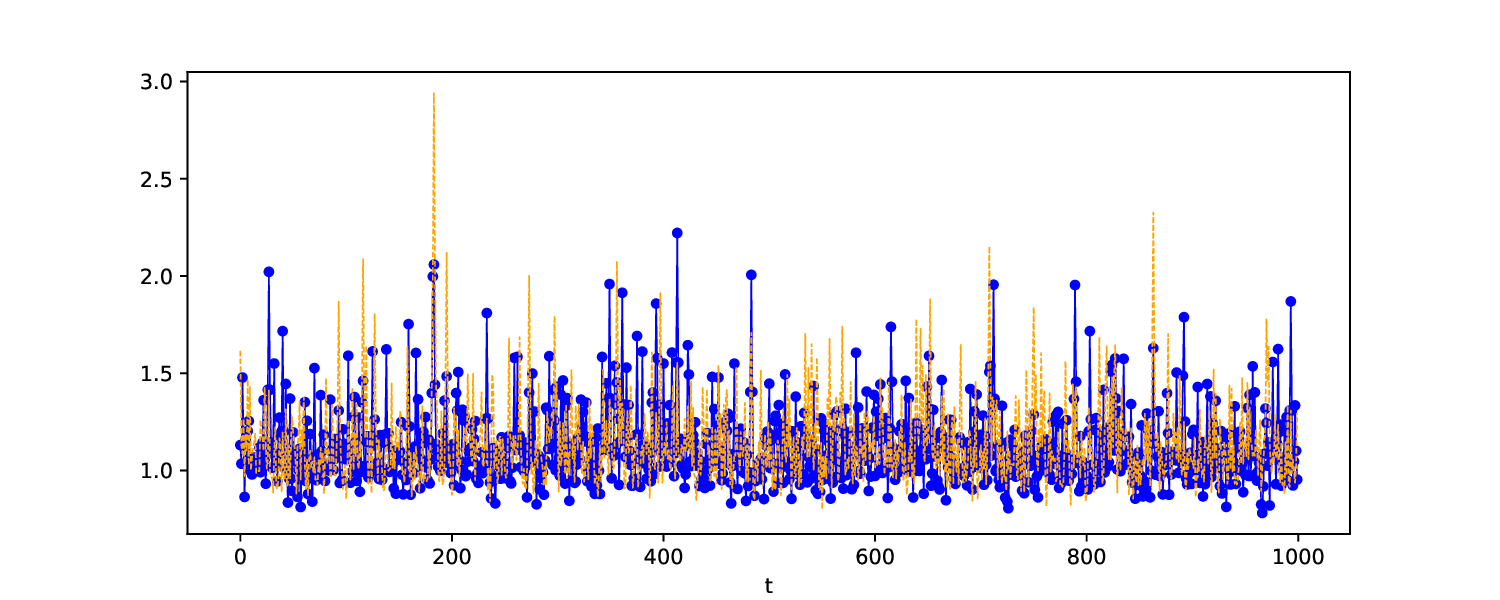}
\caption{A simulation of $(X_t^{(1)}, Y_t^{(1)})$. The
blue solid line represents $(X_t^{(1)})$ and the orange dotted line represents $(Y_t^{(1)})$. }
\label{fig:mmasamplepath}
\end{figure}

\begin{figure}[htbp]
    \includegraphics[width=0.5\linewidth]{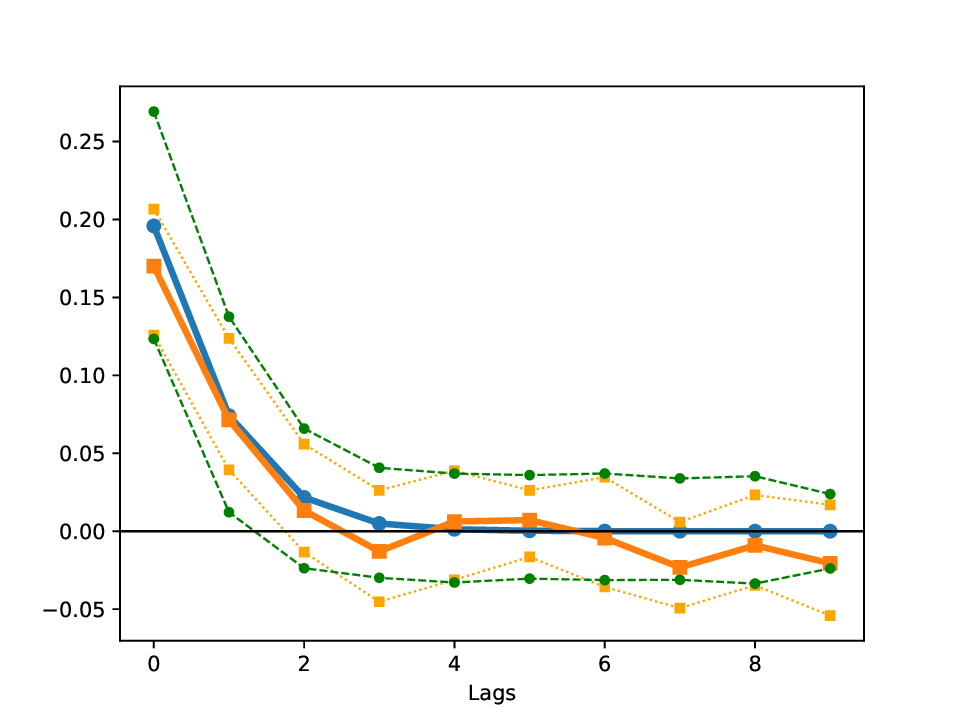}
  \caption{The TMES $\delta(h)$ at lags $h=0,1,\ldots,9$ (blue
    solid line with circle marks) and the empirical TMES $\widehat{\delta}(h)$ at lags
    $h=0,1,\ldots, 9$ (orange solid line with square marks) of
    $(X_t^{(1)}, Y_t^{(1)})$ along with the bootstrapped $90\%$-confidence interval
    (orange dotted line) and the simulated $90\%$-confidence interval
    (green dashed line). }
  \label{fig:mmacompare}
\end{figure}

\begin{figure}[htbp]
  \centering
  \begin{subfigure}[b]{0.49\textwidth}
  \includegraphics[width=\textwidth]{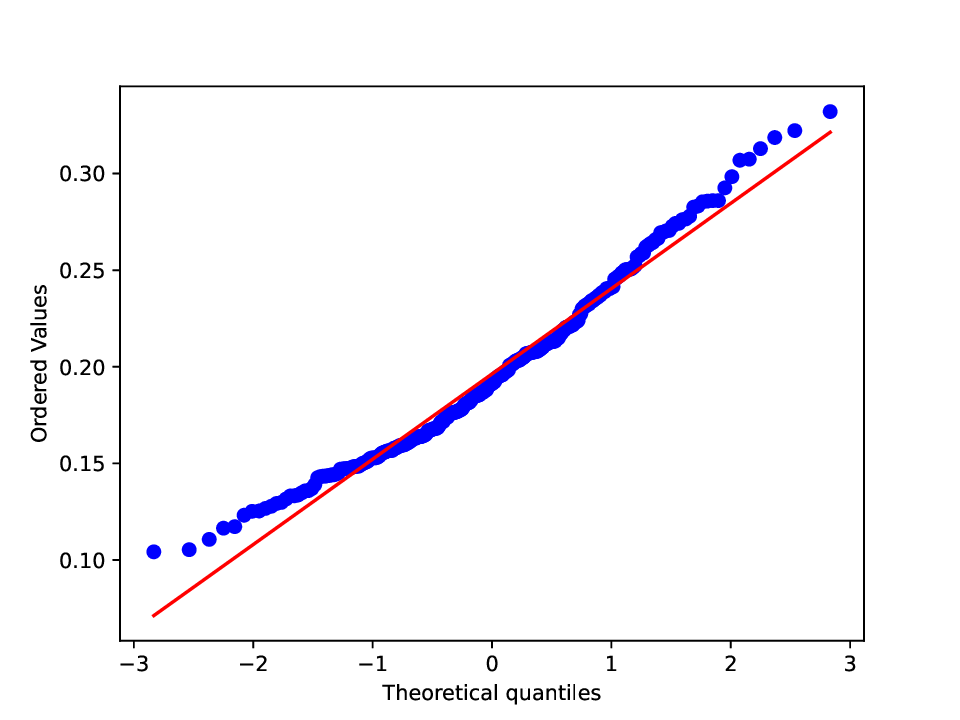}
  \caption{}
  \end{subfigure}
  \hfill
  \begin{subfigure}[b]{0.49\textwidth}
  \includegraphics[width=\textwidth]{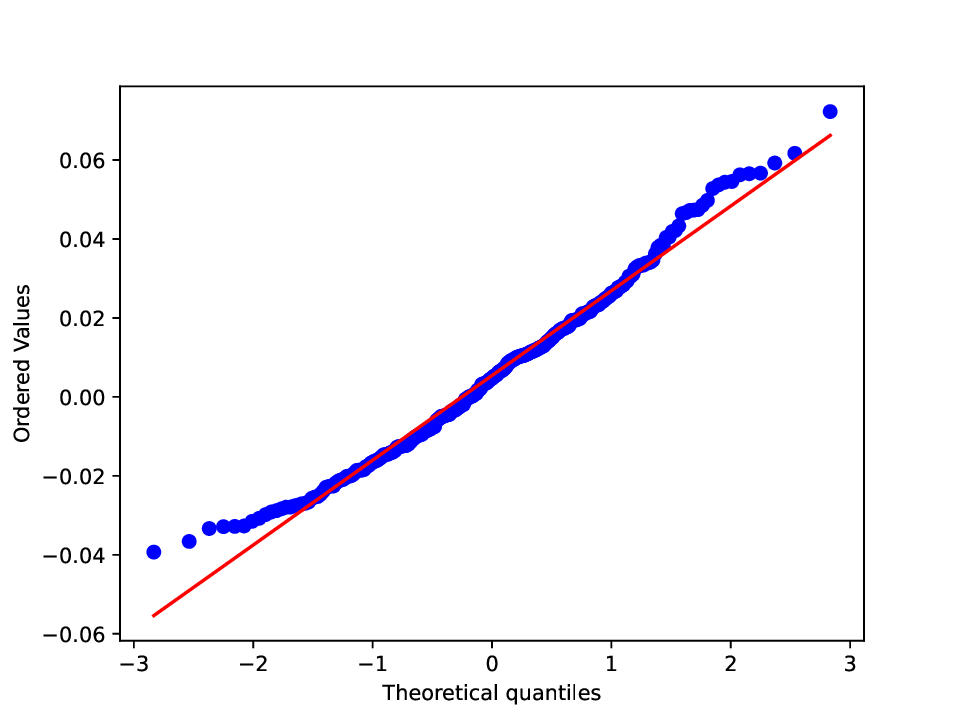}
  \caption{}
  \end{subfigure}
  \vskip\baselineskip
  \begin{subfigure}[b]{0.49\textwidth}
  \includegraphics[width=\textwidth]{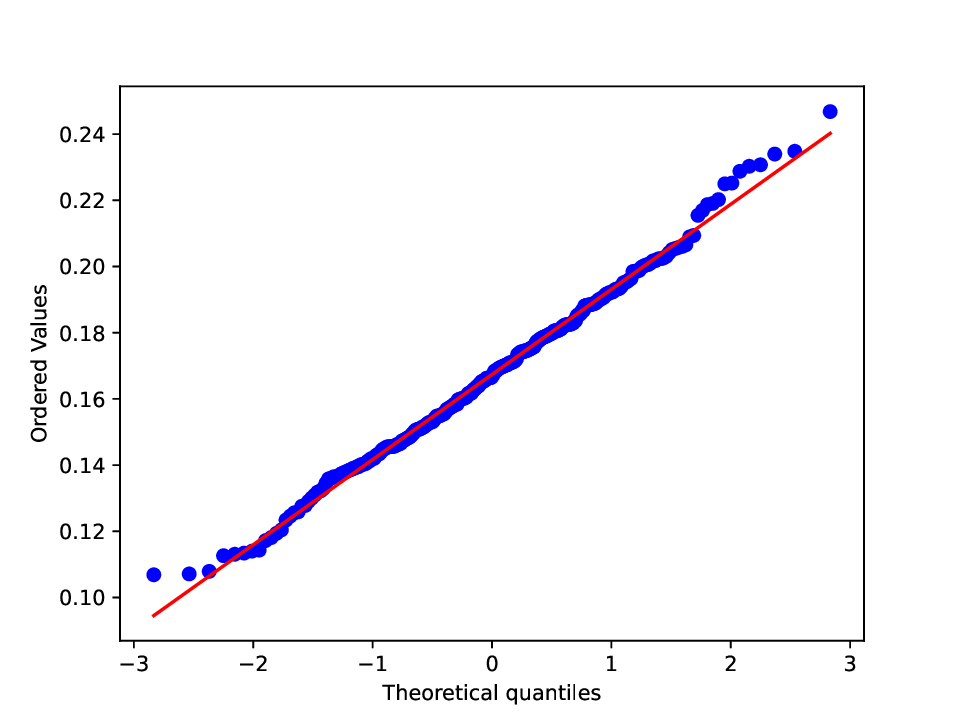}
  \caption{}
  \end{subfigure}
  \hfill
  \begin{subfigure}[b]{0.49\textwidth}
    \includegraphics[width=\textwidth]{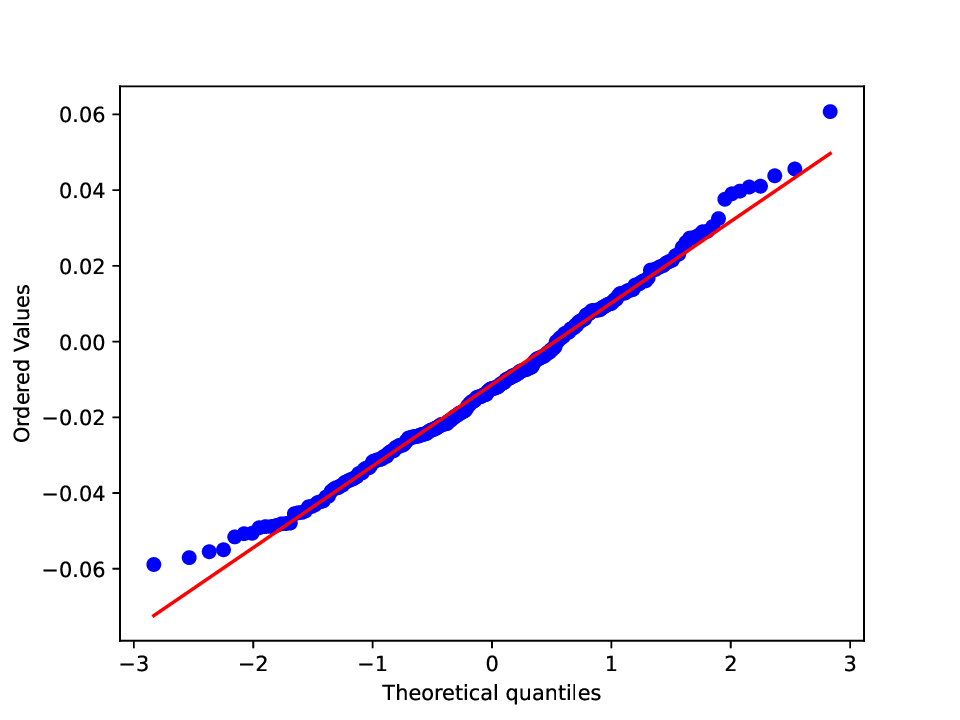}
    \caption{}
  \end{subfigure}
  \caption{ QQ-plots of the bootstrapped and simulation distributions
    against the standard normal distribution for $(X_t^{(1)}, Y_t^{(1)})$. {\bf Up-left}: The
    simulated distribution for $\widehat{\delta}(0)$. {\bf Up-right}: The
    simulated distribution for $\widehat{\delta}(3)$. {\bf Low-left}: The
    bootstrapped distribution of $\delta^{\star}(0)$. {\bf Low-right}: The
    bootstrapped distribution of $\delta^{\star}(3)$.  }
  \label{fig:mmaqq}
\end{figure}

\subsection{Copula-based time series}

We will consider two specific models in this section. In the first
model we choose
\begin{align}\label{eq:armag}
Y_t^{(2)} = \phi Y_{t-1}^{(2)} + Z_t^{(2)} + \theta Z_{t-1}^{(2)} = 0.2 Y_{t-1}^{(2)} + Z_t^{(2)} + 0.8 Z_{t-1}^{(2)}\,,
\end{align}
where $(Z_t^{(2)})$ is an iid sequence of Fr\'{e}chet random variables
with tail index $\xi=6$. The copula of $(X_t^{(2)}, Y_t^{(2)})$ is
chosen as a Gaussian copula with the correlation $0.6$. The marginal
distribution of $X_t^{(2)}$ is the standard normal distribution, which
is light-tailed. The dependence structure of $(X_t^{(2)}, Y_t^{(2)})$
is similar to a Gaussian process. According to
Proposition~\ref{prop:latent}, the theoretical values of $\delta (h)$
are determined by the extremogram $\rho^{(2)}(h) = \lim_{n\to \infty} \p(Y_{h+1}^{(2)} >a_{m_n} \mid Y_1^{(2)}
>a_{m_n})$, $\E[X_1]$ and $\delta(0) = \lim_{n\to \infty} \E[X_1^{(2)} \mid
Y_1^{(2)} >a_{m_n}]$. The formula of $\rho^{(2)}(h)$  is provided in
Appendix B of \cite{mikosch2014}  
\begin{align*}
\rho^{(2)}(h)= \frac{\phi^{\xi(h-1)} (\theta+\phi)^{\xi} + \phi^{\xi h}(\theta + \phi)^{\xi} (1-
  \phi^{\xi})^{-1}}{1+ (\theta + \phi)^{\xi} (1- \phi^{\xi})^{-1}}\,, \quad h\ge 1\,.
\end{align*}
Trivially, $\E[X_1]=0$ and the value of $\delta(0)$ is obtained by
applying a Monte Carlo method of which
the details are omitted. A sample path of $(X_t^{(2)},
Y_t^{(2)})$ is presented in Figure~\ref{fig:armagsamplepath}. We
compare the empirical TMES $\widehat{\delta}(h)$ with $\delta(h)$ at lags
$h=0,1,\ldots, 9$ for $(X_t^{(2)}, Y_t^{(2)})$ in Figure~\ref{fig:armagcompare}. This indicates the
consistency of $\widehat{\delta}(h)$. We present the QQ-plots of the simulated
distributions for $\widehat{\delta}(0)$ and $\widehat{\delta}(3)$ and the
bootstrapped distribution of $\delta^{\star}(0)$ and $\delta^{\star}(3)$ against the standard normal
distribution in Figure~\ref{fig:armagqq}. These QQ-plots shows the
asymptotic normality of $\widehat{\delta} (h)$ and $\delta^{\star}(h)$.

We assume that $(Y_t^{(3)})$ follows a GARCH model 
\begin{align} \label{eq:garcht}
Y_t^{(3)}  = \sigma_t Z_t\,, \quad \sigma_t^2 =0.2 + 0.3 Y^2_{t-1} + 0.3 \sigma^2_{t-1}\,, \quad
  Z_t \overset{iid}{\sim} N(0,1)\,,
\end{align}
and the copula of $(X^{(3)}_t, Y^{(3)}_t)$ is a t-copula with $3$ degrees of freedom and the
correlation $0.7$. The marginal distribution of $X_t^{(3)}$ is the
standard normal distribution. The time series $(X_t^{(3)}, Y_t^{(3)})$
is the second model, whose dependence structure is similar to the
classical regularly varying time series. In Figure~\ref{fig:garchtsamplepath}, a sample path
of $(X_t^{(3)}, Y_t^{(3)})$ is presented. Since there is  no explicit formula
for the extremogram $\rho^{(3)}(h) = \lim_{n\to \infty} \p(Y_{h+1}^{(3)} >a_{m_n} \mid Y_1^{(3)}
>a_{m_n})$ and $\delta(0) = \lim_{n\to \infty} \E[X_t^{(3)}\mid Y_t^{(3)}
>a_{m_n}]$, we use the Monte Carlo method to obtain the values of
$\delta(h)$. We compare the empirical TMES $\widehat{\delta}(h)$ with $\delta(h)$ at lags
$h=0,1,\ldots, 9$ for $(X_t^{(3)}, Y_t^{(3)})$ in
Figure~\ref{fig:garchtcompare}, in which $\delta(h)$ stays in the
bootstrapped and simulated $90\%$-confidence intervals. The consistency of $\widehat{\delta}(h)$
for $(X_t^{(3)}, Y_t^{(3)})$ is verified. By comparing the
bootstrapped and simulated distributions with the standard normal
distribution, we verify the asymptotic normality of $\widehat{\delta}(h)$
and $\delta^{\star}(h)$ for $(X_t^{(3)}, Y_t^{(3)})$ in
Figure~\ref{fig:garchtqq}.

\begin{figure}[htbp]
\centering  
\includegraphics[scale=0.53]{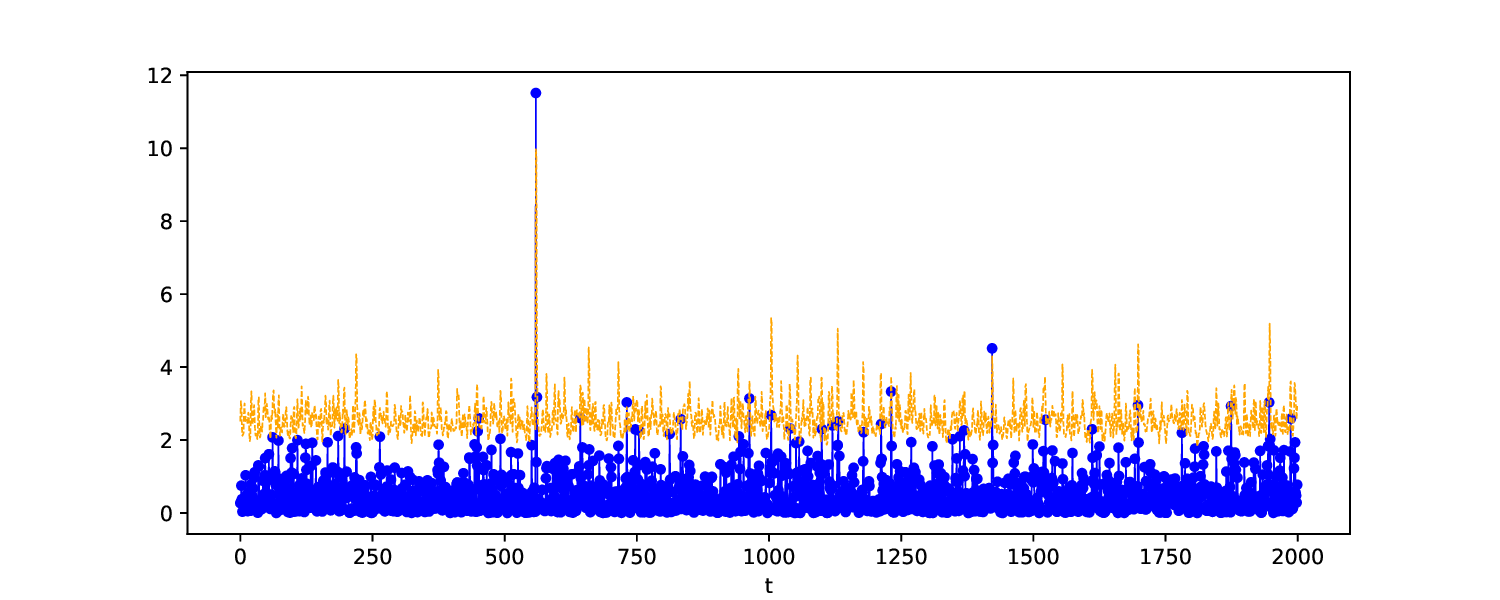}
\caption{A simulation of $(X_t^{(2)}, Y_t^{(2)})$. The
blue solid line represents $(X_t^{(2)})$ and the orange dotted line represents $(Y_t^{(2)})$.}
\label{fig:armagsamplepath}
\end{figure}

\begin{figure}[htbp]
    \includegraphics[width=0.5\linewidth]{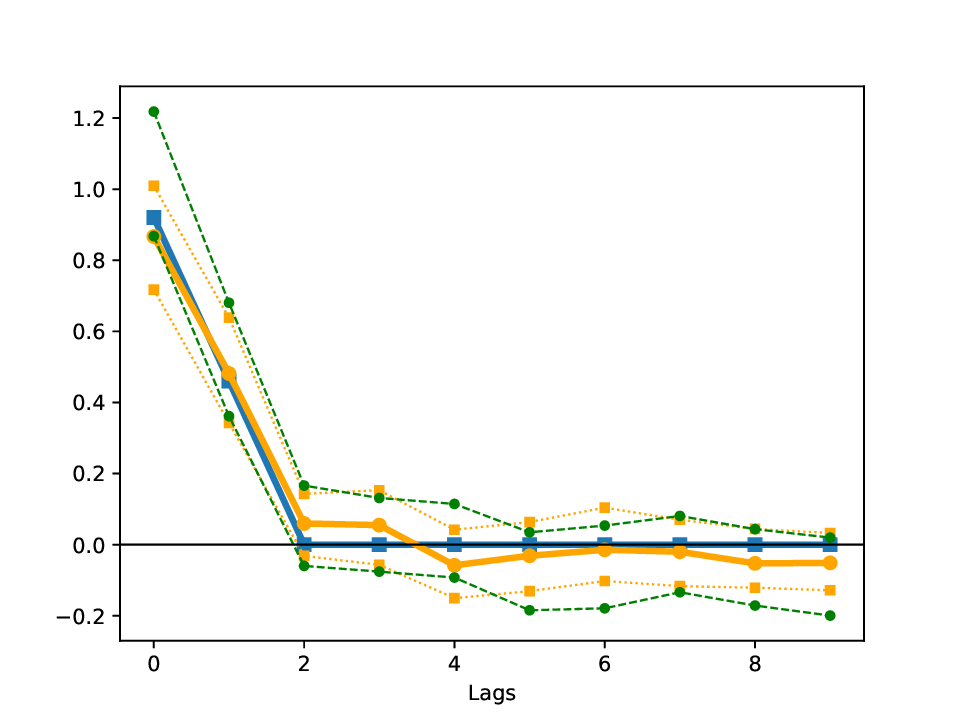}
  \caption{The TMES $\delta(h)$ at lags $h=0,1,\ldots,9$ (blue
    solid line with circle marks) and the empirical TMES $\widehat{\delta}(h)$ at lags
    $h=0,1,\ldots, 9$ (orange solid line with square marks) of
    $(X_t^{(2}, Y_t^{(2)})$ along with the bootstrapped $90\%$-confidence interval
    (orange dotted line) and the simulated $90\%$-confidence interval
    (green dashed line). }
  \label{fig:armagcompare}
\end{figure}

\begin{figure}[htbp]
  \centering
  \begin{subfigure}[b]{0.49\textwidth}
  \includegraphics[width=\textwidth]{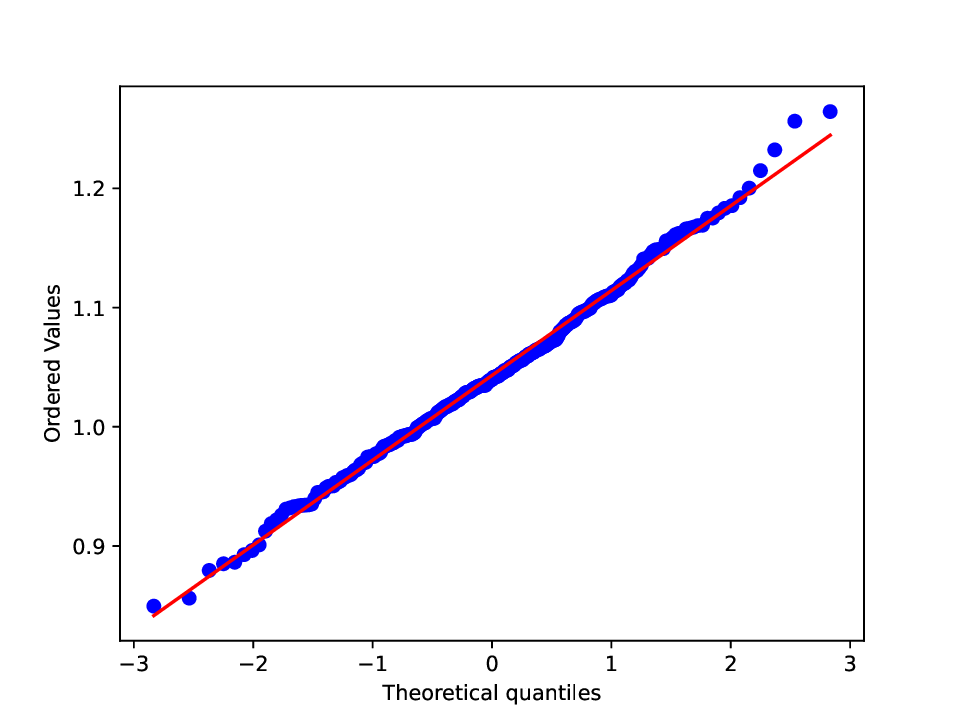}
  \caption{}
  \end{subfigure}
  \hfill
  \begin{subfigure}[b]{0.49\textwidth}
  \includegraphics[width=\textwidth]{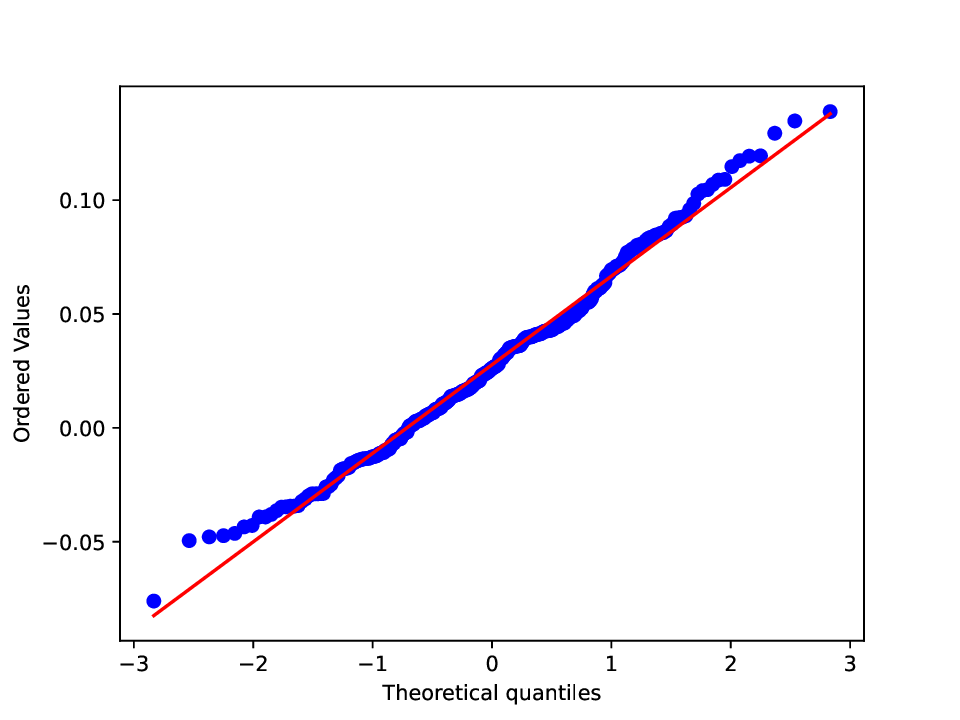}
  \caption{}
  \end{subfigure}
  \vskip\baselineskip
  \begin{subfigure}[b]{0.49\textwidth}
  \includegraphics[width=\textwidth]{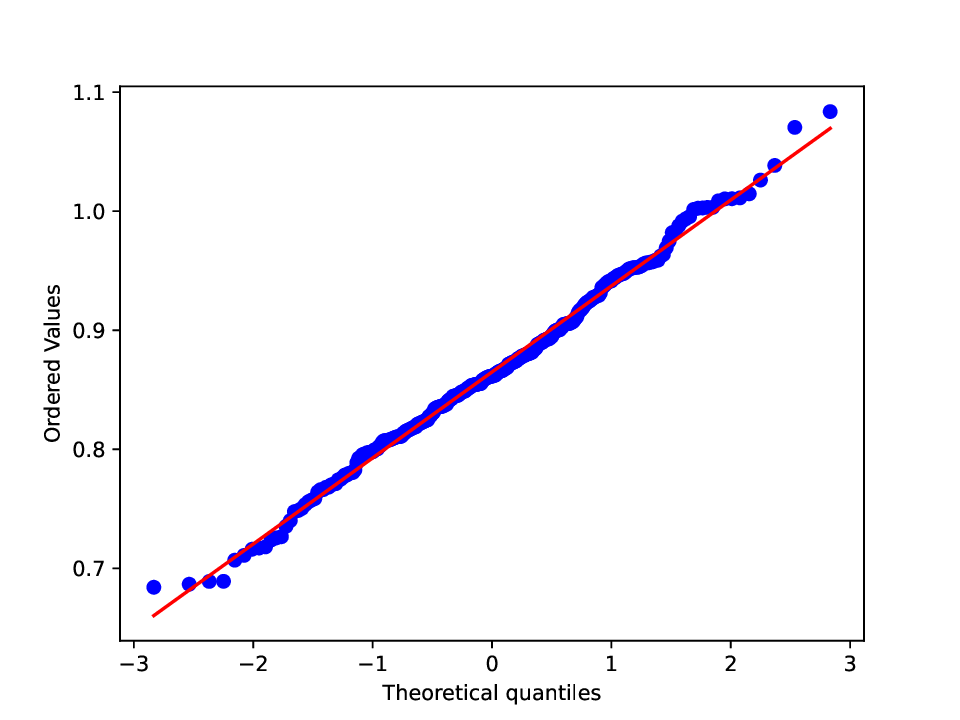}
  \caption{}
  \end{subfigure}
  \hfill
  \begin{subfigure}[b]{0.49\textwidth}
    \includegraphics[width=\textwidth]{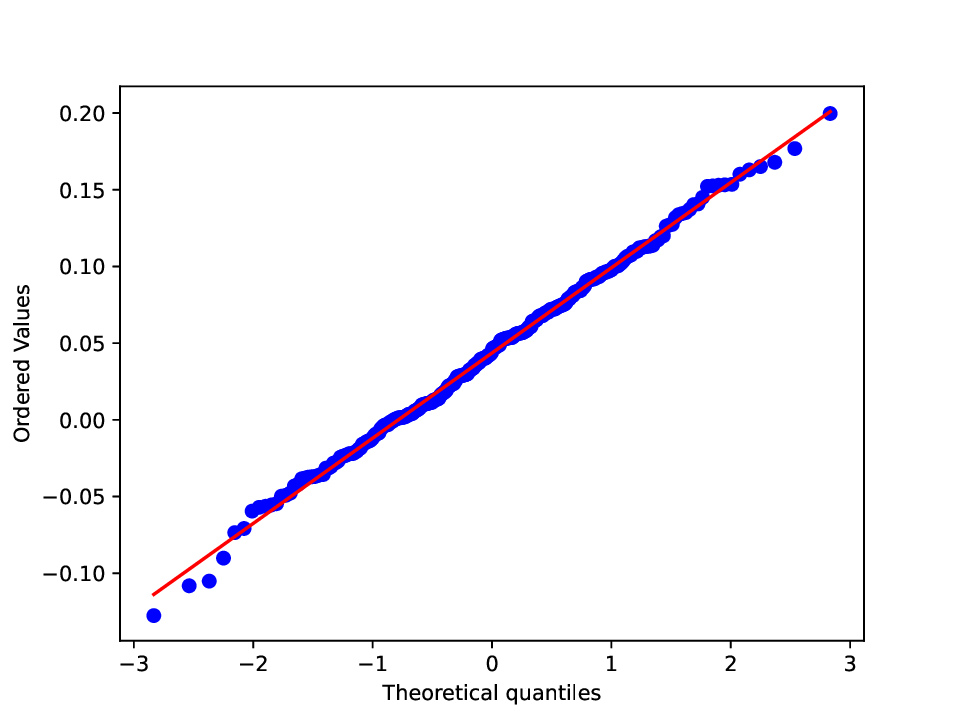}
    \caption{}
  \end{subfigure}
  \caption{QQ-plots of the bootstrapped and simulation distributions
    against the standard normal distribution for $(X_t^{(2)}, Y_t^{(2)})$. {\bf Up-left}: The
    simulated distribution for $\widehat{\delta}(0)$. {\bf Up-right}: The
    simulated distribution for $\widehat{\delta}(3)$. {\bf Low-left}: The
    bootstrapped distribution of $\delta^{\star}(0)$. {\bf Low-right}: The
    bootstrapped distribution of $\delta^{\star}(3)$. }
  \label{fig:armagqq}
\end{figure}

\begin{figure}[htbp]
\centering  
\includegraphics[scale=0.53]{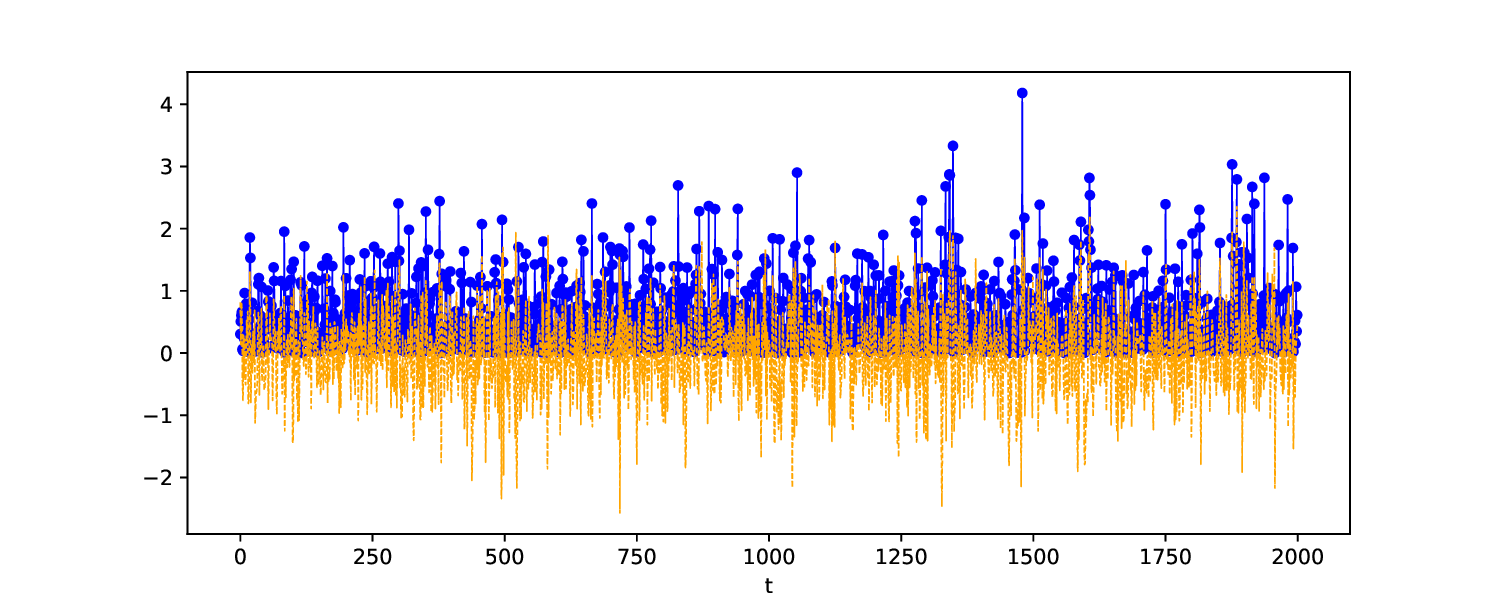}
\caption{A simulation of $(X_t^{(3)}, Y_t^{(3)})$. The
blue solid line represents $(X_t^{(3)})$ and the orange dotted line represents $(Y_t^{(3)})$.}
\label{fig:garchtsamplepath}
\end{figure}

\begin{figure}[htbp]
    \includegraphics[width=0.5\linewidth]{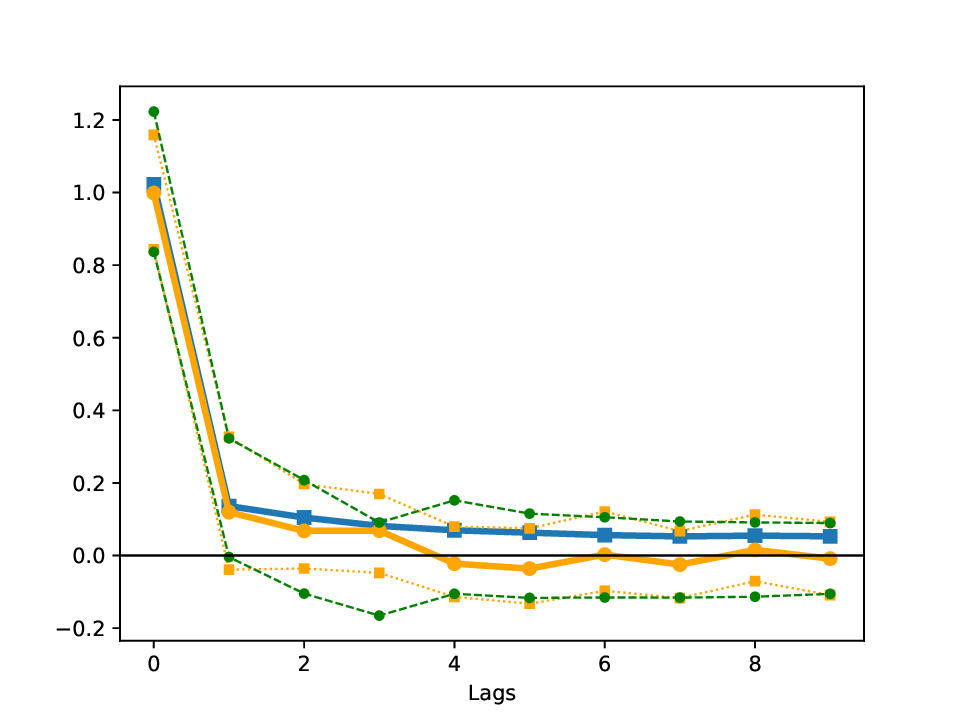}
  \caption{ The TMES $\delta(h)$ at lags $h=0,1,\ldots,9$ (blue
    solid line with circle marks) and the empirical TMES $\widehat{\delta}(h)$ at lags
    $h=0,1,\ldots, 9$ (orange solid line with square marks) of
    $(X_t^{(3}, Y_t^{(3)})$ along with the bootstrapped $90\%$-confidence interval
    (orange dotted line) and the simulated $90\%$-confidence interval
    (green dashed line). }
  \label{fig:garchtcompare}
\end{figure}

\begin{figure}[htbp]
  \centering
  \begin{subfigure}[b]{0.49\textwidth}
  \includegraphics[width=\textwidth]{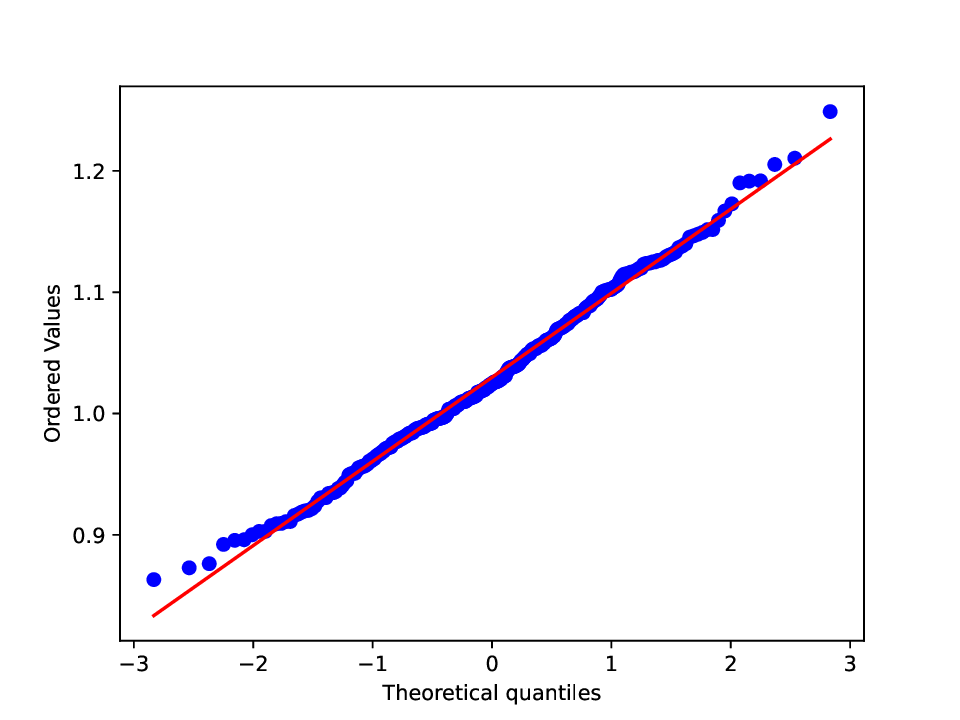}
  \end{subfigure}
  \hfill
  \begin{subfigure}[b]{0.49\textwidth}
  \includegraphics[width=\textwidth]{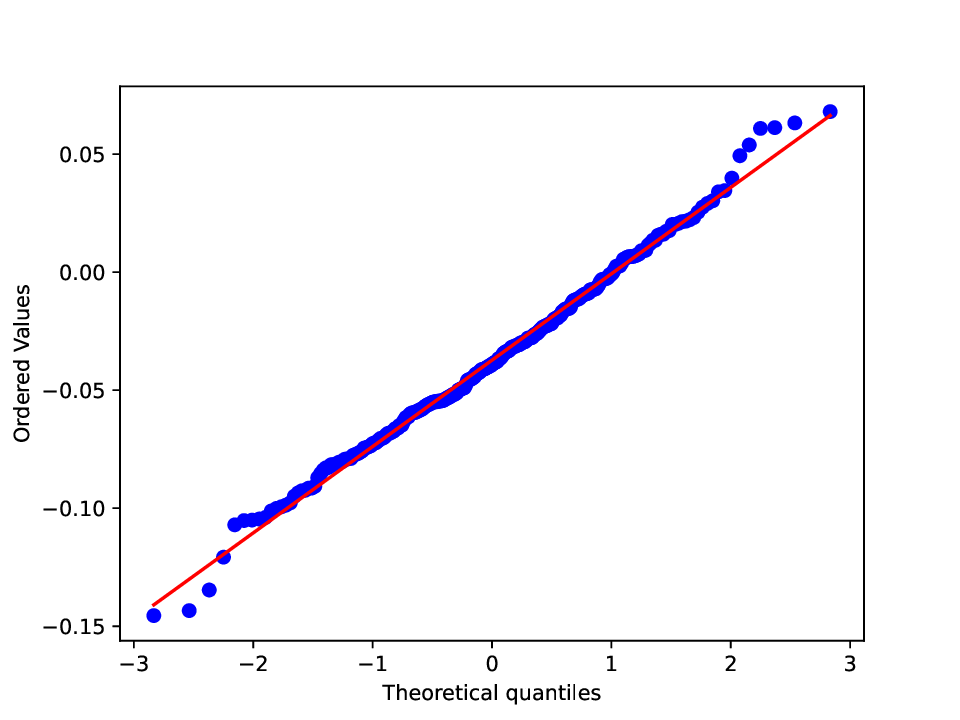}
  \end{subfigure}
  \vskip\baselineskip
  \begin{subfigure}[b]{0.49\textwidth}
  \includegraphics[width=\textwidth]{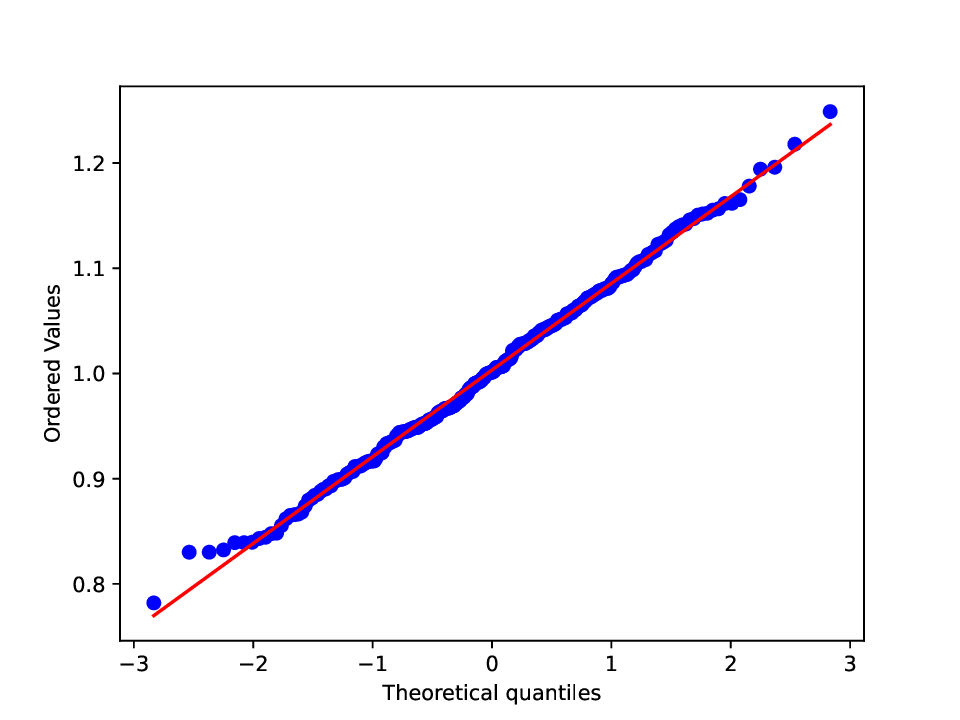}
  \end{subfigure}
  \hfill
  \begin{subfigure}[b]{0.49\textwidth}
    \includegraphics[width=\textwidth]{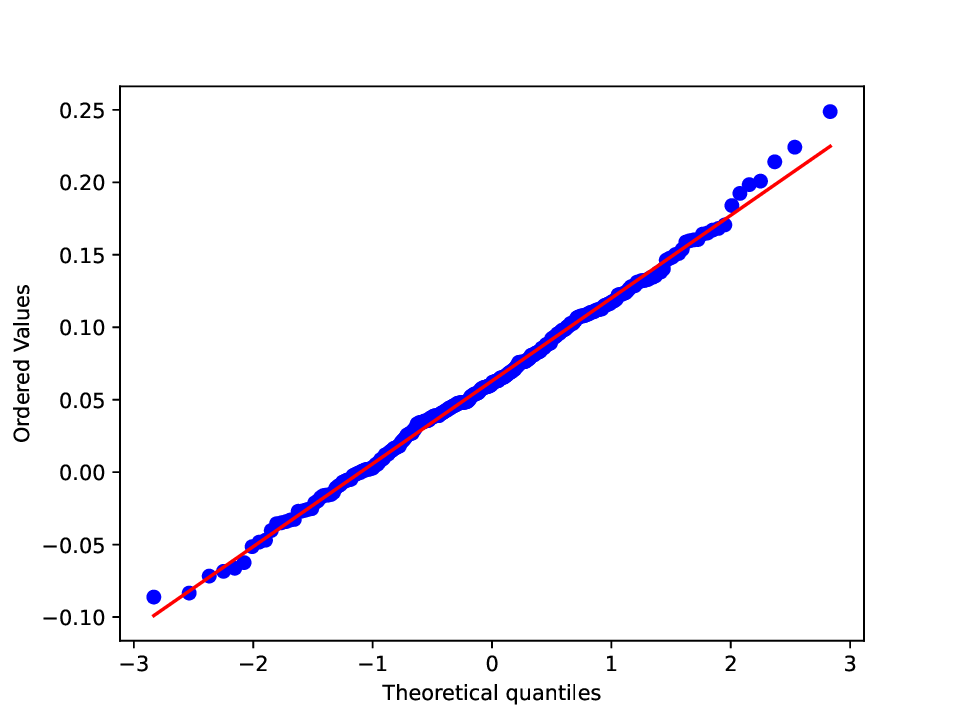}
  \end{subfigure}
  \caption{QQ-plots of the bootstrapped and simulation distributions
    against the standard normal distribution for $(X_t^{(3)}, Y_t^{(3)})$. {\bf Up-left}: The
    simulated distribution for $\widehat{\delta}(0)$. {\bf Up-right}: The
    simulated distribution for $\widehat{\delta}(3)$. {\bf Low-left}: The
    bootstrapped distribution of $\delta^{\star}(0)$. {\bf Low-right}: The
    bootstrapped distribution of $\delta^{\star}(3)$. }
  \label{fig:garchtqq}
\end{figure}

\section{Real data analyses}\label{sec:realdataanalyses}
Recall that the centered TMES at lag $h\ge 0$, \[\delta_0(h) = \lim_{u\to \infty} \E[X_t \mid Y_{t-h} >u] - \E[X_t]\,.\] If $\delta_0 (h)=0$
it signifies that the systemic event $\{ a_{m_n}^{-1} \mathbf{Y}_{t-h} \in A \}$ does not influence the random variable $X_t$. To test whether $\delta_0 (h) = 0$, we employ confidence intervals derived from the stationary bootstrap method outlined in Section~\ref{sec:bootstrap}. For the empirical TMES, we set $m_n=20$ using $\theta =1/10$ as the parameter in the bootstrap algorithm and generate 300 bootstrapped samples throughout this section. 

\subsection{Squared log-returns of COMEX gold continuous contracts'
  closing prices and US EPU index}

According to the arguments in \cite{baker2016measuring}, the US EPU
index denoted by $(Y_t^{(4)})$ reflects economic uncertainty caused by
policies in the United States, and thus, we will use 
the {\em US EPU index} (USEPU) to indicate the level of the systemic risk. Gold is a
popular hedging instrument against inflation or other economic
disruptions. The {\em squared log-returns of the COMEX gold continuous
contracts' closing prices} (SLCGP), denoted by $(X_t^{(4)})$, act as a
risk measure of gold prices. SLCGP has, in general, no relation with
USEPU unless economic uncertainty arises significantly. Therefore, the
centered empirical TMES $\widehat{\delta}_0 (h)= \widehat{\delta}(h) - n^{-1} \sum_{t=1}^n X_t$, $h\ge 0$ are zero unless a
systemic event happens.

The data includes $9,786$ daily observations of
SLCGP and USEPU from January 4, 2000 to April 30, 2024 covering the
2007-2009 financial crisis and the COVID-19 pandemic. In Figure~\ref{fig:financesample}, we present the values of USEPU and
SLCGP. By applying a moving window method, we calculate the centered empirical TMES's $\widehat{\delta}_0 (h)$ for
the segment of the data in each window of size $200$ and
$\widehat{\delta}_0 (h)$ is indexed by the end date of the corresponding
window. In Figure~\ref{fig:tmesfinance}, the centered empirical TMES's
$\widehat{\delta}_0 (h)$, $h=0,1,3,7$, for all the segments are shown and
 $\widehat{\delta}_0 (h)$ takes values close to zero during most of the whole
 period. When a systemic event happens, the behaviors of $\widehat{\delta}_0 (h)$ at different lags vary. 

 We take the data from April 7, 2008 to October 21, 2009 to examine
 the performance of $\widehat{\delta}_0 (h)$ under the $2007-2009$ financial
 crisis. In Figure~\ref{fig:finance2008}, we show the curve of
 $\widehat{\delta}_0 (h)$, $h=0,1,3,7$ with the corresponding
 $90\%$-confidence intervals. The zero value is included in the
 $90\%$-confidence intervals of $\widehat{\delta}_0 (h)$, $h=0,3,7$, but the
 zero value is excluded from the pointwise confidence intervals of
 $\widehat{\delta}_0 (1)$ around September, 2008 when the crisis happened.
 Unlike the centered empirical TMES around the $2007-2009$ financial
 crisis, the pointwise $90\%$-confidence intervals of the centered
 empirical TMES's $\widehat{\delta}_0 (h)$, $h=0,1,3,7$ in Figure~\ref{fig:finance2020} exclude the zero value in March 2019 when the temporary closures of high-contact businesses and the
 encouragement of remote work by other businesses started in the US.  
\begin{figure}[htbp]
  \centering
  \begin{subfigure}[b]{0.48\textwidth}
  \includegraphics[width=\textwidth]{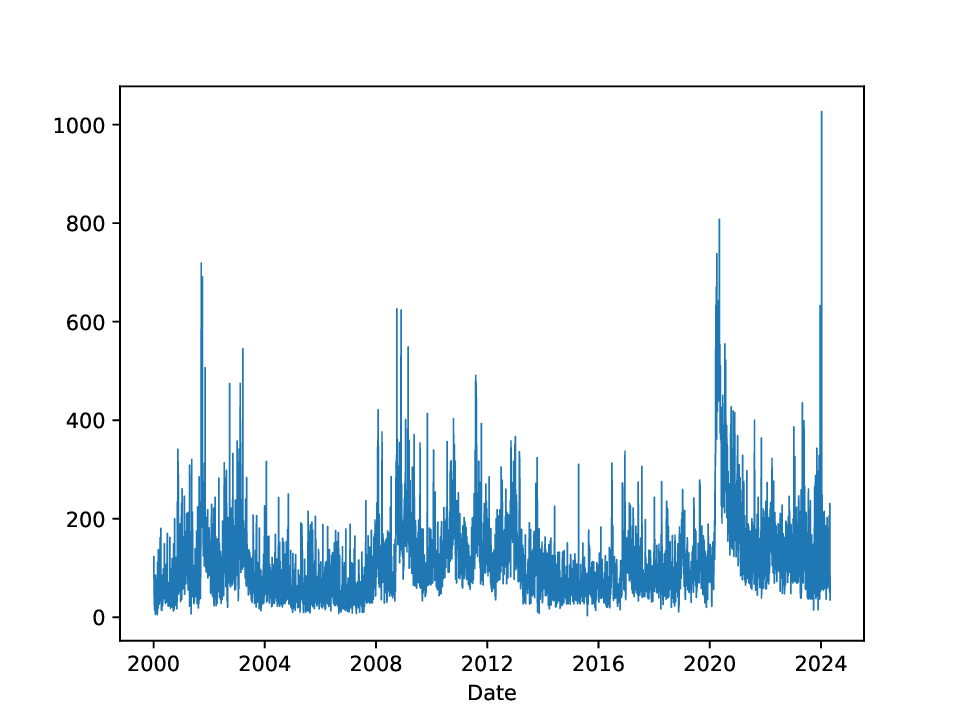}
  \end{subfigure}
  \hfill
  \begin{subfigure}[b]{0.48\textwidth}
  \includegraphics[width=\textwidth]{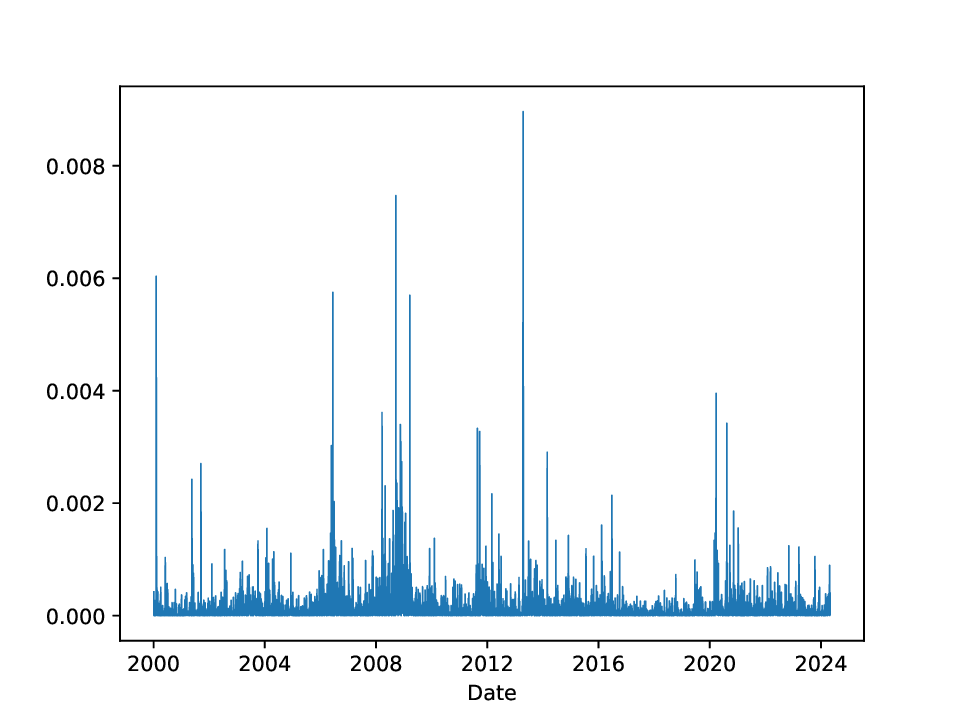}
  \end{subfigure}
  \caption{{\bf Left}: The US EPU index (USEPU) from January 4,
    2000 to April 30, 2024. {\bf Right}: Daily squared log-returns of
    the COMEX gold continuous contracts' closing prices (SLCGP) from January 4,
    2000 to April 30, 2024. }
  \label{fig:financesample}
\end{figure}

\begin{figure}[htbp]
  \centering
   \includegraphics[width=0.8 \textwidth]{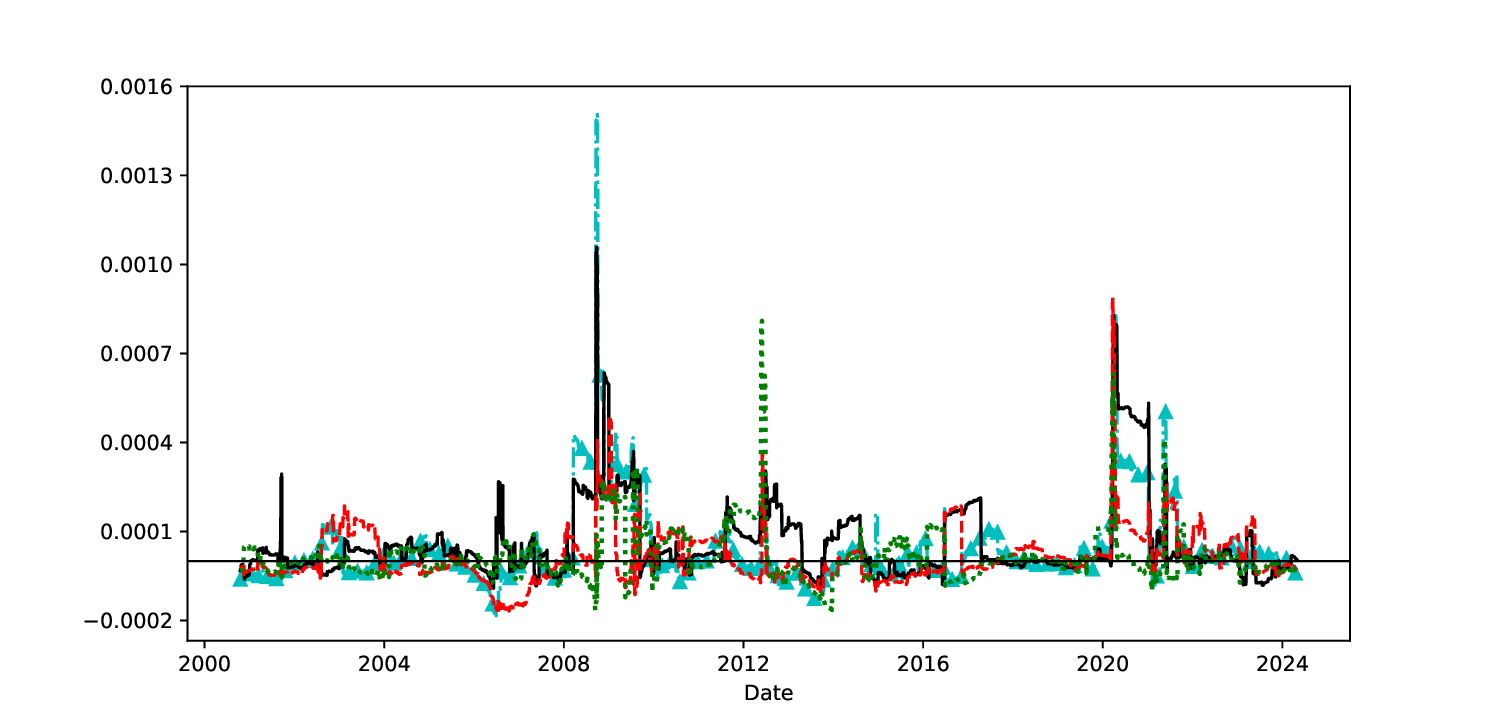}
  \caption{The centered empirical TMESs' $ \widehat{\delta}_0 (0)$ (black
    solid line), $\widehat{\delta}_0 (1)$ (cyan dash-dot line with triangular
    markers), $\widehat{\delta}_0 (3)$ (red dashed line), and
    $\widehat{\delta}_0 (7)$ (green dotted line) for $(X_t^{(4)}, Y_t^{(4)})$,
    where $X_t^{(4)}$ represents USEPU and $Y_t^{(4)}$ represents SLCGP.   }
  \label{fig:tmesfinance}
\end{figure}

\begin{figure}[htbp]
  \centering
  \begin{subfigure}[b]{0.49\textwidth}
  \includegraphics[width=\textwidth]{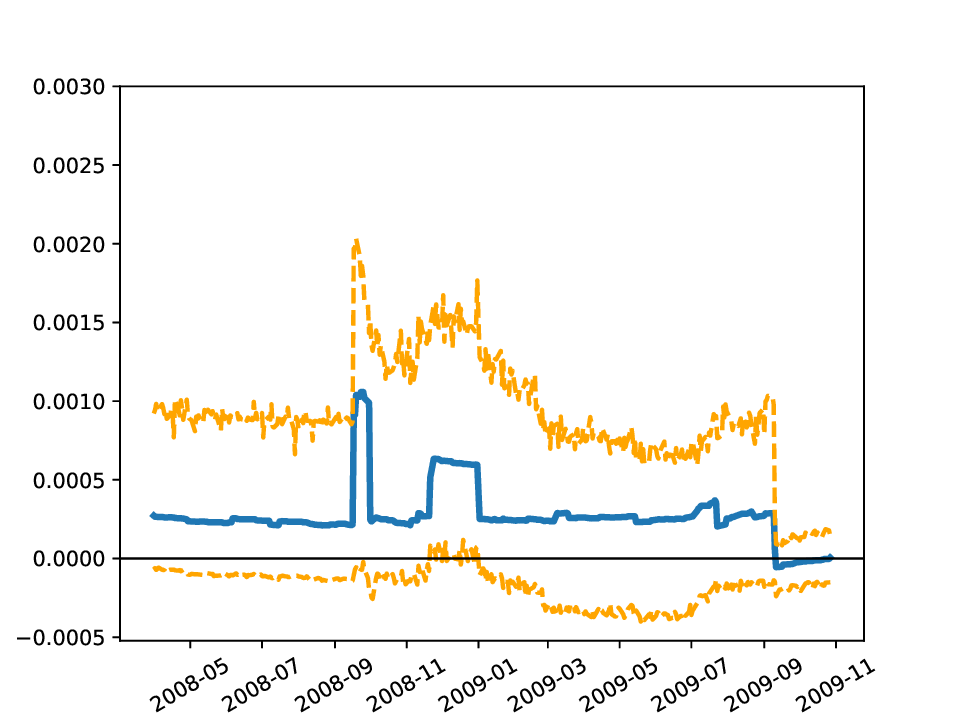}
  \end{subfigure}
  \hfill
  \begin{subfigure}[b]{0.49\textwidth}
  \includegraphics[width=\textwidth]{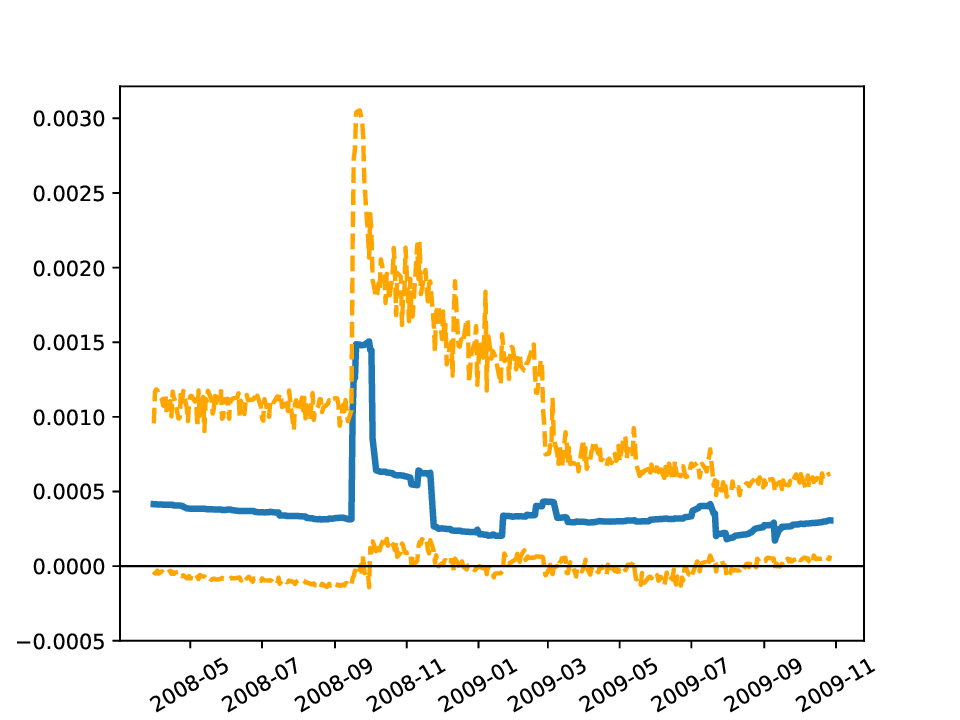}
  \end{subfigure}
\vskip\baselineskip
  \begin{subfigure}[b]{0.49\textwidth}
  \includegraphics[width=\textwidth]{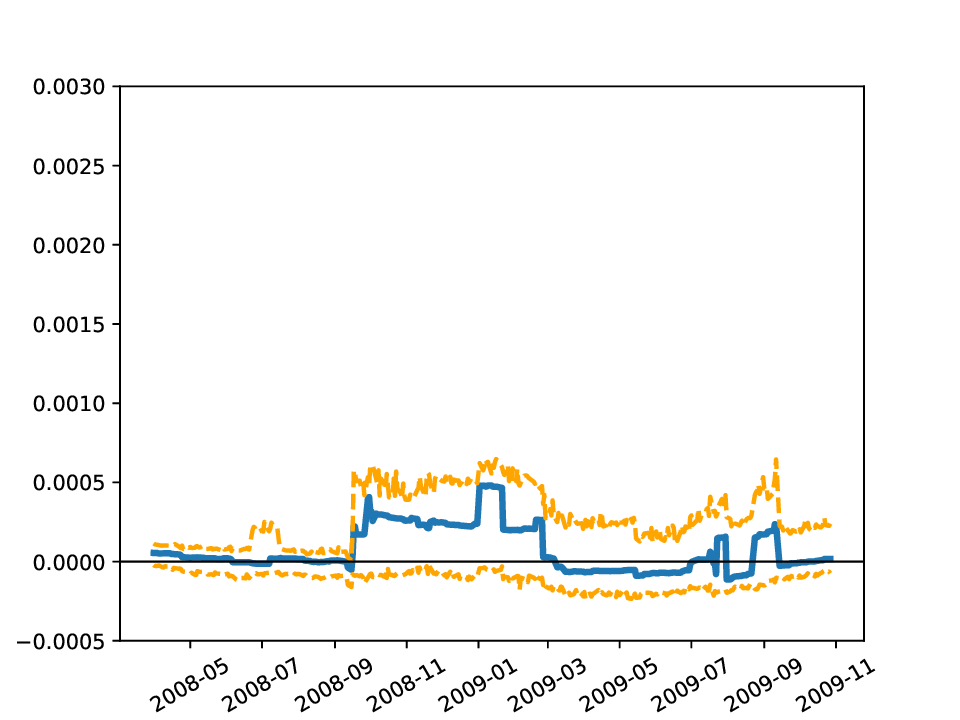}
  \end{subfigure}
  \hfill
  \begin{subfigure}[b]{0.49\textwidth}
  \includegraphics[width=\textwidth]{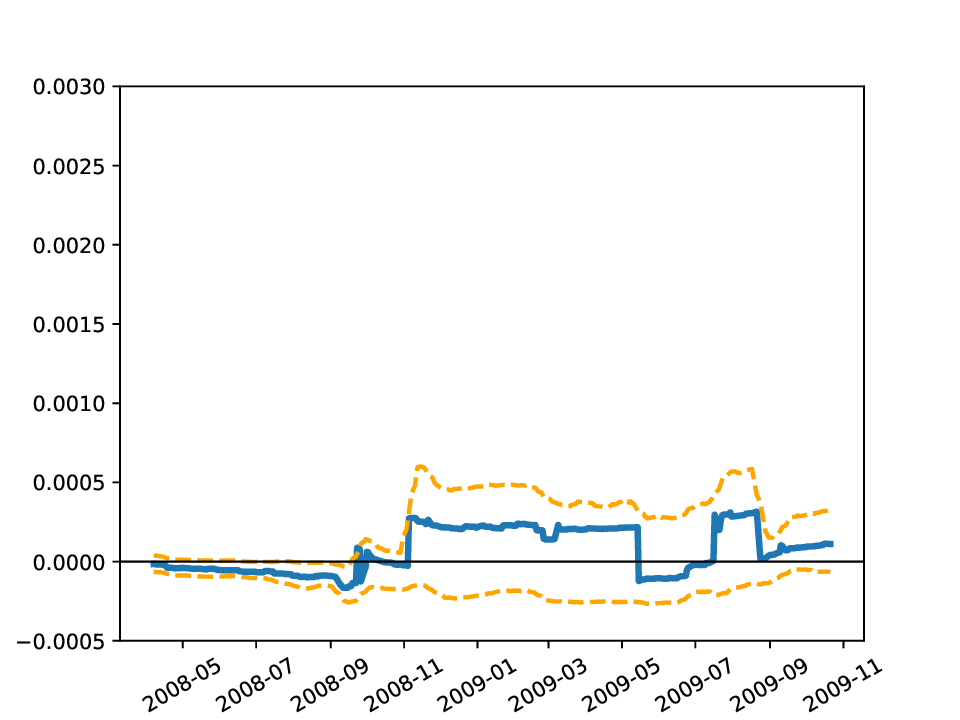}
  \end{subfigure}
\caption{The centered empirical TMESs' $ \widehat{\delta}_0 (0)$ (up-left), $\widehat{\delta}_0 (1)$ (up-right),
  $\widehat{\delta}_0 (3)$ (low-left), and
    $\widehat{\delta}_0 (7)$ (low-right) and their corresponding
    bootstrapped $90\%$-confidence intervals for $(X_t^{(4)}, Y_t^{(4)})$ around the
    2008 financial crisis (from April 7, 2008 to October 21, 2009),
    where $X_t$ represents USEPU and $Y_t$ represents SLCGP.}
 \label{fig:finance2008}
 \end{figure}
 
 \begin{figure}[htbp] 
  \centering
  \begin{subfigure}[b]{0.49\textwidth}
  \includegraphics[width=\textwidth]{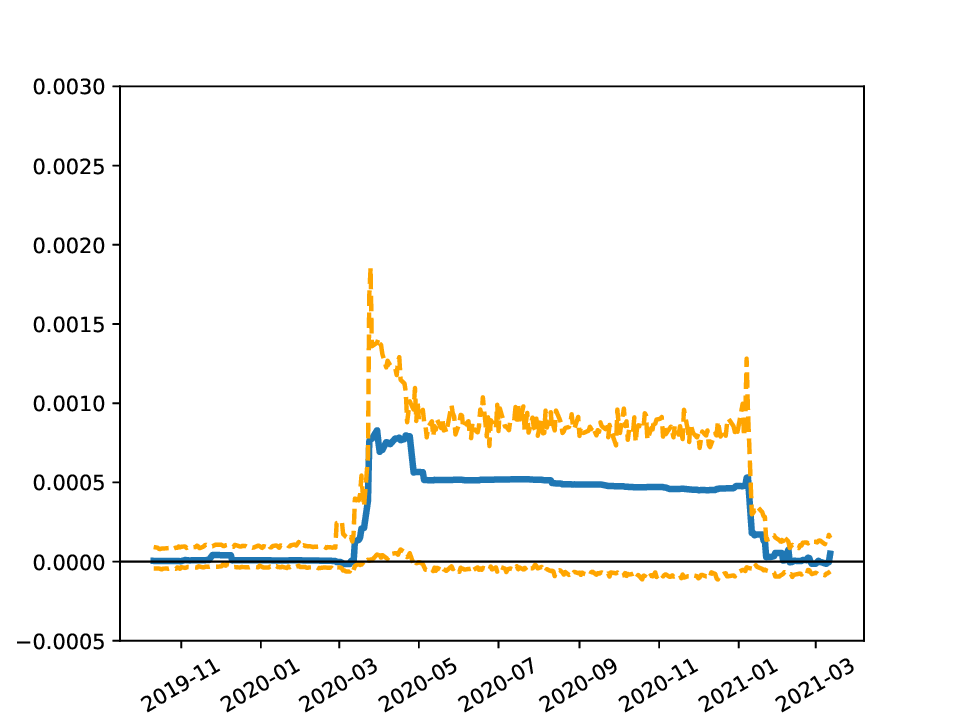}
  \end{subfigure}
  \hfill
  \begin{subfigure}[b]{0.49\textwidth}
  \includegraphics[width=\textwidth]{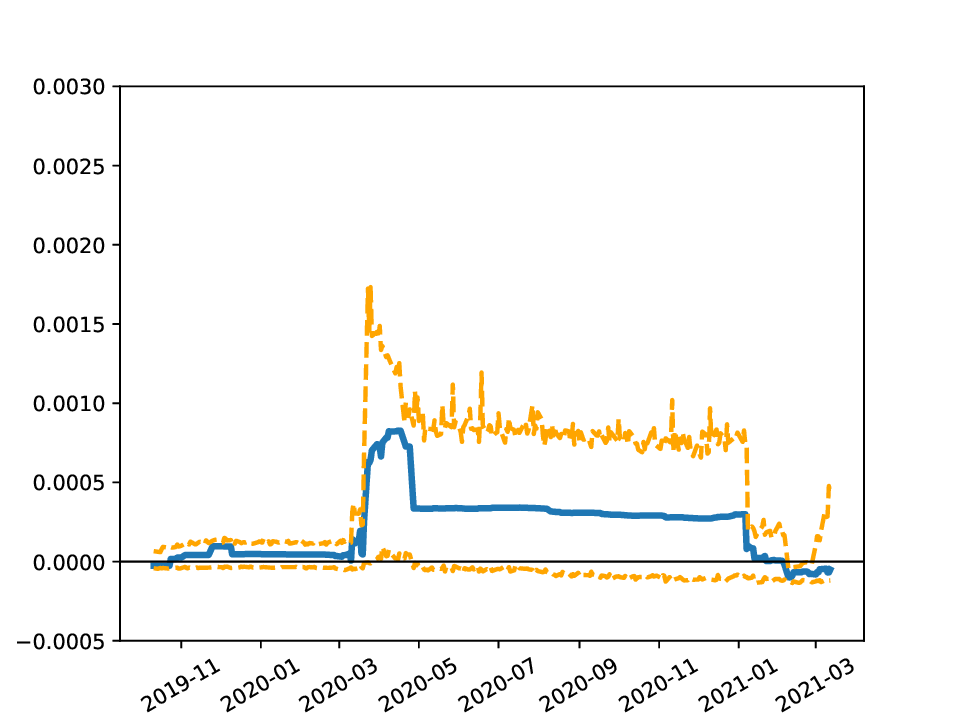}
\end{subfigure}
\vskip\baselineskip
  \begin{subfigure}[b]{0.49\textwidth}
  \includegraphics[width=\textwidth]{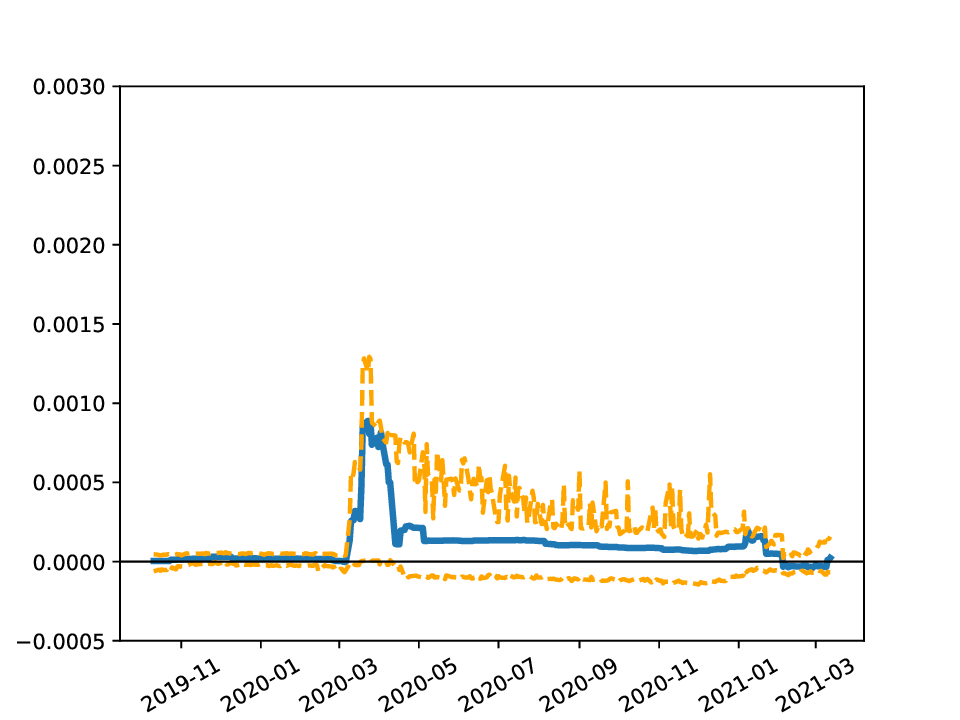}
  \end{subfigure}
  \hfill
  \begin{subfigure}[b]{0.49\textwidth}
  \includegraphics[width=\textwidth]{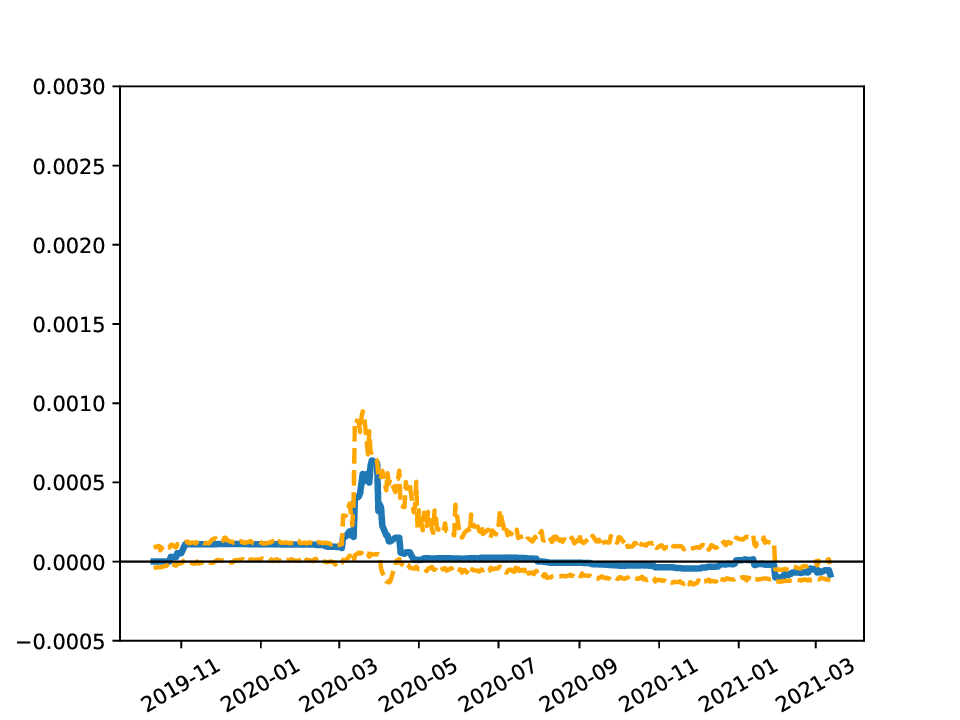}
  \end{subfigure}
\caption{The centered empirical TMESs' $ \widehat{\delta}_0 (0)$ (up-left), $\widehat{\delta}_0 (1)$ (up-right),
  $\widehat{\delta}_0 (3)$ (low-left), and
    $\widehat{\delta}_0 (7)$ (low-right) and their corresponding
    bootstrapped $90\%$-confidence intervals of $(X_t^{(4)}, Y_t^{(4)})$ around the
    COVID-19 pandemic (from October 18, 2019 to March 8, 2021),
    where $X_t^{(4)}$ represents USEPU and $Y_t^{(4)}$ represents SLCGP.}
\label{fig:finance2020}
  \end{figure}
 
\subsection{Relationship between water level and precipitation in a
  city}

We consider the data of water levels and precipitation in Jinan,
Shandong Province, China; see Figure~\ref{fig:map} for the locations
of the water level station and the precipitation observation station. We take the
daily precipitation values in the summer months (June, July and August) from
$2006-2021$ as $(Y_t^{(5)})$ and the daily water levels in the summer months from
$2006-2021$ as $(X_t^{(5)})$. We also take the
daily precipitation values in the autumn months (September, October and November) from
$2006-2021$ as $(Y_t^{(6)})$ and the daily water levels in the autumn months from
$2006-2021$ as $(X_t^{(6)})$. The summer of Jinan has a high
temperature with a lot of rainfalls, while the autumn of Jinan is
cooler but also humid.

The precipitation and the evaporation determine the
water level in the same area. In Figure~\ref{fig:summer}, we present
$(X_t^{(5)})$, $(Y_t^{(5)})$ and the centered empirical TMES's $\widehat{\delta}_0 (h)$
for $(X_t^{(5)}, Y_t^{(5)})$ at lags $h=0,\ldots,30$ with the bootstrapped
$90\%$-confidence intervals. Due to the high temperature in summer, the
evaporation effect overtakes the precipitation effect and
$\widehat{\delta}_0 (h)$ decreases below zero. In Figure~\ref{fig:autumn}, we present
$(X_t^{(6)})$, $(Y_t^{(6)})$ and the centered empirical TMES's $\widehat{\delta}_0 (h)$
for $(X_t^{(6)}, Y_t^{(6)})$ at lags $h=0,\ldots,30$ with the bootstrapped
$90\%$-confidence intervals. As the temperature decreases with fewer rainfalls
 in autumn, the precipitation effect on the water level
becomes more significant in a relatively short period. The centered empirical
TMES's $\widehat{\delta}_0 (h)$ increases when $h\le 7$ and decrease when $h>7$.

\begin{figure}[htbp]
\centering  
\includegraphics[scale=0.5]{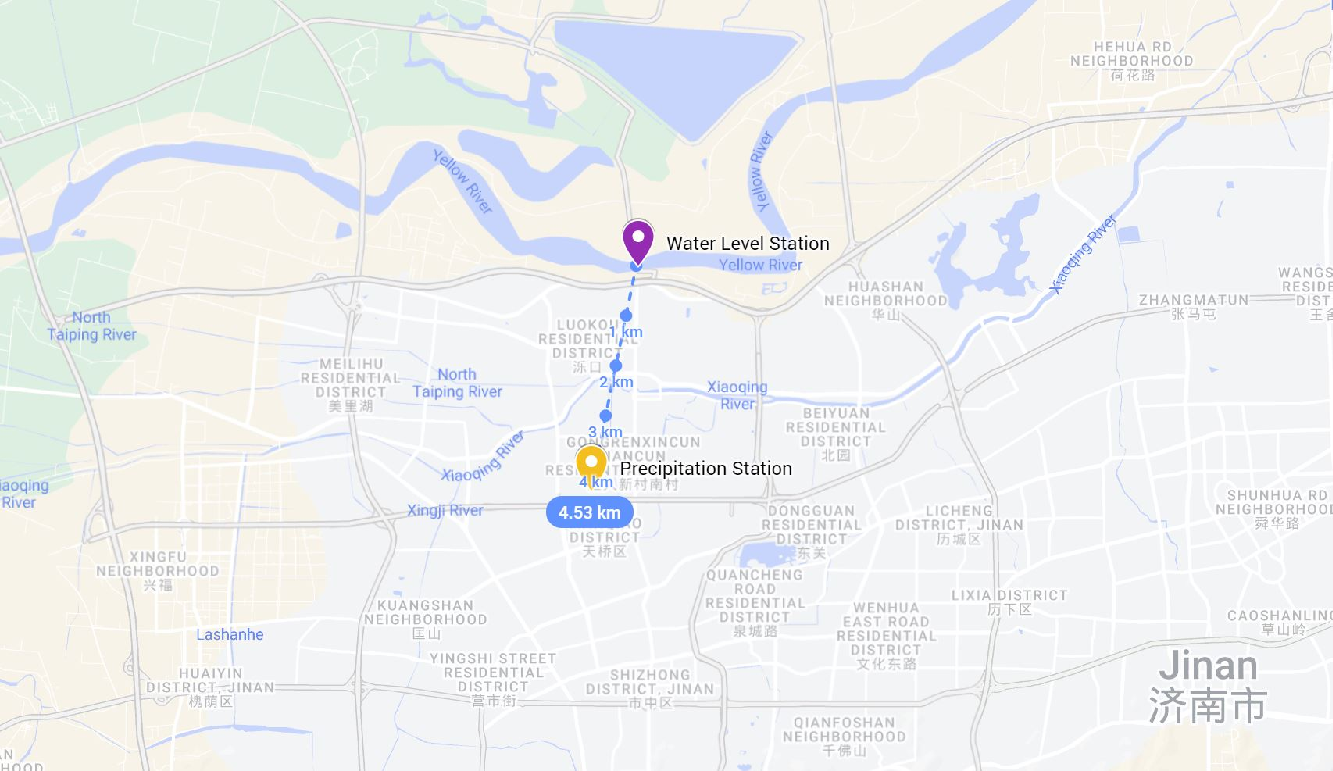}
\caption{Locations of the water level observation station and the
  precipitation observation station in Jinan, Shandong Province, China}
\label{fig:map}
\end{figure}

\begin{figure}[htbp]
  \centering
  \begin{subfigure}[b]{0.45\textwidth}
  \includegraphics[width=\textwidth]{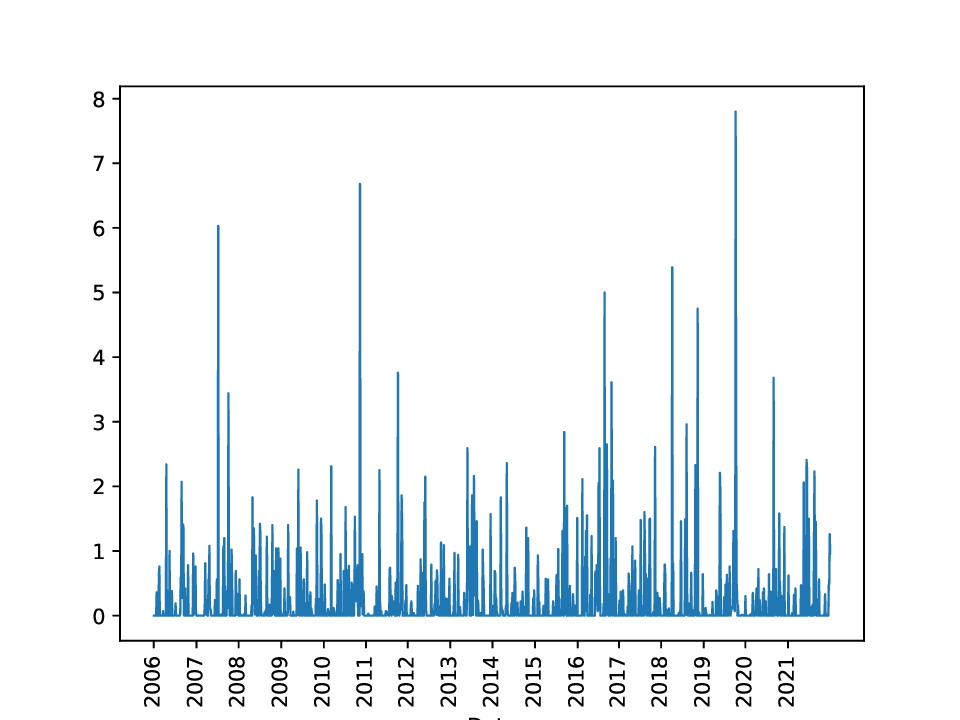}
  \end{subfigure}
  \hfill
  \begin{subfigure}[b]{0.45\textwidth}
  \includegraphics[width=\textwidth]{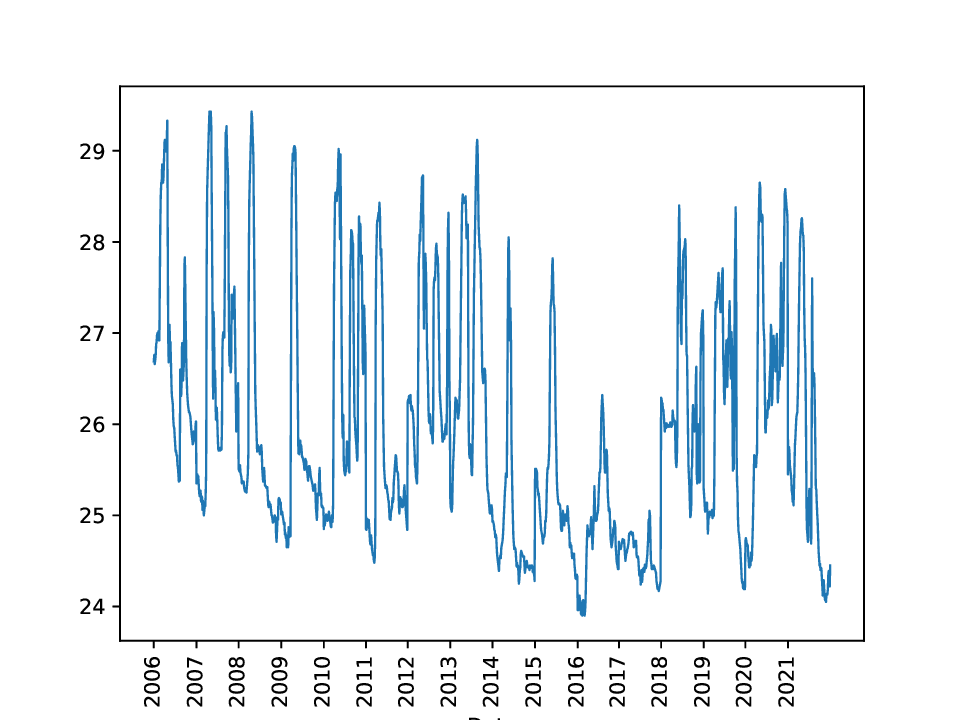}
  \end{subfigure}
\\
  \vfill
  \begin{subfigure}[b]{0.5\textwidth}
  \includegraphics[width=\textwidth]{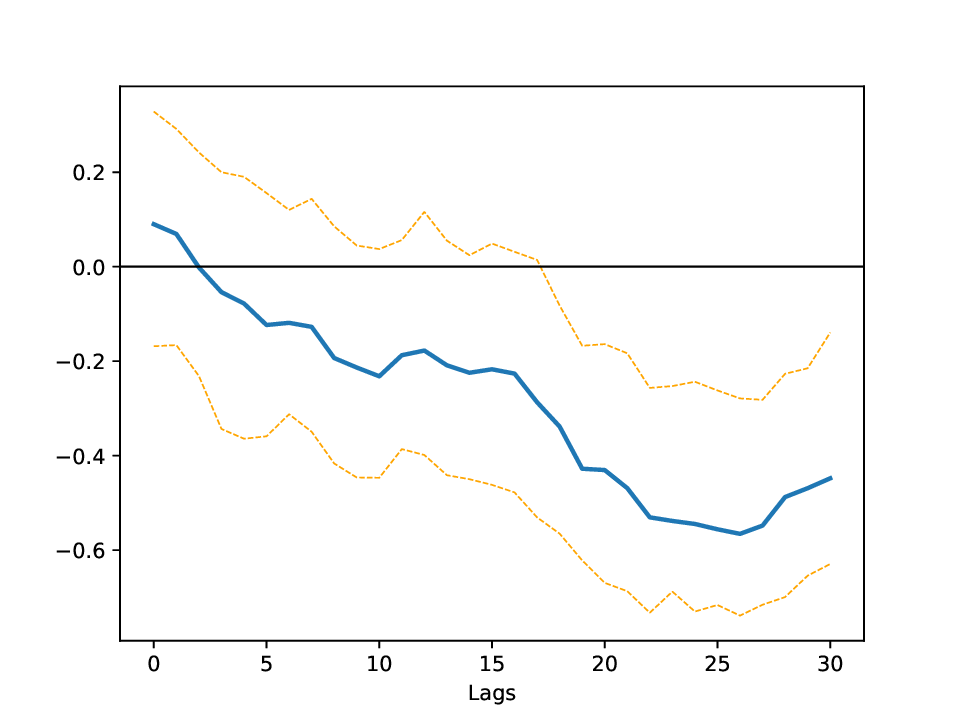}
  \end{subfigure}
  \caption{{\bf Up-left}: The daily water levels in the summer months from
$2006-2021$, $(X_t^{(5)})$. {\bf Up-Right}: The daily precipitation values in the summer months from
$2006-2021$, $(Y_t^{(5)})$. {\bf Bottom}: The centered empirical TMES
$\widehat{\delta}_0 (h)$ at lags $h=0,\ldots, 30$ for $(X_t^{(5)}, Y_t^{(5)})$ with the
bootstrapped $90\%$-confidence intervals. }
  \label{fig:summer}
  \end{figure}
  
\begin{figure}[htbp]
  \centering
  \begin{subfigure}[b]{0.45\textwidth}
  \includegraphics[width=\textwidth]{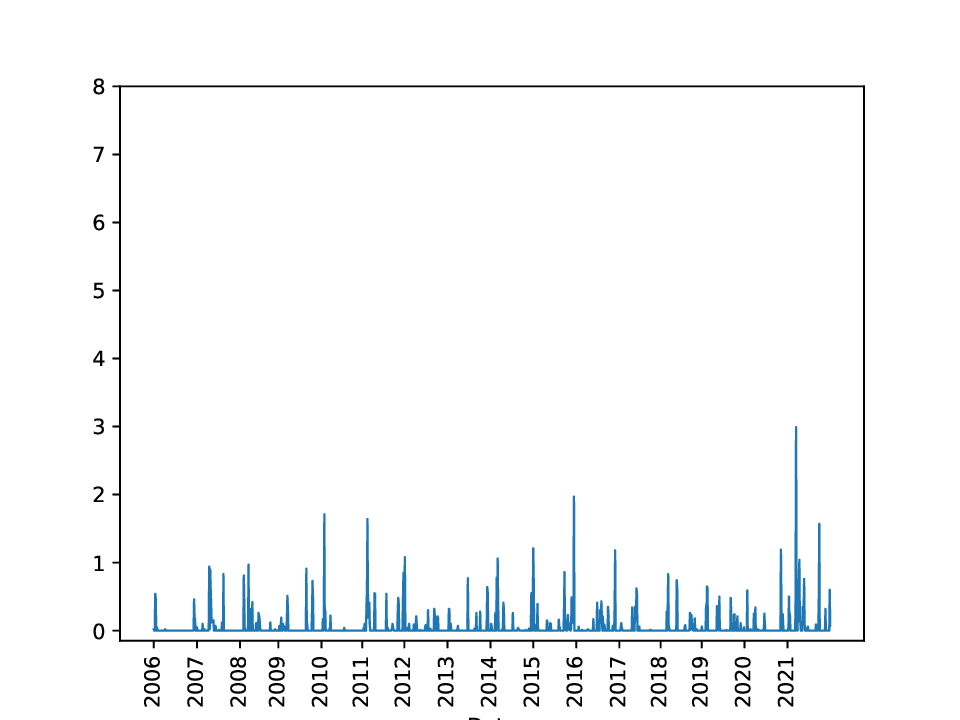}
  \end{subfigure}
  \hfill
  \begin{subfigure}[b]{0.45\textwidth}
  \includegraphics[width=\textwidth]{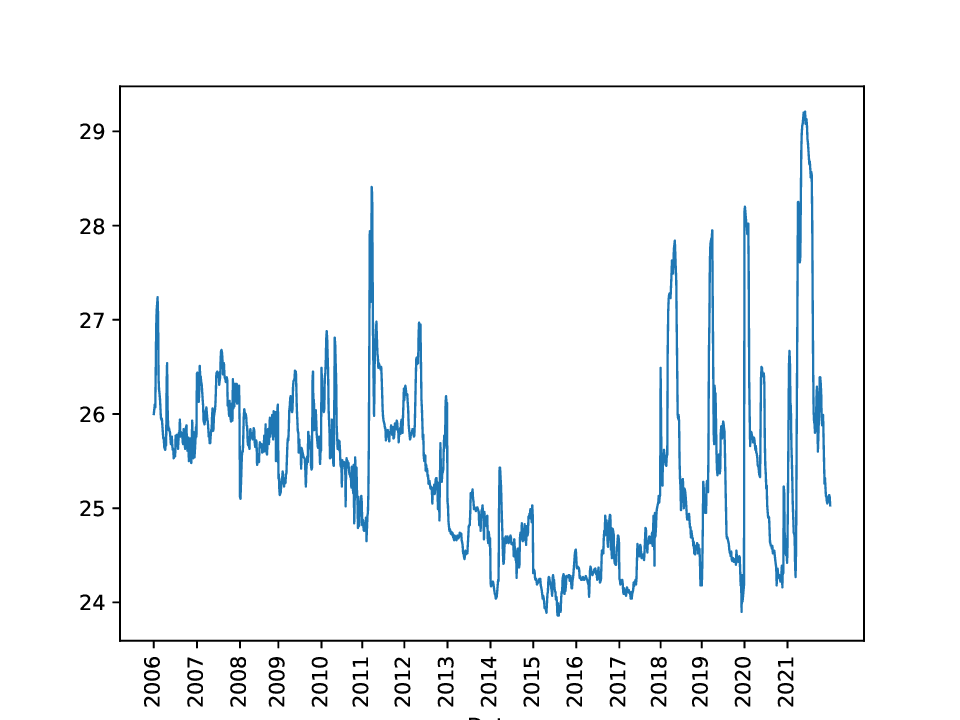}
  \end{subfigure}
\\
  \vfill
  \begin{subfigure}[b]{0.5\textwidth}
  \includegraphics[width=\textwidth]{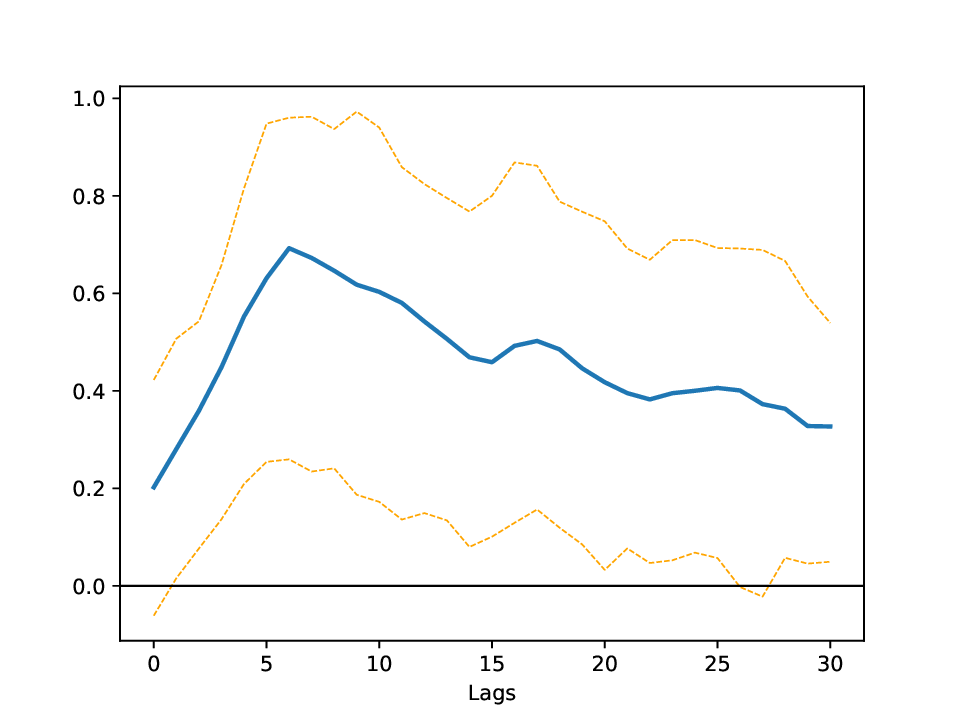}
  \end{subfigure}
  \caption{{\bf Up-left}: The daily water levels in the autumn months from
$2006-2021$, $(X_t^{(6)})$. {\bf Up-Right}: The daily precipitation
values in the autumn months from
$2006-2021$, $(Y_t^{(6)})$. {\bf Bottom}: The centered empirical TMES
$\widehat{\delta}_0 (h)$ at lags $h=0,\ldots, 30$ for $(X_t^{(6)}, Y_t^{(6)})$ with the
bootstrapped $90\%$-confidence intervals. }
  \label{fig:autumn}
  \end{figure}

\begin{appendix}
\section{Proof of Theorem \ref{thm:morvdef}}
We provide a theorem for a criterion for the convergence in
$M_O(S)$, which is similar in spirit to Theorem A.1 in
\cite{segers2017} or Theorem 2.2 and 2.3 in
\cite{billingsley1999}. The proof of Theorem~\ref{thm:mocriterion} is
a modification of Theorem A.1 in \cite{segers2017} and thus it is
omitted. 
\begin{theorem} \label{thm:mocriterion}
 Suppose that $\mathcal{A}$ is a $\pi$-system on $S$ satisfying the following two
 conditions: 
 \begin{itemize}
 \item[(C1)] There exists a decreasing sequence $(r_i)_{i\in \mathbb{N}}$ of
   positive scalars with $r_i \to 0$ as $i\to \infty$ such that for each
   $i$, there exists a neighborhood of the cone $\mathbb{C}$, say
   $N_i$, such that $N_i \in \mathbb{C}_{r_i}$ and $A\setminus N_i \in \mathcal{A}$ for
   all $A\in \mathcal{A}$. 
   \item[(C2)] Each open subset $G$ of $S$ with $\mathbb{C}$ not
     included in the closure of $G$ is a countable union of $\mathcal{A}$-sets.   
 \end{itemize}
 If $\mu_n (A) \to \mu(A)$ as $n\to \infty$ for all $A$ in $\mathcal{A}$, then $\mu_n \to
 \mu$ in $M_O(S)$ as $n\to \infty$. 
\end{theorem}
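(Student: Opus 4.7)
The plan is to verify $\mu_n \to \mu$ in $M_O(S)$ via statement~(4) of the Portmanteau theorem, namely $\mu_n^{(r)} \to \mu^{(r)}$ in $M_b(S\setminus \mathbb{C}_r)$ for all but at most countably many $r>0$. Since $\mu \in M_O(S)$, the measure $\mu^{(r)}$ is finite for every $r>0$, and the exceptional countable set consists of those $r$ for which $\mu(\partial \mathbb{C}_r)>0$. Fixing any admissible $r$, the problem reduces to a classical weak-convergence question for finite Borel measures on the complete separable metric space $S\setminus \mathbb{C}_r$, which can be tackled by the usual Portmanteau strategy: establish $\liminf_{n\to\infty}\mu_n^{(r)}(G) \ge \mu^{(r)}(G)$ for every open $G\subset S\setminus \mathbb{C}_r$ together with convergence of total masses.

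The next step is to exploit the two hypotheses on $\mathcal{A}$. I would first use (C1) to show that, for any admissible $r>0$, one can choose $i$ large enough with $r_i<r$ so that every $A\in\mathcal{A}$ contains a truncation $A\setminus N_i\in\mathcal{A}$ bounded away from $\mathbb{C}$, lying inside $S\setminus \mathbb{C}_{r_i}$. This makes the hypothesis $\mu_n(B)\to\mu(B)$ applicable to sets whose entire mass is captured outside $\mathbb{C}_r$, so that convergence transfers cleanly from $\mu_n$ to $\mu_n^{(r)}$. I would then use (C2) to express any open $G\subset S\setminus \mathbb{C}_r$ (whose closure avoids $\mathbb{C}$) as a countable union $G=\bigcup_k A_k$ with $A_k\in\mathcal{A}$. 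Since $\mathcal{A}$ is a $\pi$-system, every finite intersection $A_{k_1}\cap \cdots \cap A_{k_j}$ remains in $\mathcal{A}$, so inclusion–exclusion expresses $\mu_n(A_1\cup\cdots\cup A_m)$ as an alternating sum of quantities to which the hypothesis applies.

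With these tools in place, the lower-semicontinuous bound would be established by approximating $G$ from within by $B_m = A_1\cup \cdots\cup A_m$, passing to the limit $n\to\infty$ term-by-term in the inclusion–exclusion expansion, and then letting $m\to\infty$ by monotone convergence of $\mu(B_m)\uparrow\mu(G)$. The total-mass convergence follows from applying the same reasoning to $G=S\setminus \mathbb{C}_r$ and combining the two $\liminf$ inequalities (on $G$ and on its complement inside $S\setminus \mathbb{C}_r$), using the boundary condition $\mu(\partial \mathbb{C}_r)=0$ to upgrade them to equalities. The standard Portmanteau theorem for finite measures in $M_b(S\setminus \mathbb{C}_r)$ then yields $\mu_n^{(r)} \to \mu^{(r)}$, completing the proof via the equivalence~(4).

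The main obstacle I expect is the transition from pointwise convergence on a $\pi$-system to convergence on open sets: $\pi$-systems are not closed under unions, so the inclusion–exclusion step must be controlled uniformly in $n$ and $m$, and one has to verify that the truncation by $N_i$ afforded by (C1) is compatible with the intersections appearing in that expansion. A subsidiary technicality is that the approximating sets produced by (C2) may have boundaries carrying positive $\mu$-mass; circumventing this requires choosing the radii $r_i$ in (C1) to avoid the (at most countable) set of problematic levels, and exploiting the freedom that (C2) gives only a countable cover rather than a particular one. This is precisely where the argument of Theorem~A.1 in \cite{segers2017} needs its most delicate bookkeeping, and the modification here amounts to threading the partial scalar multiplication and the cone $\mathbb{C}$ through that bookkeeping in the $M_O$-setting of \cite{lindskog2014}.
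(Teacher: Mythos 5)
The paper does not actually write out a proof of this theorem: it declares it ``a modification of Theorem A.1 in \cite{segers2017}'' and omits it. Your outline follows exactly the template that reference (and Billingsley's Theorems 2.2--2.3) uses: reduce via statement (4) of the Portmanteau theorem to weak convergence of the finite restrictions $\mu_n^{(r)}\to\mu^{(r)}$ in $M_b(S\setminus\mathbb{C}_r)$ for all continuity radii $r$, then prove $\liminf_n\mu_n(G)\ge\mu(G)$ for open $G$ bounded away from $\mathbb{C}$ by writing $G$ as a countable union of $\mathcal{A}$-sets (condition (C2)), applying inclusion--exclusion to finite subunions (legitimate because $\mathcal{A}$ is a $\pi$-system and all the sets involved have finite mass), and letting the finite unions increase to $G$. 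That part of your plan is correct and is the intended route.

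The genuine gap is in the total-mass step. You assert that convergence of $\mu_n(S\setminus\mathbb{C}_r)$ ``follows from applying the same reasoning to $G=S\setminus\mathbb{C}_r$ and combining the two $\liminf$ inequalities (on $G$ and on its complement inside $S\setminus\mathbb{C}_r$).'' The complement of $S\setminus\mathbb{C}_r$ inside $S\setminus\mathbb{C}_r$ is empty, and the $\pi$-system argument applied to $G=S\setminus\mathbb{C}_r$ yields only the one-sided bound $\liminf_n\mu_n(S\setminus\mathbb{C}_r)\ge\mu(S\setminus\mathbb{C}_r)$. The standard trick of complementing an open set to convert a $\liminf$ bound into a $\limsup$ bound presupposes that the total masses converge, which is exactly what is being proved here; if instead you complement inside a larger shell $S\setminus\mathbb{C}_{r'}$ with $r'<r$, you need the total mass of that shell, and the circularity merely moves to a smaller radius. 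In Billingsley's Theorem 2.2 this issue is invisible because all measures are probability measures; in the $M_O$ setting an upper bound must be supplied separately, and countable unions of $\mathcal{A}$-sets can only ever give lower bounds (an inner approximation). The only source of upper bounds in the hypotheses is condition (C1): it is designed to produce, for $r_i<r$, a \emph{single} set of the form $A\setminus N_i\in\mathcal{A}$ that contains $S\setminus\mathbb{C}_r$ while being contained in $S\setminus\mathbb{C}_{r_i}$ (up to the choice of $A$ and the reading of ``neighborhood''), so that $\limsup_n\mu_n(S\setminus\mathbb{C}_r)\le\lim_n\mu_n(A\setminus N_i)=\mu(A\setminus N_i)$, after which one sandwiches over continuity radii to obtain the matching upper bound. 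Your proposal assigns (C1) only a vague ``truncation'' role and never uses it to close this loop, so as written the argument establishes the lower semicontinuity bound but not weak convergence of the restrictions.
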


By defining $Z_t = (X_t, \mathbf{Y}_t)$ for $t\in
\mathbb{N}$, we will use a similar idea as in the proof of Theorem 4.1 in
\cite{segers2017}. If 
$(Z_t)_{t \in \mathbb{N}}$ is strictly stationary, we have $\mu^{(\infty)}
(\{(z_t)_{t\in \mathbb{N}} \in S^{(\infty)}: z_s \notin \mathbb{C}\})$ is a constant for all $s\in
\mathbb{N}$. Since $S^{(\infty)} \setminus \mathbb{C}^{(\infty)} = \cup_{s\in \mathbb{Z}} \{ (z_t)_{t\in \mathbb{N}}: z_s 
\notin \mathbb{C}\}$, we must have $\mu^{(\infty)} ((z_t)_{t\in \mathbb{N}}: z_s \notin
\mathbb{C}\}) >0$ if and only if $\mu^{(\infty)}$ is non-zero.

For integers $n\ge m \ge 0$, define the projections $Q_m: S^{(\infty)} \to
S^{(m)}$ and $Q_{n,m}: S^{(n)} \to S^{(m)}$ by  
\begin{align*}
Q_m ((z_t)_{t\in \mathbb{N}}) & = (z_1, \ldots, z_m)\,, \quad (z_t)_{t \in \mathbb{N}} \in S^{(\infty)}\,, \\
Q_{n,m} (z_{-n}, \ldots, z_n) & = (z_{-m}, \ldots, z_m)\,, \quad (z_{-n}, \ldots, z_n) \in S^{(n)}\,.
\end{align*}
The inverse images are denoted by $Q_m^{-1}$ and $Q_{n,m}^{-1} $,
inducing maps from the Borel $\sigma$-filed $\mathcal{B}(S^{(m)})$ to the Borel
$\sigma$-fileds $\mathcal{B}(S^{(\infty)})$ and $\mathcal{B}(S^{(n)})$, respectively. The projections
are continuous and $Q_m(\mathbb{C}^{(\infty)}) = Q_{n,m}(\mathbb{C}^{(n)})
= \mathbb{C}^{(m)}$. Define the measures 
\begin{align*}
\nu_u^{(m)} (\cdot)  = \frac{1}{V(u)} \p(u^{-1}(Z_t)_{t=1,\ldots,m} \in \cdot )\,,
 \quad \nu_u^{(\infty)} (\cdot )  = \frac{1}{V(u)} \p(u^{-1} (Z_t)_{t\in \mathbb{N}} \in \cdot )\,,  
\end{align*}
where $u>0$ and integer $m\ge 0$. Trivially, $(Z_t)_{t=1,\ldots, m} =
Q_m((Z_t)_{t\in \mathbb{N}}) = Q_{n,m} ((Z_t)_{t=1,\ldots,n})$ for integers $n\ge m \ge
0$. We obtain  
\begin{align*}
    \nu_u^{(n)} \circ Q_{n,m}^{-1} = \nu_u^{(\infty)} \circ Q_m^{-1} = \nu_u^{(m)}\,.
\end{align*}

The proof of Theorem~\ref{thm:morvdef} consists of two lemmas. 
\begin{lemma} \label{lem:morvdefif}
 Assume that the conditions of Theorem~\ref{thm:morvdef} are
 satisfied. If a random element $(X_t,
  \mathbf{Y}_t)_{t\in \mathbb{N}}$ is partially regularly varying with the cone
  $\mathbb{C}^{(\infty)}$ and index $\xi$, then for all $m\in \mathbb{N}$,
  the random vector $(X_t, \mathbf{Y}_t)_{t=1,\ldots,m}$ is partially
  regularly varying with the cone $\mathbb{C}^{(m)}$ and index
  $\xi$.
\end{lemma}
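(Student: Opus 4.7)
The plan is to define $\mu^{(m)} := \mu^{(\infty)} \circ Q_m^{-1}$ and show that this push-forward measure witnesses the partial regular variation of $(X_t, \mathbf{Y}_t)_{t=1,\ldots,m}$ with cone $\mathbb{C}^{(m)}$ and index $\xi$. The argument rests on two structural facts about the product metrics: continuity of $Q_m$, and the Lipschitz-type bound
\[ d_m\bigl(Q_m(z), \mathbb{C}^{(m)}\bigr) \le d_\infty\bigl(z, \mathbb{C}^{(\infty)}\bigr), \quad z \in S^{(\infty)}, \]
which is immediate from the weighted-series definition of $d_\infty$ (the first $m$ summands reproduce $d_m$) combined with the product structure $\mathbb{C}^{(\infty)} = \prod_{i=1}^{\infty} \mathbb{C}$.

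First I would verify that $\mu^{(m)} \in M_O(S^{(m)})$. The distance bound yields $Q_m^{-1}\bigl(S^{(m)} \setminus \mathbb{C}^{(m)}_r\bigr) \subset S^{(\infty)} \setminus \mathbb{C}^{(\infty)}_r$; since the right-hand side is bounded away from $\mathbb{C}^{(\infty)}$, the measure $\mu^{(\infty)}$ assigns it finite mass, so $\mu^{(m)}\bigl(S^{(m)} \setminus \mathbb{C}^{(m)}_r\bigr) < \infty$ for every $r > 0$.

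Next I would invoke the Portmanteau characterization stated just before Theorem~\ref{thm:morv} in its integral form: it suffices to show, for every $f \in \mathcal{C}_O(S^{(m)})$, that $\int f \, \dif \nu_u^{(m)} \to \int f \, \dif \mu^{(m)}$ as $u \to \infty$. Using $\nu_u^{(m)} = \nu_u^{(\infty)} \circ Q_m^{-1}$ and the analogous identity for $\mu^{(m)}$, this reduces via change of variables to
\[ \int (f \circ Q_m) \, \dif \nu_u^{(\infty)} \longrightarrow \int (f \circ Q_m) \, \dif \mu^{(\infty)}, \]
which is precisely the conclusion of the assumed $M_O$-convergence $\nu_u^{(\infty)} \to \mu^{(\infty)}$ evaluated on the test function $f \circ Q_m$.

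The single point requiring care—and the main (mild) obstacle—is verifying that $f \circ Q_m \in \mathcal{C}_O(S^{(\infty)})$. Continuity, non-negativity, and boundedness follow at once from the continuity of $Q_m$ and the assumed properties of $f$. For the vanishing property, if $f$ vanishes on $\mathbb{C}^{(m)}_r$, then the distance inequality shows that $d_\infty(z, \mathbb{C}^{(\infty)}) \le r$ forces $d_m(Q_m(z), \mathbb{C}^{(m)}) \le r$, so $f(Q_m(z)) = 0$; hence $f \circ Q_m$ vanishes on $\mathbb{C}^{(\infty)}_r$. Once this is in hand the argument closes immediately, and no deeper difficulty is expected beyond unpacking the product-metric geometry.
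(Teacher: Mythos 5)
Your proof is correct and takes essentially the same route as the paper: both define $\mu^{(m)} := \mu^{(\infty)} \circ Q_m^{-1}$ and conclude from $\nu_u^{(m)} = \nu_u^{(\infty)} \circ Q_m^{-1}$ that the pushforwards converge. The only difference is that the paper cites the continuous mapping theorem for $M_O$-convergence (Theorem 2.3 of \cite{lindskog2014}) as a black box, whereas you verify its hypotheses by hand --- the distance bound $d_m(Q_m(z), \mathbb{C}^{(m)}) \le d_\infty(z, \mathbb{C}^{(\infty)})$ and the check that $f \circ Q_m \in \mathcal{C}_O(S^{(\infty)})$ are exactly what that theorem packages.
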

\begin{proof}[Proof of Lemma~\ref{lem:morvdefif}]
An application of the continuous mapping theorem (Theorem 2.3 in \cite{lindskog2014}) implies that 
\begin{align*}
\nu_u^{(m)} = \nu_u^{(\infty )} \circ Q_m^{-1} \to \mu^{(\infty)} \circ Q_m^{-1} = \mu^{(m)}
\end{align*}
given that the limit $\nu_u^{(\infty)} \to \mu^{(\infty )}$ as $u\to \infty$ in $M_O(S^{(\infty)})$ holds by assumption. 
\end{proof}

\begin{lemma} \label{lem:morvdefonlyif}
 Assume that the conditions of Theorem~\ref{thm:morvdef} are
 satisfied. If for all $m\in \mathbb{N}$,
  random vector $(X_t, \mathbf{Y}_t)_{t=1,\ldots,m}$ is partially
  regularly varying with the cone $\mathbb{C}^{(m)}$ and index
  $\xi$, then the random element $(X_t,
  \mathbf{Y}_t)_{t\in \mathbb{N}}$ is partially regularly varying with the cone
  $\mathbb{C}^{(\infty)}$ and index $\xi$. 
\end{lemma}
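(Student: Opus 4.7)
The plan is to build the candidate limit measure $\mu^{(\infty)}$ from the consistent family $(\mu^{(m)})_{m \ge 1}$ and then invoke Theorem~\ref{thm:mocriterion} to upgrade finite-dimensional convergence to convergence in $M_O(S^{(\infty)})$.

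First I would prove consistency of the family. The identity $\nu_u^{(n)} \circ Q_{n,m}^{-1} = \nu_u^{(m)}$, combined with the continuity of the projection $Q_{n,m}$, the relation $Q_{n,m}^{-1}(S^{(m)} \setminus \mathbb{C}^{(m)}_r) \subset S^{(n)} \setminus \mathbb{C}^{(n)}_{r'}$ for some $r' > 0$, and the continuous mapping theorem for $M_O$-convergence (Theorem 2.3 in \cite{lindskog2014}), implies $\mu^{(n)} \circ Q_{n,m}^{-1} = \mu^{(m)}$ for all $n \ge m$. Next I would construct $\mu^{(\infty)}$ by exhaustion. For each $r > 0$ outside a countable exceptional set the restricted measures $\mu^{(m)}(\cdot \setminus \mathbb{C}^{(m)}_{r})$ are finite Borel measures on $S^{(m)}$; since pulling back through $Q_m^{-1}$ a cylinder bounded away from $\mathbb{C}^{(m)}$ yields a set bounded away from $\mathbb{C}^{(\infty)}$, a Carath\'eodory-type extension on the algebra of cylinder sets in $S^{(\infty)}\setminus\mathbb{C}^{(\infty)}_{r'}$ produces a finite Borel measure $\mu^{(\infty,r')}$ there. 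Letting $r' \downarrow 0$ along a countable sequence and gluing, which is consistent thanks to $\mu^{(\infty,r'')}\!\!\restriction_{S^{(\infty)}\setminus \mathbb{C}^{(\infty)}_{r'}} = \mu^{(\infty,r')}$ for $r'' < r'$, yields $\mu^{(\infty)} \in M_O(S^{(\infty)})$ satisfying $\mu^{(\infty)} \circ Q_m^{-1} = \mu^{(m)}$ for every $m$.

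To apply Theorem~\ref{thm:mocriterion}, I would take $\mathcal{A}$ as the family of cylinder sets $Q_m^{-1}(A)$ with $m \in \mathbb{N}$, $A \in \mathcal{B}(S^{(m)})$ bounded away from $\mathbb{C}^{(m)}$ and $\mu^{(m)}(\partial A) = 0$. This $\mathcal{A}$ is a $\pi$-system via $Q_m^{-1}(A) \cap Q_n^{-1}(B) = Q_n^{-1}(Q_{n,m}^{-1}(A) \cap B)$ for $n \ge m$, with the continuity-set property inherited through consistency. Condition (C1) would be checked with $r_i = 2^{-i}$ and $N_i = \mathbb{C}^{(\infty)}_{r_i}$, using that $Q_m^{-1}(A) \setminus \mathbb{C}^{(\infty)}_{r_i}$ still admits a cylinder representation in $\mathcal{A}$ after a boundary-adjustment. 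Condition (C2) follows from the separability of $S^{(\infty)}$ in the product topology: any open set whose closure avoids $\mathbb{C}^{(\infty)}$ admits a countable cover by basic cylinder sets lying in $\mathcal{A}$. For every $Q_m^{-1}(A) \in \mathcal{A}$, the identity $\nu_u^{(\infty)}(Q_m^{-1}(A)) = \nu_u^{(m)}(A)$ and the assumed $M_O$-convergence $\nu_u^{(m)} \to \mu^{(m)}$ at the continuity set $A$ yield $\nu_u^{(\infty)}(Q_m^{-1}(A)) \to \mu^{(\infty)}(Q_m^{-1}(A))$, and Theorem~\ref{thm:mocriterion} concludes $\nu_u^{(\infty)} \to \mu^{(\infty)}$ in $M_O(S^{(\infty)})$.

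The main obstacle is the explicit construction of $\mu^{(\infty)}$: the standard Kolmogorov extension theorem is formulated for finite (probability) measures, whereas each $\mu^{(m)}$ is typically infinite on $S^{(m)}\setminus\mathbb{C}^{(m)}$, so the extension must be carried out on the exhausting subsets $S^{(\infty)} \setminus \mathbb{C}^{(\infty)}_{r'}$ and then assembled. This requires careful tracking of the radii $r, r'$ under the projections to ensure that projective compatibility survives the passage from finite to infinite dimensions, and to guarantee that the glued measure is indeed Borel and finite on every $S^{(\infty)} \setminus \mathbb{C}^{(\infty)}_r$, i.e., lies in $M_O(S^{(\infty)})$.
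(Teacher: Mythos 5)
Your architecture coincides with the paper's: establish the projective consistency $\mu^{(n)}\circ Q_{n,m}^{-1}=\mu^{(m)}$, build $\mu^{(\infty)}$ by extending along an exhaustion of $S^{(\infty)}\setminus\mathbb{C}^{(\infty)}$, and finish with the $\pi$-system of smooth cylinder sets fed into Theorem~\ref{thm:mocriterion}. You also correctly identify the main obstacle. The difference, and the place where your write-up has a real gap, is the extension step itself. Asserting that ``a Carath\'eodory-type extension on the algebra of cylinder sets \ldots produces a finite Borel measure'' skips the only nontrivial point: a finitely additive set function on a cylinder algebra over an infinite product extends to a countably additive measure only after one verifies continuity at the empty set, and for infinite products this requires inner regularity (tightness) of the finite-dimensional marginals. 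That verification is exactly the content of the Daniell--Kolmogorov theorem, and it is not something Carath\'eodory's theorem supplies for free. The paper handles this by restricting $\mu^{(m)}$ to the complement of $A_k^{(m)}=\{\max_{1\le j\le k}d(z_j,\mathbb{C})\le k^{-1}\}$, normalizing to probability measures $P_k^{(m)}=R_k^{-1}\mu^{(m)}(\cdot\setminus A_k^{(m)})$ defined on \emph{all} of $S^{(m)}$, noting tightness from completeness and separability of $(S^{(m)},d_m)$, and only then invoking Daniell--Kolmogorov on the full product space; the resulting $P_k^{(\infty)}$ are rescaled and glued over the increasing sets $A_k^{(\infty)}$.

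A second, smaller issue is your choice of exhausting sets $\mathbb{C}^{(\infty)}_{r'}$. The paper explicitly remarks that $S^{(m)}\setminus\mathbb{C}^{(m)}$ is not a product space, which is why a naive projective-limit argument on the restricted regions is awkward: compatibility of the restrictions $\mu^{(\infty,r'')}\!\restriction_{S^{(\infty)}\setminus\mathbb{C}^{(\infty)}_{r'}}=\mu^{(\infty,r')}$ and the bookkeeping of how $Q_{n,m}^{-1}(S^{(m)}\setminus\mathbb{C}^{(m)}_r)$ sits inside $S^{(n)}\setminus\mathbb{C}^{(n)}_{r'}$ both depend on the $2^{-i}$ weights in $d_\infty$ and do not line up cleanly. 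The sets $A_k^{(m)}$ are designed so that $Q_{n,m}^{-1}(A_k^{(m)})=A_k^{(n)}$ holds exactly for $n\ge m\ge k$, and their boundaries are $\mu^{(m)}$-null by the homogeneity of $\mu^{(m)}$, so no countable exceptional set of radii needs to be tracked. Your proof becomes correct once the extension step is carried out via normalization and Daniell--Kolmogorov (or an equivalent tightness argument) on coordinate-defined exhausting sets of this type; as written, the crucial countable-additivity claim is asserted rather than proved.
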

\begin{proof}[Proof of Lemma~\ref{lem:morvdefonlyif}]
  The relation $\nu_u^{(n)} \circ Q_m^{-1} = \nu_u^{(m)}$ for integers $n\ge m \ge 0$ implies that 
\begin{align}
\label{eq:fakeconsistent}
\mu^{(n)} \circ Q_{n,m}^{-1} = \mu^{(m)}\,.
\end{align}
Our task is to find $\mu^{(\infty)}$. Although \eqref{eq:fakeconsistent}
shows that $(\mu^{(m)})_{m\in \mathbb{N}}$ forms a self-consistent family of
measures, the Daniell-Kolmogorov extension theorem fails to construct
$\mu^{(\infty)}$ since the spaces $S^{(m)}\setminus \mathbb{C}^{(m)} $ are not
product spaces. We will follow the steps in the proof of Theorem 4.1
in \cite{segers2017} to construct $\mu^{(\infty)}$ by writing $\mu^{(\infty)}$
as a sum of finite measures in $M_O(S^{(\infty)})$.

We will denote $M_b(A)$ as the class of finite measures on the Borel
$\sigma$-field $\mathcal{B}(A)$. For integer $m\ge 0$ and $r>0$, the neighborhood of $x
\in S^{(m)}$, $N_{m,r}(x)$, is given by 
\begin{align*}
N_{m,r}((z_t)_{t=1,\ldots,m}) = \{ (\widetilde{z}_t)_{t=1,\ldots,m} \in
  S^{(m)}: d_m\big( (\widetilde{z}_t)_{t=1,\ldots,m}, (z_t)_{t=1,\ldots,m} \big) <r\}\,. 
\end{align*}
Let
\begin{align*}
\mathcal{B}_f(S^{(\infty)}) & = \bigcup_{m=0}^{\infty} \big\{ Q_m^{-1}(A): A \in \mathcal{B}(S^{(m)})
             \big\}\,, \\
  \mathcal{B}_s(S^{(m)}) & = \{ B\in \mathcal{B}(S^{(m)}): \mu^{(m)}( \partial B) =0\}\,.
\end{align*}
As discussed in the proof of Theorem 4.1 in \cite{segers2017},
$\mathcal{B}_f(S^{(\infty)})$ forms a $\pi$-system and $\mathcal{B}(S^{(\infty)})$ is  
generated by $\mathcal{B}_f(S^{(\infty)})$. Moreover, $\mathcal{B}(S^{(m)})$ is equal to the
$\sigma$-field generated by $\mathcal{B}_s(S^{(m)})$. Due to the separability of
$S^{(m)}$ and $S^{(\infty)}$, the Lindel\"{o}f property gives us that 
\begin{itemize}
\item each open subset of $S^{(\infty)}$ can be covered by a countable unions
  of open subsets in $\mathcal{B}_f(S^{(\infty)})$, 
  \item each open subsets of $S^{(m)}$ can be covered by a countable
    collection of $\mu^{(m)}$-smooth open balls. 
\end{itemize}

For integers $m\ge k \ge 1$, define the subsets $A_k^{(m)}$ of $S^{(m)}$
by
\begin{align*}
A_k^{(m)} = \big\{ ( z_1, \ldots, z_m) \in S^{(m)}: \max_{1 \le
  j \le k} d (z_j , \mathbb{C}) \le k^{-1}\big\}\,.
\end{align*}
The homogeneity of $\mu^{(m)}$ under the partial scalar mapping implies that
\begin{equation*}
\mu^{(m)}\big(\{ z_1 , \ldots, z_m) \in S^{(m)}: d (z_j ,
\mathbb{C}) =c \} \big) = 0
\end{equation*}
for all $m\ge 1$, $1 \le j \le m$ and $c>0$. Thus, $\mu^{(m)}(\partial A_k^{(m)} )=0$ for all $m\ge k \ge 1$ and
$S^{(m)} \setminus A_k^{(m)} \in \mathcal{B}_s(S^{(m)})$ is open. Moreover, $d_m((z_t)_{t=1,\ldots,m},
\mathbb{C}^{(m)}) \ge 2^{-k}/k$ for $(z_t)_{t=1,\ldots,m} \in S^{(m)} \setminus A_k^{(m)}$ and
$d_m((z_t)_{t=1,\ldots,m}, \mathbb{C}^{(m)}) \le 3k^{-1}$ for $(z_t)_{t=1,\ldots,m} \in A_k^{(m)}$. Since
$\mu^{(m)} \in M_O(S^{2m +1})$, we have $\mu^{(m)} (S^{(m)} \setminus A_k^{(m)}) <
\infty$.

For the application of the Daniell-Kolmogorov extension theorem, we
will construct a consistent family of measures embedded with product
topologies. Let $\mu_k^{(m)} (\cdot ) = \mu^{(m)} \big( \cdot \setminus A_k^{(m)} \big)$
for all $m\ge k \ge 1$. For $n\ge m$, we have $Q_{n,m}^{-1}(A_k^{(m)}) =
A_k^{(n)} $ and $\mu_k^{(n)} \circ Q_{n,m}^{-1} = \mu_k^{(m)}$ by
\eqref{eq:fakeconsistent}. Define $R_k = \mu^{(k)}(S^{(k)} \setminus
A_k^{(k)})$. We must have $0< R_k < \infty$ according to the fact that
$\mu^{(k)} \in M_O(S^{(k)})$ is nonzero and homogeneous with respect to
the partial scalar mapping. Write the probability measure $P_k^{(m)} =
R_k^{-1} \mu_k^{(m)}$ on $S^{(m)}$. Therefore, we have
\begin{align*}
P_k^{(n)} \circ Q_{n,m}^{-1} = R_k^{-1}\big( \mu_k^{(n)} \circ Q_{n,m}^{-1}
  \big) = R_k^{-1} \mu_k^{(m)} = P_k^{(m)}\,, \quad m\ge n\ge k\,.
\end{align*}
The family $(P_k^{(m)})_{m\ge k}$ for each $k$ is consistent in sense of
Chapter 4, Section 8 in \cite{pollard2002}. Moreover, $P_k^{(m)}$ is
tight due to the fact that $(S^{(m)},d_m)$ is complete and separable;
see e.g. Theorem 1.3, \cite{billingsley1999}. According to the
Daniell-Kolmogorov extension theorem (Theorem 53, \cite{pollard2002}),
there exists a tight probability measure $P_k^{(\infty)}$ on $S^{(\infty)}$ such
that
\begin{align}
\label{eq:probconsistent}
P_k^{(\infty)} \circ Q_m^{-1} = P_k^{(m)}\,, \quad m\ge k\,.
\end{align}
Write $\mu_k^{(\infty)} = R_k P_k^{(\infty)}$ and 
\begin{align*}
A_k^{(\infty)} = \big\{ (x_t)_{t\in \mathbb{Z}}: \max_{-k \le j \le k} d(x_j , \mathbb{C})
  \le k^{-1} \big\}\,, \quad k\ge 1\,.
\end{align*} 
It is easy to show that $A_k^{(\infty)} \subset \{ \mathbf{x} = (x_t)_{t\in \mathbb{Z}}:
d_{\infty}(\mathbf{x}, \mathbb{C}^{(\infty )}) \le 3k^{-1}\}$. For $B \in \mathcal{B}(S^{(\infty)})$
bounded away from $\mathbb{C}^{( \infty )}$, there exists $k_0 \ge 1$ such that
$A_k^{(\infty)} \cap B = \emptyset$ for all $k\ge k_0$. We further have
\begin{align*}
  \mu_k^{(\infty)}(A_k^{(\infty)})
  & = \mu_k^{(\infty)} (Q_k^{-1}(A_k^{(k)})) = \mu^{(k)}(A_k^{(k)} \setminus A_k^{(k)})\,,\\
\mu_k^{(\infty)} (S^{(\infty)} \setminus A_k^{(\infty)}) &= \mu^{(k)}(S^{2k+1} \setminus A_k^{(k)}) < \infty\,,
\end{align*}
that is, $\mu_k^{(\infty)}$ is finite and vanishes on $A_k^{(\infty)}$. Moreover, for
$m\ge  k \ge 1$ and $B\in \mathcal{B}(S^{(m)})$,
\begin{align*}
\mu_k^{(\infty)} ( Q_m^{-1}(B) \setminus A_k^{(\infty)}) & = \mu_k^{(\infty)}( Q_m^{-1} (B\setminus
  A_k^{(\infty)})) = \mu_k^{(m)}(B \setminus A_k^{(m)})\\ & = \mu_k^{(m)} (B) = \mu_k^{(\infty)} (Q_m^{-1}(B))\,,
  \quad l\ge k \ge 1\,. 
\end{align*}
Recall that $\mathcal{B}_f(S^{(\infty)})$ forms a $\pi$-system generating $\mathcal{B}(S^{(\infty)})$. The
$\pi-\lambda$ Theorem (see e.g. Theorem 3.3 in \cite{billingsley1995}) implies
that 
\begin{align*}
\mu_m^{(\infty)}(\cdot \setminus A_k^{(\infty)}) = \mu_k^{(\infty)} ( \cdot \setminus A_k^{(\infty)}) = \mu_k^{(\infty)}(\cdot)
  \,, \quad m\ge k \ge 0\,.
\end{align*}

We now shall define a measure $\mu^{(\infty)}$ with desired properties. 
Notice that $A_k^{(\infty)} \subset A_m^{(\infty)}$ for $m\ge k \ge 0$.
For $B\in \mathcal{B}(S^{(\infty)})$, let 
\begin{align*}
\mu^{(\infty)}(B) = \mu_1^{(\infty)} (B \setminus A_1^{(\infty)}) + \sum_{k=2}^{\infty} \mu_k^{(\infty)} \big(
  (B \cap A_{k-1}^{(\infty)}) \setminus A_k^{(\infty)} \big)\,.
\end{align*}
Trivially, we must have $\mu^{(\infty)}(\cdot \setminus A_k^{(\infty)}) = \mu_k^{(\infty)}(\cdot)$
and $\mu^{(\infty)} \circ Q_m^{-1} = \mu^{(m)}$, $m\ge 0$.

Let $B \in
\mathcal{B}(S^{(m)})$ be bounded away from $\mathbb{C}^{(m)}$. We can find $l \ge
\max(m,1)$ such that $\max_{j=1,\ldots, m} d(z_j, \mathbb{C}) > l^{-1}$
for all $(z_j)_{1\le j \le m} \in B$ and thus $A_l^{(\infty)} \cap Q_m^{-1}(B) = \emptyset$
and $A_l^{(l)} \cap Q_{l,m}^{-1}(B) = \emptyset$. Finally, since $Q_m^{-1}(B) =
Q_l^{-1}(Q_{l,m}^{-1}(B))$, we have
\begin{align*}
\mu^{(\infty)} (Q_m^{-1}(B)) = \mu_l^{(\infty)}(Q_m^{-1}(B)) =
  \mu_l^{(l)}(Q_{l,m}^{-1}(B)) = \mu^{(l)}(Q_{l,m}^{-1}(B)) = \mu^{(m)}(B)\,.
\end{align*}
Furthermore, $\mu^{(\infty)}(\{\mathbf{x}: x_0 \neq 0\}) =\sup_{k} \mu^{(k)}(S \setminus
\mathbb{C}) >0$.

We will complete the proof by find the collection $\mathcal{A}$ satisfying the conditions (C1) and (C2)
in Theorem~\ref{thm:mocriterion}. Define the collection 
\begin{align*}
\mathcal{A} = \bigcup_{m=0}^{\infty} \Big\{ Q_m^{-1}(B): B \in \mathcal{B}_s(S^{(m)})\,, d(B,
  \mathbb{C}^{(m)})>0 \Big\}\,. 
\end{align*}
By following similar arguments in the proof of Theorem 4.1 in
\cite{segers2017}, it is easy to verify that conditions (C1) and (C2)
hold for $\mathcal{A}$. The Portmanteau theorem for the $M_O$-convergence
(\cite{lindskog2014}, Theorem 2.1) ensures that for every $A=
Q_m^{-1}(B) \in \mathcal{A}$ 
with $B\in \mathcal{B}_s(S^{(m)})$ and $d(B, \mathbb{C}^{(m)})>0$, 
\begin{align*}
\frac{1}{V(u)} \p(u^{-1} (Z_t)_{t\in \mathbb{N}} \in A)= \frac{1}{V(u)} \p
  (u^{-1} (Z_t)_{t=1,\ldots,m}\in B) \to \mu^{(m)}(B) = \mu^{(\infty)}(A)\,,
\end{align*}
An application of Theorem~\ref{thm:mocriterion} completes the proof. 
\end{proof}

\section{Proof of Proposition~\ref{prop:consistency}}
We have 
\begin{align*}
  \E \big[\widehat{\delta}(h)]
  & = \frac{m_n}{n} \sum_{t=1}^n \E[X_t I_{t-h}] =
  \big( m_n \p(a_{m_n}^{-1}\mathbf{Y}_{t-h} \in A) \big) \E\big[ X_t \mid
  a_{m_n}^{-1}\mathbf{Y}_{t-h} \in A \big]\\ 
& \to \rho(0)\delta(h) = \delta(h)\,, \quad n\to \infty\,,
\end{align*}
since $\rho(0)=1$ according to the definition of $a_{m_n}$. This proves \eqref{eq:consistency1}.

\begin{align*}
 & \frac{n}{m_n}\var \big( \widehat{\delta}(h) \big) \\
  &= \frac{m_n}{n} \sum_{t=1}^n
  \E \big[ (X_t I_{t-h} -\E[X_tI_{t-h}])^2 \big]\\ 
  & \quad + 2 \frac{m_n}{n}
  \sum_{l=1}^{n-1} \sum_{t=1}^{n-l} \E \big[ (X_tI_{t-h} -
  \E[X_tI_{t-h}])(X_{t+l}I_{t+l -h} - \E[X_{t+l} I_{t+ l -h }]) \big]\\ 
  & = Q_1 +Q_2\,. 
\end{align*}
For $Q_1$,
\begin{align*}
Q_1 &= m_n \E[X_t^2 I_{t-h}] - m_n^{-1} \big( m_n \E[X_t I_{t-h}]
      \big)^2 \\
& = \E[X_t^2 \mid a_{m_n}^{-1} \mathbf{Y}_{t-h} \in A] -
      \frac{1}{m_n} \big( \E[X_t^2 \mid a_{m_n}^{-1} \mathbf{Y}_{t-h} \in A]
      \big)^2\\ 
    & \to \tau_h(0)\,.
\end{align*}
Write
\begin{align*}
  & \frac{Q_2}{2} \\
  & = \frac{m_n}{n}
  \Big( \sum_{l=1}^k + \sum_{l=k+1}^{r_n+h} + \sum_{l=r_n+h+1}^n \Big) \sum_{t=1}^{n-l} \E \big[ (X_tI_{t-h} -
  \E[X_tI_{t-h}])(X_{t+l}I_{t+l -h} - \E[X_{t+l} I_{t+ l -h }]) \big]\\ 
  & = Q_{21} + Q_{22} + Q_{23}\,.
\end{align*}
Similar as $Q_1$, we have
\begin{align*}
\limsup_{n\to \infty} Q_{21} = \sum_{l=1}^k\rho(l) \tau_h(l)\,.
\end{align*}
Due to \eqref{eq:largeneg}, without loss of generality, we assume that
$|X_t|\le m_n$. We have
\begin{align*}
|Q_{23}| \le 4m_n^3 \sum_{l=r_n +1}^{\infty} \alpha(l) \to 0\,, \quad n\to \infty\,.
\end{align*}
Here we use Theorem 17.2.1 in \cite{ibragimov1971}.

We have $\sup_l |\tau_h(l)|< \infty$ since $(\tau_h(l))_l$ is absolutely summable. For $Q_{22}$,
\begin{align*}
 &\limsup_{n\to \infty} |Q_{22}| \\
& \le  \limsup_{n\to \infty} m_n \sum_{l=k+1}^{r_n + h} \Big( \big| \E \big[ X_h X_{h+l}\mid
  a_{m_n}^{-1} (\mathbf{Y}_0, \mathbf{Y}_l) \in A\times A \big] \big| \p(a_{m_n}^{-1}(\mathbf{Y}_0, \mathbf{Y}_l) \in
  A\times A)  + (\E[X_h I_0])^2 \Big)\\ 
& \le \big( \sup_l |\tau_h(l)| \big) \limsup_{n\to \infty}
  \sum_{l=k+1}^{r_n} m_n \p \big( a_{m_n}^{-1} (\mathbf{Y}_0, \mathbf{Y}_l) \in A\times A \big)
  + O(r_n/m_n) \to 0\,, \quad k\to \infty\,.  
\end{align*}
To sum up, we have 
\begin{align} \label{eq:normvariance}
\sigma_h^2 = \lim_{k\to \infty} \limsup_{n\to \infty} \frac{n}{m_n} \var
  (\widehat{\delta}(h)) = \tau_h(0) + 2 \sum_{l=1}^{\infty} \rho(l) \tau_h(l)\,.
\end{align}

\section{Proof of Theorem~\ref{thm:clt}}
According to Lemma~\ref{lem:Xtsmallmn}, 
\begin{align*}
\sqrt{\frac{m_n}{n}} \big( \sum_{t=h+1}^n X_tI_{t-h} \1(|X_t| >m_n) - \E[
 X_tI_{t-h} \1(|X_t| >m_n) ]   \big) \overset{\p}{\to } 0\,, \quad n\to \infty\,.
\end{align*}
Therefore, without loss of generality, we assume that $|X_t|\le m_n$
for $t\in \mathbb{N}$. Write 
\begin{align*}
Z_t =  X_tI_{t-h} - \E[X_tI_{t-h}] \,, \quad t\in \mathbb{N}\,.
\end{align*} 
Then we have 
\begin{align*}
\sqrt{\frac{n}{m_n}} \big( \widehat{\delta}(h) -
  \E[\widehat{\delta}(h)] \big) = \sqrt{\frac{m_n}{n}} \sum_{t=1}^n (X_tI_{t-h} -
  \E[X_t I_{t-h}]) = \sqrt{\frac{m_n}{n}} \sum_{t=1}^n Z_t\,.  
\end{align*}
Trivially, we have $\E[Z_t] =0$, $t \in \mathbb{N}$.

We apply a big-block-small-block argument to $(m_n/n)^{0.5} \sum_{t=1}^n
Z_t$. Let the length of big blocks be $m_n$ and the length of small
blocks be $r_n$. We always assume $k_n = n/m_n$ is an integer for the
ease of presentation; the general case can be treated similarly.

The big blocks $K_{ni}$ are given by 
\begin{align*}
K_{ni} = \{(i-1) m_n+1, \ldots, i m_n \}\,, \quad i=1,\ldots, k_n\,,
\end{align*}
and $\widetilde{K}_{ni}$ is the index set consisting of all but the
first $r_n$ elements of $K_{ni}$. The small blocks $J_{ni}=K_{ni}
\setminus \widetilde{K}_{ni} $. Write $S_n(B) =
(m_n/n)^{0.5} \sum_{t\in B} Z_t$ for a set $B \subset \{1, \ldots, n\}$.

The following lemmas show that the sums over the small blocks is
asymptotically negligible. 
\begin{lemma}\label{lem:smallneg}
  Under the conditions of Theorem~\ref{thm:clt}, 
  \begin{align}
  \label{eq:smallneg}
 \var \Big( \sum_{i=1}^{k_n} S_n(J_{ni}) \Big) \to 0\,, \quad n\to \infty\,.  
  \end{align}
\end{lemma}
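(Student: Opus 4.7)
The plan is to reduce $\var(\sum_{i=1}^{k_n} S_n(J_{ni}))$ to a product of $r_n$ with a covariance-sum bound that decays like $1/m_n$. Observe that
\[
\sum_{i=1}^{k_n} S_n(J_{ni}) = \Big(\frac{m_n}{n}\Big)^{1/2}\sum_{t\in U}Z_t, \quad U=\bigcup_{i=1}^{k_n}J_{ni},\quad |U|=k_nr_n.
\]
By stationarity of $(Z_t)$ and the triangle inequality,
\begin{align*}
\var\Big(\sum_{i=1}^{k_n} S_n(J_{ni})\Big) = \frac{m_n}{n}\var\Big(\sum_{t \in U} Z_t\Big) \le \frac{m_n}{n} k_n r_n\Big(\var(Z_1)+ 2\sum_{l\ge 1}|\cov(Z_1, Z_{1+l})|\Big),
\end{align*}
and the prefactor $(m_n/n) k_n r_n$ simplifies to $r_n$. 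The task therefore reduces to showing $r_n[\var(Z_1) + 2 \sum_{l \ge 1} |\cov(Z_1, Z_{1+l})|] \to 0$.

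I would then mirror the $Q_1$ and $Q_2$ analyses in the proof of Proposition~\ref{prop:consistency} and partition the lags into three ranges $1 \le l \le k$, $k < l \le r_n$, and $l > r_n$. Since $m_n \var(Z_1) \to \tau_h(0)$, the variance term contributes $r_n \var(Z_1) = O(r_n/m_n) \to 0$ by condition {\bf (M)}. For small lags, $m_n |\cov(Z_1, Z_{1+l})| \to \rho(l) \tau_h(l)$ combined with \eqref{eq:infinitesum} gives $r_n \sum_{l=1}^k |\cov(Z_1, Z_{1+l})| = O(r_n/m_n) \to 0$ for any fixed $k$. For medium lags, the anticlustering bound used in controlling $Q_{22}$ shows that $m_n \sum_{l=k+1}^{r_n} |\cov(Z_1, Z_{1+l})|$ can be made arbitrarily small by first sending $n \to \infty$ and then $k \to \infty$; multiplication by the extra factor $r_n/m_n\to 0$ only reinforces this. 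For large lags, since we may assume $|X_t| \le m_n$ (so $|Z_t| \le 2 m_n$), Ibragimov's covariance inequality (Theorem 17.2.1 in \cite{ibragimov1971}) yields $|\cov(Z_1, Z_{1+l})| \le 16 m_n^2 \alpha(l)$, whence
\[
r_n \sum_{l > r_n} |\cov(Z_1, Z_{1+l})| \le 16\, r_n m_n^2 \sum_{l > r_n} \alpha(l) \to 0,
\]
thanks to $m_n^7 \sum_{l > r_n} \alpha(l) \to 0$ together with $r_n^2/m_n \to 0$.

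The main obstacle is the medium-lag band, where one must adapt the $Q_{22}$ anticlustering argument to the truncated process and check that the resulting double limit (first $n\to\infty$, then $k\to\infty$) survives multiplication by $r_n/m_n$. Once the three pieces are summed, the bound yields \eqref{eq:smallneg}.
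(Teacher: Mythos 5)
Your argument is correct, but it is organized differently from the paper's. You lump all small-block indices into a single set $U$ and use stationarity to bound the variance by $|U|$ times the total absolute covariance sum over all lags, which reduces the problem to showing $r_n\bigl(\var(Z_1)+2\sum_{l\ge 1}|\cov(Z_1,Z_{1+l})|\bigr)\to 0$; you then control the lag ranges $l\le k$, $k<l\le r_n$ and $l>r_n$ by reusing the $Q_{21}$/$Q_{22}$ machinery from Proposition~\ref{prop:consistency} (in particular the anticlustering condition \eqref{eq:anticlustering} and the summability \eqref{eq:infinitesum}) together with Ibragimov's inequality for the tail. The paper instead keeps the block structure: it splits the variance into $P_1=\sum_i\var(S_n(J_{ni}))$, handled by Cauchy--Schwarz within each block to get $O(r_n/m_n)$, and $P_2$, the between-block covariances, handled purely by the mixing inequality, exploiting the fact that distinct small blocks are separated by at least $m_n-r_n$ time steps. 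The trade-off is real: your route has lighter bookkeeping (no between-block case at all) but imports the anticlustering and covariance-summability arguments, which the paper's proof of this lemma does not need; the paper's route is self-contained for the lemma, using only second moments and mixing. Two small points worth making explicit in your write-up: (i) the reduction to $|X_t|\le m_n$ via Lemma~\ref{lem:Xtsmallmn} must be invoked before your tail bound $|\cov(Z_1,Z_{1+l})|\le 16m_n^2\alpha(l)$ is legitimate (as the paper does at the start of the proof of Theorem~\ref{thm:clt}); and (ii) for the medium-lag band the double limit is not actually needed --- since the factor $r_n/m_n\to 0$, it suffices that $\limsup_n m_n\sum_{l=1}^{r_n}|\cov(Z_1,Z_{1+l})|<\infty$, which already follows from the boundedness of the conditional second moments and \eqref{eq:anticlustering}, so the step you flag as the main obstacle is in fact the easy part.
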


\begin{proof}
  We have
  \begin{align*}
    \var \Big( \sum_{i=1}^{k_n} S_n(J_{ni}) \Big)
    & \le \sum_{i=1}^{k_n} \var \big(
    S_n(J_{ni}) \big) + 2 \sum_{1\le i_1 < i_2 \le k_n} \big|
    \cov \big( S_n(J_{ni_1}), S_n(J_{ni_2}) \big) \big|\\ 
    &  = P_1 + P_2\,.
  \end{align*}
We start with
\begin{align*}
  &\var(S_n(J_{ni})) \\
  & = \frac{m_n}{n} \left( \sum_{t=(i-1)m_n +1}^{(i-1) m_n + r_n } \var
    (X_t I_{t-h}) + 2 \sum_{l=1}^{r_n -1}  \sum_{t=(i-1) m_n +1}^{(i-1) m_n
    +r_n -l} \cov( X_t I_{t-h}, X_{t+l} I_{t+l - h} )\right) \\ 
  & \le 2 \frac{m_n}{n} \sum_{t=(i-1)m_n +1}^{(i-1)m_n + r_n} \E[(X_tI_{t-h})^2] 
+ 4 \frac{r_n m_n}{n} \Big(\sum_{t=(i-1)m_n +1}^{(i-1)m_n + r_n}
    \E[(X_tI_{t-h})^2]  \Big)^2 \,. 
\end{align*}
Notice that 
\begin{align*}
& m_n\E[(X_t I_{t-h})^2] = m_n \E[X_t^2 I_{t-h}] = \E[X_t^2 \mid
  a_{m_n}^{-1} \mathbf{Y}_{t-h} \in A ] < \infty\,.
\end{align*}
Thus $\var(S_n(J_{ni})) = O(r_n/n)$. Due to the
stationarity of $(X_t, \mathbf{Y}_t)$, $P_1 = \sum_{i=1}^{k_n} \var(S_n(J_{ni})) =
O(r_n k_n/n) = O(r_n/ m_n) \to 0$ as $n\to \infty$.

For $i_1 < i_2$, 
\begin{align*}
  \big| \cov \big( S_n(J_{ni_1}), S_n(J_{ni_2}) \big) \big|
  &\le \frac{m_n}{n} \sum_{t_1
  =(i_1-1)m_n +1}^{(i_1-1)m_n + r_n} \sum_{t_2 = (i_2-1)m_n + 1}^{(i_2
  -1) m_n + r_n} \big| \cov \big(X_{t_1} I_{ t_1-h}, X_{t_2} I_{t_2 -h}
  \big) \big| \\ 
&\le \frac{m_nr_n}{n} \sum_{l=1}^{r_n} 4 m_n^2\alpha((i_2-i_1)m_n-r_n + l) \,.
\end{align*}
In the last step, We will use Theorem 17.2.1 in
\cite{ibragimov1971}. Therefore, 
\begin{align*}
  P_2
& \le  4 \sum_{i_1=1}^{k_n} \sum_{l=1}^{k_n - i_1} |\cov(S_n(J_{ni_1}),
  S_n(J_{n,i_1+l}))|\\ 
  & \le 16 \frac{k_nm_n r_n}{n} \sum_{i=1}^{k_n-1} m_n^2
  \sum_{l=1}^{r_n}  \alpha(im_n-r_n  + l) \le 16 m_n^2r_n
  \sum_{l=r_n+1}^{\infty} \alpha(l) \to 0\,.
\end{align*}
Thus, \eqref{eq:smallneg} holds.
\end{proof}

Relation \eqref{eq:smallneg} implies that $\sum_{i=1}^{k_n} S_n(K_{ni})$
has the same distribution of $\sum_{i=1}^{k_n}
S_n(\widetilde{K}_{ni})$. Assume that 
$(\widetilde{S}_n(\widetilde{K}_{ni}))_{i=1,\ldots, k_n}$ is an independent
and identically distributed sequence with $S_n(\widetilde{K}_{ni})
\overset{d}{=} \widetilde{S}_n (\widetilde{K}_{ni})$. For any $t\in \mathbb{R}$,
\begin{align*}
& \Big| \E\Big[ \prod_{l=1}^{k_n} e^{\imath tS_n(\widetilde{K}_{nl})}\Big] -
  \E\Big[ \prod_{s=1}^{k_n} e^{\imath t \widetilde{S}_n(\widetilde{K}_{ns})}
  \Big] \Big|\\
& = \Bigg| \sum_{l=1}^{k_n} \E \Big[ \big(e^{\imath t S_n(\widetilde{K}_{nl})}
  - e^{\imath t \widetilde{S}_n(\widetilde{K}_{nl})}\big) \prod_{s=1}^{l-1} e^{\imath t
  \widetilde{S}_n (\widetilde{K}_{ns}) }\prod_{s=l+1}^{k_n} e^{\imath t
  S_n(\widetilde{K}_{ns})}\Big] \Bigg|\\ 
& \le  \sum_{l=1}^{k_n} \Bigg|\E \Big[ \prod_{s=1}^{l-1} e^{\imath t
  \widetilde{S}_n (\widetilde{K}_{ns}) } \big(e^{\imath t S_n(\widetilde{K}_{nl})}
  - e^{\imath t \widetilde{S}_n(\widetilde{K}_{nl})} \big)\prod_{s=l+1}^{k_n} e^{\imath t
  S_n(\widetilde{K}_{ns})} \Big] \Bigg|\\
& \le 4 k_n \alpha(r_n) \to 0\,. 
\end{align*}
In the last step, we used Theorem 17.2.1 in
\cite{ibragimov1971}. The sums
$\sum_{i=1}^{k_n} S_n(\widetilde{K}_{ni})$ and $\sum_{i=1}^{k_n}
\widetilde{S}_n(\widetilde{K}_{ni})$ have the same limits in
distribution if existed. Moreover,
$\widetilde{S}_n(\widetilde{K}_{ni})$ and
$\widetilde{S}_n(K_{ni})$ have the same asymptotic distribution due to
Lemma~\ref{lem:smallneg}. According to Theorem 4.1 in \cite{Petrov1995},  
the central limit theorem 
\begin{align*}
\sum_{i=1}^{k_n} \widetilde{S}_n(K_{ni}) \overset{d}{\to } N \big( 0,
  \sigma_h^2\big)\,,
\end{align*}
holds with $\sigma^2_h$ defined in \eqref{eq:normvariance} if and only if the following three conditions: 
\begin{align}
  \label{eq:clt001}
  \E\Big[ \sum_{i=1}^{k_n } \widetilde{S}_n (K_{ni}) \Big] & =0\,,\\ 
  \label{eq:clt002}
 \var\Big( \sum_{i=1}^{k_n } \widetilde{S}_n (K_{ni}) \Big)  &\to \sigma^2_h\,,\quad n\to \infty\,,
\end{align}
and for every $\varepsilon>0$, 
\begin{align}
\label{eq:clt003}
\sum_{i=1}^{k_n} \E\big[ (\widetilde{S}_n(K_{ni}))^2 \1(|\widetilde{S}_n(K_{ni})| > \varepsilon) \big] \to
  0\,, \quad n\to \infty\,.
\end{align}
Combining with
Lemma~\ref{lem:smallneg}, Proposition~\ref{prop:consistency} implies that \eqref{eq:clt001}
and \eqref{eq:clt002} hold. We will prove \eqref{eq:clt003} in the
following lemma. 

\begin{lemma}\label{lem:infismall}
 Under the conditions of Theorem~\ref{thm:clt}, the limit \eqref{eq:clt003}
 holds.  
\end{lemma}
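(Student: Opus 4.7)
The plan is to verify the Lindeberg-type condition \eqref{eq:clt003} by a fourth-moment Markov argument. Since $(\widetilde{S}_n(K_{ni}))_{i=1,\ldots,k_n}$ is iid with the same marginal distribution as $S_n(K_{n1})$, one gets
\begin{align*}
\sum_{i=1}^{k_n} \E\big[(\widetilde{S}_n(K_{ni}))^2 \1(|\widetilde{S}_n(K_{ni})| > \varepsilon)\big] \le \varepsilon^{-2} k_n\, \E\big[(S_n(K_{n1}))^4\big],
\end{align*}
so the task reduces to showing $k_n\, \E[(S_n(K_{n1}))^4] \to 0$.

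After the truncation from Lemma~\ref{lem:Xtsmallmn} is in force, $|Z_t| \le 2 m_n$, and the variance computation in the proof of Proposition~\ref{prop:consistency} gives $m_n \E[Z_t^2] \to \tau_h(0)$, hence $\E[Z_t^2] = O(m_n^{-1})$ and $\E[Z_t^4] \le (2m_n)^2 \E[Z_t^2] = O(m_n)$. For the block sum $U_n := \sum_{t=1}^{m_n} Z_t$, I plan to apply a Rosenthal-type inequality for $\alpha$-mixing stationary sequences; the moment hypothesis $\E[|X|^q] < \infty$ with $q \ge 6$ provides the required $(4+\delta)$-th moment, and the mixing-rate summability in \eqref{eq:mixing} supplies the needed decay of $\alpha(\cdot)$. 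This yields
\begin{align*}
\E[U_n^4] \le C\Big(m_n \max_t \E[Z_t^4] + \big(m_n \max_t \E[Z_t^2]\big)^2\Big) = O(m_n^2).
\end{align*}

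Combining these estimates,
\begin{align*}
k_n\, \E[(S_n(K_{n1}))^4] = \frac{n}{m_n}\cdot\Big(\frac{m_n}{n}\Big)^2 \E[U_n^4] = O\!\left(\frac{m_n^3}{n}\right) \to 0,
\end{align*}
since the condition $m_n^4 r_n^3/n \to 0$ in \textbf{(M)} (with $r_n \ge 1$) implies $m_n^3/n \to 0$. The main obstacle is justifying the Rosenthal-type bound $\E[U_n^4] = O(m_n^2)$: one must verify that condition \textbf{(M)} matches the precise hypotheses of a standard Rosenthal formulation for mixing sequences (moment order, summability exponent for $\alpha(k)$, and stationarity). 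A fallback, if the off-the-shelf inequality is awkward to invoke, is to expand $\E[U_n^4]$ directly as $\sum_{t_1,t_2,t_3,t_4} \E[Z_{t_1}Z_{t_2}Z_{t_3}Z_{t_4}]$, classify the terms by the size of the largest gap among the ordered indices, and apply Davydov's covariance inequality on each class — an elementary but notation-heavy alternative leading to the same $O(m_n^2)$ estimate.
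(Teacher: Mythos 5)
Your reduction is sound and lands on exactly the same target as the paper's: bounding $\E[S^2\1(|S|>\varepsilon)]$ by $\varepsilon^{-2}\E[S^4]$ (the paper uses Cauchy--Schwarz together with $\var(S_n(K_{n1}))=O(m_n/n)$, to the same effect), the problem reduces to showing $k_n\,\E[(S_n(K_{n1}))^4]=\frac{m_n}{n}\E\big[\big(\sum_{t=1}^{m_n}Z_t\big)^4\big]\to 0$, and your moment bookkeeping $\E[Z_t^2]=O(m_n^{-1})$, $\E[Z_t^4]\le (2m_n)^2\E[Z_t^2]=O(m_n)$ under the truncation from Lemma~\ref{lem:Xtsmallmn} is correct. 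The gap is in the main step: the displayed inequality
\begin{align*}
\E[U_n^4]\le C\Big(m_n\max_t\E[Z_t^4]+\big(m_n\max_t\E[Z_t^2]\big)^2\Big)
\end{align*}
is Rosenthal's inequality for \emph{independent} summands. The known versions for $\alpha$-mixing sequences (Doukhan, Shao, Rio) replace $\E[Z_t^4]$ and $\E[Z_t^2]$ by $L^{4+\delta}$- and $L^{2+\delta}$-norms and carry a constant depending on a weighted summability condition such as $\sum_k(k+1)\,\alpha(k)^{\delta/(4+\delta)}<\infty$. Neither the moment inflation (which would turn your bound into $O(m_n^{2+3\delta/(4+\delta)})$ -- still sufficient, but not what you wrote) nor the summability hypothesis is verified, and the latter is not an obvious consequence of \eqref{eq:mixing}, which only controls $m_n^7\sum_{h>r_n}\alpha(h)$ and $m_n\alpha(r_n)$ along the particular sequence $(r_n)$. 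As written, the primary route does not go through.

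Your fallback is the right fix and is precisely the paper's proof: expand $\E\big[\big(\sum_{t=1}^{m_n}Z_t\big)^4\big]$ over ordered $4$-tuples, split according to which gaps exceed $r_n$, and apply the covariance inequality for bounded mixing variables (Theorem 17.2.1 in Ibragimov--Linnik) with $\|Z_t\|_\infty\le 2m_n$. The classes with a large gap contribute at most $c\,r_n^2m_n^7\sum_{h>r_n}\alpha(h)$, which is exactly what \eqref{eq:mixing} is designed to kill, while the near-diagonal class of $O(m_nr_n^3)$ tuples contributes only $O(r_n^3)$ because the shared indicator event has probability $O(m_n^{-1})$, cancelling the $m_n$ choices of the leading index; this gives $\frac{m_n}{n}\E[U_n^4]\le c\,\frac{m_n^8r_n^2}{n}\sum_{h>r_n}\alpha(h)+c\,\frac{m_nr_n^3}{n}\to 0$ under condition {\bf (M)} -- in fact a bound of order $o(r_n^2)+O(r_n^3)$ for $\E[U_n^4]$, sharper than your $O(m_n^2)$. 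To complete your argument you should either carry out this expansion (in which case your proof coincides with the paper's) or cite a specific mixing Rosenthal inequality and check its hypotheses against condition {\bf (M)}, which is the part currently missing.
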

\begin{proof}[Proof of Lemma~\ref{lem:infismall}]

  According to the Cauchy-Schwarz inequality, we have  
  \begin{align*}
&\Big( \sum_{i=1}^{k_n}\E \Big[ (\widetilde{S}_n(K_{ni}))^2 \1 (|\widetilde{S}_n(K_{ni})
    |> \varepsilon)  \Big]\Big)^2 \\
  & \le c k_n^2 \E\big[ (\widetilde{S}_n  (K_{n1}))^4\big] \var(\widetilde{S}_n(K_{ni}))= \E \Big[ \Big(
    \sum_{t=1}^{m_n} Z_t \Big)^4 \Big] O(m_n/n)\,. 
  \end{align*}
  Define the set $B=\{(s_1, s_2, s_3, s_4) \in \mathbb{N}^4: 1 \le s_1 \le s_2 \le s_3 \le s_4 \le m_n\}
$ and the subsets  
\begin{align*}
B_1 &=\{(s_1, s_2, s_3, s_4) \in B:  s_2-s_1 >r_n \}\,,\\ 
 B_2 &=\{(s_1, s_2, s_3, s_4) \in B\setminus B_1: s_4-s_3 >r_n \}\,,\\  
 B_3 &=\{(s_1, s_2, s_3, s_4) \in B: s_3-s_2 >r_n , \max(s_2-s_1, s_4-s_3)\le r_n\}\,,\\
 B_4 &=\{(s_1, s_2, s_3, s_4) \in B: \max(s_2-s_1,s_3-s_2, s_4-s_3)\le r_n\}\,. 
\end{align*}
We have
\begin{align*}
& \E\Big[ \Big( \sum_{t=1}^{m_n} Z_t \Big)^4 \Big] \leq \E \big[c
  \sum_{(s_1,s_2, s_3, s_4) \in B} \prod_{j=1}^4Z_{s_j} \big]\\ 
& = c\Big( \sum_{(s_1,s_2, s_3, s_4) \in B_1} +  \sum_{(s_1,s_2, s_3, s_4) \in
  B_2} +  \sum_{(s_1,s_2, s_3, s_4) \in B_3} + \sum_{(s_1,s_2, s_3, s_4) \in
  B_4}   \Big) \E \Big[ \prod_{j=1}^4 Z_{s_j} \Big]\\ 
& = c(P_1 + P_2 +P_3 + P_4)\,.
\end{align*}

Let $c>0$ be a constant whose value may vary between lines. An application of Theorem 17.2.1 in \cite{ibragimov1971} yields that 
\begin{align*}
  |P_1| + |P_2|
& = \sum_{(s_1, s_2, s_3 ,s_4) \in B_1 }\big|  \cov(Z_{s_1}, Z_{s_2}
  Z_{s_3} Z_{s_4}) \big| +\sum_{(s_1, s_2, s_3 ,s_4) \in B_2 } \big| \cov(Z_{s_4}, Z_{s_1}
  Z_{s_2} Z_{s_3}) \big|\\
& \le c m_n^7 \sum_{h=r_n+1}^{\infty} \alpha(h)\,.
\end{align*} 
For $P_3$,
\begin{align*}
  |P_3|
& = \sum_{(s_1, s_2, s_3 ,s_4) \in B_3 }\big|  \cov(Z_{s_1}, Z_{s_2}
  Z_{s_3} Z_{s_4}) \big| + \big|\cov(Z_{s_1}, Z_{s_2})\big|\,
  \big| \cov(Z_{s_3}, Z_{s_4})\big|\\ 
& \le c r_n^2 m_n^5 \sum_{h=r_n +1}^{\infty} \alpha(h) + c r_n^2 \big( \E[X_h^2\mid
  a_{m_n}^{-1}\mathbf{Y}_0  \in A] \big)^2 \,.
\end{align*}
For $P_4$,
\begin{align*}
|P_4| \le c r_n^3\E[X_h^4 \mid a_{m_n}^{-1} \mathbf{Y}_0 \in A]\,.
\end{align*}
Therefore, we have
\begin{align*}
\E\Big[ \Big( \sum_{t=1}^{m_n} Z_{t,h} \Big)^4 \Big] \le c (r_n^2 m_n^7 +
  r_n^2 m_n^5) \sum_{l=r_n +1}^{\infty} \alpha(l) + c r_n^3  \E[X_{h+1}^4 \mid
  a_{m_n}^{-1} \mathbf{Y}_1 \in A]\,,  
\end{align*}
which implies that 
\begin{align*}
\frac{m_n}{n} \E \Big[ \Big( \sum_{t=1}^{m_n}  Z_{t}
  \Big)^4 \Big] \le c \frac{m_n^8r_n^2}{n} \sum_{h=r_n +1}^{\infty} \alpha(h) + c
  \frac{m_nr_n^3}{n} \to 0\,.
\end{align*}
\end{proof}

\section{Proof of Theorem~\ref{thm:bootclt}}
Based on Lemma~\ref{lem:Xtsmallmn}, we always assume that
  $|Z| \le m_n$ in the proof. As implied by \eqref{eq:consistency2},  
\begin{align*}
\frac{m_n}{n} \sum_{t=1}^n \big( X_t I_{t-h} - \E[X_t I_{t-h}] \big) \overset{\p}{\to
  }0\,, \quad n\to \infty\,.
\end{align*} 

Based on the bootstrap algorithm, we have
\begin{align*}
 &  \frac{m_n}{n} \var^{\star}  \Big( \sum_{t=1}^n Z_{t^{\star}}  \Big)\\ 
   & = \frac{m_n}{n} \sum_{t=1}^n \var^{\star}\big(
     Z_{t^{\star}} \big) +
 \frac{2m_n}{n} \sum_{l=1}^{n-1} \sum_{t=1}^{n-l} \cov^{\star}\big(
     Z_{t^{\star}}, Z_{(t+l)^{\star}} \big)\\ 
   & = \frac{m_n}{n} \sum_{t=1}^n (Z_t - \overline{Z}_n)^2
     + \sum_{l=1}^{n-1} \frac{2m_n(n-l) }{n^2 } \p (L_1 \ge l) \sum_{t=1}^n 
      (Z_t - \overline{Z}_n) (Z_{t+l} -
     \overline{Z}_n)   \\
   & \quad + \sum_{l=1}^{n-1} \frac{2m_n(n-l) }{n^2 }  \p (L_1 < l) \Big(\sum_{t=1}^n  
     (Z_t - \overline{Z}_n )\Big)^2\\ 
   & = Q_1 + Q_2 + Q_3\,.
\end{align*}
Since $\sum_{t=1}^n Z_t - \overline{Z}_n =0$, we have $Q_3
=0$. Due to \eqref{eq:consistency2}, we have 
\begin{align*}
\frac{m_n}{n}(\overline{Z}_n)^2 \overset{\p}{\to } 0\,, \quad 
\E \big[\frac{m_n}{n} \sum_{t=1}^n \E^{\star} [Z^2_{t^{\star}} ] \big] =\tau_h(0)\,,
\end{align*}
and thus, $\E[Q_1] =  \tau_h (0)$. 

We will show \eqref{eq:bootclt} by following the proof of
Theorem 2.1 in \cite{davis2012142}. Recall $N=N_n = \inf \{l\ge 1: \sum_{i=1}^l L_i \ge n\}$ and 
\begin{align*}
S_N = \sum_{i=1}^N S_{Ni} = \sum_{i=1}^N  \sum_{t=H_i}^{H_i+L_i
  -1} Z_t\,.
\end{align*}
When performing the stationary bootstrap, the length of the
bootstrapped sequence may exceed $n$ and we will show that the
bootstrapped sequence truncated at length $n$ has the same
asymptotic distribution as the full bootstrapped sequence.
\begin{lemma}\label{lem:bootneg}
Under the conditions of Theorem~\ref{thm:bootclt}, 
\begin{align}\label{eq:bootneg}
\p^{\star} \Big( (n/m)^{0.5} |\delta^{\star}(h) - \frac{m_n}{n}S_N)|> \varepsilon \Big) \overset{\p}{\to
  } 0\,.
\end{align}
\end{lemma}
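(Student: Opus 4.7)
The plan is to write $(m_n/n)S_N - \delta^\star(h) = (m_n/n)B$, where $B$ denotes the sum of the $U := T_N - n$ entries of the $N$-th block that are discarded when the bootstrap sequence is truncated to length $n$ (with $T_j := \sum_{i=1}^j L_i$, so $T_{N-1} < n \leq T_N$). Markov's inequality applied under $\p^\star$ reduces the claim to $(m_n/n)\E^\star[B^2] \overset{\p}{\to} 0$, and since this is a nonnegative random variable, one further Markov step reduces it to showing that its mean $(m_n/n)\E[B^2]$ tends to zero.

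The central computation is that, because $H_N$ is uniform on $\{1,\ldots,n\}$ and independent of the block lengths $(L_i)$, the discarded segment is, conditional on $U=u$, a block of $u$ consecutive cyclic values of $(Z_t)$ starting from a uniform position. Consequently
\begin{align*}
\E^\star[B^2 \mid U = u] = u\,\hat R_n(0) + 2\sum_{l=1}^{u-1}(u-l)\hat R_n(l), \quad \hat R_n(l) := \frac{1}{n}\sum_{h=1}^n Z_h Z_{(h+l)\bmod n},
\end{align*}
the circular sample autocovariance of $(Z_t)$. Independence of the bootstrap indices from $(Z_t)$ then yields
\begin{align*}
\E[B^2] \leq \E[U]\Big(\E[\hat R_n(0)] + 2\sum_{l=1}^{\infty}|\E[\hat R_n(l)]|\Big).
\end{align*}

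Two ingredients close the bound. First, by the memoryless property of the geometric block lengths the overshoot $U$ has distribution $\p(U=m) = \theta(1-\theta)^m$ on $\{0,1,\ldots\}$, giving $\E[U] = (1-\theta)/\theta = O(1/\theta)$. Second, decomposing $\E[\hat R_n(l)] = (1-l/n)\cov(Z_0, Z_l) + (l/n)\cov(Z_0, Z_{n-l})$ and applying the same three-region split used in the variance computation for Proposition~\ref{prop:consistency} (pointwise limit $m_n\cov(Z_0, Z_l)\to\rho(l)\tau_h(l)$ for fixed $l$, the anti-clustering bound \eqref{eq:anticlustering} for intermediate $l$, and the mixing bound $|\cov(Z_0, Z_l)|\leq C m_n^2\alpha(l)$ together with $m_n^7\sum_{l>r_n}\alpha(l)\to 0$ for $l>r_n$) gives $\sum_{l=1}^{\infty}|\E[\hat R_n(l)]| = O(1/m_n)$. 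Combining, $\E[B^2] = O(1/(\theta m_n))$, hence $(m_n/n)\E[B^2] = O(1/(n\theta)) \to 0$ since $\theta\to 0$ together with $n\theta^2\to\infty$ force $n\theta\to\infty$.

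The main obstacle is establishing the summable bound $\sum_l|\E[\hat R_n(l)]| = O(1/m_n)$ uniformly in $n$. The pointwise convergence $m_n\cov(Z_0, Z_l)\to\rho(l)\tau_h(l)$ does not on its own furnish a summable dominator, so one must faithfully reuse the three-region argument from Proposition~\ref{prop:consistency}, combining \eqref{eq:infinitesum} with the anti-clustering condition \eqref{eq:anticlustering} and the mixing bound \eqref{eq:mixing}, while additionally checking that the circular wrap-around term $(l/n)\cov(Z_0, Z_{n-l})$ contributes only a negligible amount, controlled by $(l/n)\cdot m_n^2\alpha(n-l)$ and hence dominated by the same tail-mixing condition.
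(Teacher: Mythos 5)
Your proposal is correct and follows essentially the same skeleton as the paper's proof: identify the truncation error with the sum of $Z_t$'s over a single block of geometric length (the overshoot $U$, whose law $\p(U=m)=\theta(1-\theta)^m$ you compute explicitly where the paper simply invokes memorylessness and stationarity to reduce to $\sum_{t=1}^{L_1-1}Z_t$), then kill it by a second-moment/Chebyshev bound under $\p^\star$ followed by taking expectations over $(Z_t)$. The one substantive difference is how the off-diagonal terms are handled: the paper uses the crude bound $|\E[Z_tZ_{t+s}]|\le\E[Z_1^2]=O(1/m_n)$ over all $O(l^2)$ pairs, paying a factor $\E[L_1^2]=O(\theta^{-2})$ and landing on $O(1/(n\theta^2))\to 0$, whereas you invoke the absolute summability $\sum_{l}|\cov(Z_0,Z_l)|=O(1/m_n)$ (recycled from the three-region estimates in the proof of Proposition~\ref{prop:consistency}) to pay only $\E[U]=O(\theta^{-1})$ and get the sharper rate $O(1/(n\theta))$; this buys you a proof that needs only $n\theta\to\infty$ rather than $n\theta^2\to\infty$ for this lemma, at the cost of re-justifying the uniform covariance summability, which the paper's route avoids entirely. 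One small point to tidy: your identity for $\E^\star[B^2\mid U=u]$ in terms of $\widehat R_n(l)$, $l<u$, implicitly assumes $u\le n$; for $u\ge n$ the circular lags repeat and the bound $u\sum_{l\ge1}|\widehat R_n(l)|$ should be replaced by something like $u\lceil u/n\rceil\sum_{l=0}^{n-1}|\widehat R_n(l)|$, which still works since $\E[U\lceil U/n\rceil]=O(\theta^{-1}+n^{-1}\theta^{-2})=O(\theta^{-1})$.
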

\begin{proof}[Proof of Lemma~\ref{lem:bootneg}]
  By following the arguments in \cite{politis1994}, the distribution
  of $(n/m)^{0.5} (\delta^{\star}(h) - (m_n/n)S_N)$ is the same as 
  \begin{align*}
\Big(\frac{m_n}{n}\Big)^{0.5} \sum_{t=H_1}^{H_1+L_1-1} Z_t \text{ or }
    \Big(\frac{m_n}{n}\Big)^{0.5}\sum_{t=1}^{L_1 -1} Z_t\,, 
  \end{align*}
due to the memoryless property of $L_i$'s geometric distribution and
the stationarity of $(Z_t)$. It is sufficient to prove 
\begin{align*}
\Big(\frac{m}{n}\Big)^{0.5} \sum_{t=1}^{L_1 -1} Z_t \overset{\p}{\to } 0\,.
\end{align*}
We have
\begin{align*}
 \frac{m_n}{n} \E\Big[ \Big( \sum_{t=1}^{L_1 -1} Z_t  \Big)^2 \Big] 
& = \sum_{l=1}^{\infty} \theta(1- \theta)^{l-1} \E \Big[\frac{m_n}{n}\Big(
  \sum_{t=1}^l Z_t \Big)^2 \Big]\\ 
& = \sum_{l=1}^{\infty} \theta(1- \theta)^{l-1}  \Big(\frac{m_n}{n} \sum_{t=1}^l
  \E \big[Z_t^2\big] + \frac{2m_n}{n}\sum_{s=1}^{l-1} \sum_{t=1}^{l-s} \E \big[Z_t
 Z_{t+s} \big]\Big)\,. 
\end{align*}
We first focus on 
\begin{align*}
& \frac{m_n}{n} \sum_{t=1}^l \E \big[Z_t^2
  \big] \le \frac{2m_nl}{n}  \E \big[Z_1^2\big] < cl/n\,,
\end{align*}
where $c>4 m_n \E[Z_1^4]$ is a finite constant. The last inequality is obtained by
Proposition~\ref{prop:consistency}. Therefore, 
\begin{align*}
\sum_{l=1}^{\infty} \theta(1- \theta)^{l-1}  \frac{m_n}{n} \sum_{t=1}^l
  \E \Big[(Z_t - \overline{Z}_n)^2 \Big] \le \frac{c}{n} \sum_{l=1}^{\infty} l \theta(1- \theta)^{l-1} = c/(n\theta)\to 0\,.
\end{align*}
By using the inequality$
\Big| \E \big[Z_s Z_{s+t} \big] \Big| \le \E \big[Z_t^2 \big]$,
 we have 
\begin{align*}
\Big|\sum_{l=1}^{\infty} \theta(1- \theta)^{l-1}   \frac{2m_n}{n}\sum_{s=1}^{l-1} \sum_{t=1}^{l-s} \E \big[Z_t
 Z_{t+s} \big] \Big|\,\le \sum_{l=1}^{\infty}\theta(1 - \theta)^{l-1} (cl^2/n )\le c/(n\theta^2) \to 0\,.
\end{align*}
Thus, we have 
\begin{align*}
\frac{m_n}{n} \E\Big[ \big( \sum_{t=1}^{L_1 -1} Z_t  \big)^2 \Big] \to
  0\,, \quad n\to \infty\,,
\end{align*}
which implies \eqref{eq:bootneg}.
\end{proof}
It is sufficient to prove that 
\begin{align}\label{eq:bootneg01}
\sqrt{\frac{n}{m_n}} \Big( \frac{m_n}{n}S_N -m_n \overline{Z}_n\Big) \overset{d}{\to
  } N(0, \sigma_h^2)\,, \quad \text{conditional on } (Z_t)\,.
\end{align}
 We will replace the
left-hand side of \eqref{eq:bootneg01} with a sum of an iid sequence
of random variables. According to the stationary bootstrap algorithm,
$\E^{\star}[S_{Ni}] = \theta^{-1}\overline{Z}_n$ for $i=1,2, \ldots, N$.
Notice that 
\begin{align*}
 & (n/m_n)^{0.5} \Big(\frac{m_n}{n} S_N - m_n \overline{Z}_n \Big) -
  (m_n/n)^{0.5} \sum_{i=1}^N (S_{Ni} - \theta^{-1} \overline{Z}_n)\\ 
 & = - m_n^{0.5} (n^{0.5}- N n^{-0.5} \theta^{-1}) \overline{Z}_n = (n
   \theta)^{-0.5} \frac{N- n\theta}{\sqrt{n \theta}} \big( (m_n/n)^{0.5}
   \sum_{t=1}^nZ_t \big)\,. 
\end{align*}
As discussed in \cite{davis2012142}, $(N- n \theta)/\sqrt{n \theta}$ converges
in distribution to a normal random variable as $n \theta \to \infty$ and its
variance is bounded for all the $n$. Since $n \theta \to \infty$ according to (\ref{bootcondition3}), we have 
\begin{align*}
(n \theta)^{-0.5} \big( (m_n/n)^{0.5} \sum_{t=1}^nZ_t \big) \overset{\p}{\to
  } 0\,.
\end{align*}
Therefore, if 
\begin{align} \label{eq:bootneg02}
\sqrt{\frac{m_n}{n}} \sum_{i=1}^N (S_{Ni} - \theta^{-1} \overline{Z}_n)
  \overset{d}{\to } N(0, \sigma_n^2)\,, \quad \text{ conditional on } (Z_t)\,,
\end{align}
the limit \eqref{eq:bootclt} holds. The sequence $(S_{Ni} - \theta^{-1}
\overline{Z}_n)_{i=1,2,\ldots, N}$ forms an iid sequence conditional on $(Z_t)$ for fixed
$N$. Given the normality of $(N-n \theta)/\sqrt{n \theta}$ as $n\theta \to \infty$, we
can apply an Anscombe type argument(see e.g. Lemma 2.5.8 in
\cite{embrechts1997}) to the left-hand side of \eqref{eq:bootneg02}
and can replace $n \theta$ by a sequence of positive integers $(K_n)$
satisfying $n \theta/K_n \to 1$. Note that 
\begin{align*}
\sqrt{\frac{m_n}{n}} (S_{Ni} - \theta^{-1} \overline{Z}_n)\,, \quad i=1,\ldots,
  N\,; n=1,2,\ldots\,,
\end{align*} 
is a triangular array. Similar to the proof of Theorem 2.1 in
\cite{davis2012142}, we will apply a classical central limit theorem
for triangular arrays of independent random variables conditional on
$(Z_t)$ by verifying the Lyapunov condition given $(Z_t)$, 
\begin{align*} 
(m_n/n)^{3/2}K_n \E^{\star} \big[ \big| S_{N1} - \theta^{-1}\overline{Z}_n
  \big|^3 \big] \sim \frac{m_n^{3/2}}{n^{1/2}} \theta \E^{\star}\big[ \big| S_{N1} - \theta^{-1}\overline{Z}_n
  \big|^3 \big] \overset{\p}{\to }0\,.
\end{align*}

\begin{lemma} \label{lem:lyapunov}
 Under the conditions of Theorem~\ref{thm:bootclt}, 
 \begin{align} \label{eq:lyapunov}
 \frac{m_n^{3/2}}{n^{1/2}} \theta \E^{\star}\big[ \big| S_{N1} -
   \theta^{-1}\overline{Z}_n \big|^3 \big] \overset{\p}{\to }0 \,.
 \end{align}
\end{lemma}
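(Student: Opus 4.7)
My plan is to apply Markov's inequality and reduce \eqref{eq:lyapunov} to the unconditional bound
\begin{align*}
\tfrac{m_n^{3/2}\theta}{n^{1/2}}\,\E\bigl[|S_{N1}-\theta^{-1}\overline{Z}_n|^3\bigr]\to 0,
\end{align*}
where the expectation is taken jointly over the data $(Z_t)$ and the bootstrap variables $H_1, L_1$. By $|a-b|^3\le 4(|a|^3+|b|^3)$, it suffices to control $\E[|S_{N1}|^3]$ and $\theta^{-3}\E[|\overline{Z}_n|^3]$ separately. Throughout I retain the truncation $|Z_t|\le m_n$ justified by Lemma~\ref{lem:Xtsmallmn}.

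\textbf{The random-block term.} Since $(H_1,L_1)$ is independent of $(Z_t)$ and $(Z_t)$ is stationary, conditioning on $L_1=l$ and averaging over the uniform starting point $H_1$ gives
\begin{align*}
\E[|S_{N1}|^3]=\sum_{l=1}^{\infty}\theta(1-\theta)^{l-1}\,\E\Big[\Big|\sum_{t=1}^{l}Z_t\Big|^3\Big].
\end{align*}
The inner moment is bounded by a Rosenthal-type inequality for $\alpha$-mixing sequences, applicable under condition~\textbf{(M)}: for $p=3$,
\begin{align*}
\E\Big[\Big|\sum_{t=1}^{l}Z_t\Big|^3\Big]\le C\bigl(l^{3/2}(\E[Z_1^2])^{3/2}+l\,\E[|Z_1|^3]\bigr).
\end{align*}
The variance computation in the proof of Proposition~\ref{prop:consistency} yields $\E[Z_1^2]=O(m_n^{-1})$, while condition~\textbf{(E)} with $r\ge 4$ combined with $|Z_t|\le m_n$ gives $\E[|Z_1|^3]=O(m_n^{-1})$. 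Thus $\E[|\sum_{t=1}^{l}Z_t|^3]=O(l^{3/2}m_n^{-3/2}+lm_n^{-1})$, and summing against the geometric weights produces $\E[|S_{N1}|^3]=O(\theta^{-3/2}m_n^{-3/2}+\theta^{-1}m_n^{-1})$. Multiplying by the prefactor,
\begin{align*}
\tfrac{m_n^{3/2}\theta}{n^{1/2}}\E[|S_{N1}|^3]=O\bigl((n\theta)^{-1/2}+(m_n/n)^{1/2}\bigr)\to 0,
\end{align*}
since $n\theta\to\infty$ (from $\theta\to 0$ and $n\theta^2\to\infty$) and $m_n/n\to 0$.

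\textbf{The centering term.} Applying the same Rosenthal estimate to $n\overline{Z}_n=\sum_{t=1}^{n}Z_t$ gives $\E[|\overline{Z}_n|^3]=O(n^{-3/2}m_n^{-3/2}+n^{-2}m_n^{-1})$, so
\begin{align*}
\tfrac{m_n^{3/2}\theta}{n^{1/2}}\,\theta^{-3}\,\E[|\overline{Z}_n|^3]=O\bigl((n\theta)^{-2}\bigr)+O\bigl((m_n/n)^{1/2}(n\theta)^{-2}\bigr)\to 0.
\end{align*}

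\textbf{Main obstacle.} The hard part is securing the third-moment bound with the correct $m_n$ scaling: a crude estimate using only $|Z_t|\le m_n$ produces $O(l^2)$ for $\E[|\sum_{t=1}^{l}Z_t|^3]$, under which the prefactor analysis collapses to $m_n^{3/2}/(n^{1/2}\theta)$ which need not vanish under the stated hypotheses. The sharp bound must exploit both the variance decay $\E[Z_1^2]=O(m_n^{-1})$ and the mixing rate~\eqref{eq:mixing}. It can be obtained either by invoking a classical Rosenthal-type inequality for strongly mixing sequences (for instance those of Rio or Yokoyama), or, self-containedly, by adapting the block-decomposition argument of Lemma~\ref{lem:infismall} for the fourth moment and interpolating through Lyapunov's inequality $\E[|X|^3]\le(\E[X^2])^{1/2}(\E[X^4])^{1/2}$.
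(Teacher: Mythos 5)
Your skeleton matches the paper's: pass to the unconditional expectation, split off the $\theta^{-1}\overline{Z}_n$ centering term (your $(n\theta)^{-2}\to 0$ for that piece agrees with the paper's $c(\theta^2n^2)^{-1}$), and reduce everything to a third-moment bound for $\sum_{t=1}^{l}Z_t$ summed against the geometric weights. The gap is in that third-moment bound. You assert a Rosenthal-type inequality $\E[|\sum_{t=1}^{l}Z_t|^3]\le C\bigl(l^{3/2}(\E[Z_1^2])^{3/2}+l\,\E[|Z_1|^3]\bigr)$ ``for $\alpha$-mixing sequences, applicable under condition (M)''. No such inequality holds in this clean iid-like form for a triangular array of sparse, truncated variables. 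The mixing versions (Rio, Yokoyama, Doukhan) replace $l\,\E[Z_1^2]$ by $\sum_{i,j\le l}|\cov(Z_i,Z_j)|$ --- repairable here via \eqref{eq:infinitesum} and \eqref{eq:anticlustering} --- but, more seriously, they replace $l\,\E[|Z_1|^3]$ by a mixing-weighted moment such as $l\int_0^1[\alpha^{-1}(u)]^{2}Q^{3}(u)\,\dif u$ with $Q$ the quantile function of $|Z_1|$. For $Z_t=X_tI_{t-h}-\E[X_tI_{t-h}]$ truncated at $m_n$ and nonzero with probability of order $m_n^{-1}$, that integral is of order $m_n^{2}(\log m_n)^2$ even under exponential mixing, not $\E[|Z_1|^3]=O(m_n^{-1})$. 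Feeding the corrected linear term through your prefactor yields $m_n^{7/2}(\log m_n)^2/n^{1/2}$, which does not vanish for admissible choices such as $m_n\sim n^{0.21}$, $r_n\sim\log n$ with $q=6$. So the step your whole argument rests on is unavailable, and your own ``main obstacle'' paragraph defers rather than closes it.

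The paper closes exactly this gap by hand: it expands $\E[|\sum_{t=1}^{l}Z_t|^3]$ over index triples $(t_1,t_2,t_3)$, partitions them into the five configurations $A_1,\ldots,A_5$ according to which gaps exceed $r_n+2h$, and applies the Ibragimov--Linnik covariance inequality (Theorem 17.2.1) to the well-separated configurations and crude moment bounds to the clustered ones. This produces the extra terms $m_n^3\theta^{-2}\sum_{h_1>r_n}\alpha(h_1)$, $r_nm_n^{-2}\theta^{-2}$ and $m_n^{1/2}r_n^2\theta^{-1}$, which are then killed precisely by the rate conditions $m_n^7\sum_{h>r_n}\alpha(h)\to 0$, $r_n^2/m_n\to 0$, $m_n^4r_n^3/n\to 0$ and $n\theta^2\to\infty$; those hypotheses are calibrated to this computation, not to an off-the-shelf Rosenthal bound. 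If you want to avoid redoing the triple-sum computation, your alternative suggestion --- interpolate via $\E[|S_l|^3]\le(\E[S_l^2])^{1/2}(\E[S_l^4])^{1/2}$ and reuse the fourth-moment decomposition of Lemma~\ref{lem:infismall} for blocks of length $l$ --- is the more promising route and does appear to produce a term of order $(m_nr_n^3/n)^{1/2}$ after summing the geometric weights, but it has to be carried out, including the $l$-dependence of the mixing remainder terms; as written it is only a remark.
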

\begin{proof}[Proof of Lemma~\ref{lem:lyapunov}]
\begin{align*}
\E\Big[ \E^{\star}\big[ \big| S_{N1} -
  \theta^{-1}\overline{Z}_n \big|^3 \big] \Big]
& = \sum_{l=1}^{\infty} \theta(1-\theta)^{l-1} \E \Big[  \Big| \sum_{t=1}^l(Z_t - \overline{Z}_n)\Big|^3
  \Big]\\
& \le c\sum_{l=1}^{\infty} \theta(1 - \theta)^{l-1} \E \Big[ \Big| \sum_{t=1}^l
  Z_t \Big|^3 \Big]  + c \sum_{l=1}^{\infty} l^3\theta(1 - \theta)^{l-1}
  \E[|\overline{Z}_n|^3]\\
& = Q_1 +Q_2\,.  
\end{align*}
As shown in Theorem~\ref{thm:clt}, $(m_nn)^{1/2}
\overline{Z}_n \overset{d}{\to } N(0, \sigma_h^2)$ as $n\to \infty$. We have
\begin{align*}
\frac{m_n^{3/2} \theta}{ n^{1/2}} Q_2 \le c \frac{m_n^{3/2} \theta}{n^{1/2}}
  (\theta (m_nn)^{1/2})^{-3} = c (\theta^2 n^2)^{-1} \to 0\,. 
\end{align*}
Let $A = \{(t_1, t_2, t_3): 1\le t_i \le l\,, i=1,2,3\}$. We are
interested in the subsets of $A$. 
\begin{align*}
  A_1
& = \{(t_1, t_2, t_3 )\in A: |t_i - t_j|> r_n +2h\,, i\neq j\,,
  i,j=1,2,3\}\,,\\  
  A_2
& = \{ (t_1, t_2, t_3) \in A: |t_i -t_1|>r_n + 2h\,, |t_2-t_3|\le
  r_n+2h\,, i=2,3\}\,,\\ 
  A_3
& = \{ (t_1, t_2, t_3) \in A: |t_1-t_2| \le r_n +2h, \min(|t_1 - t_3|,
  |t_2 -t_3|) >r_n + 2h\}\,,\\ 
  A_4
& = \{ (t_1, t_2, t_3) \in A: |t_1-t_3| \le r_n +2h, \min(|t_1 - t_2|,
  |t_2 -t_3|) >r_n + 2h\}\,,\\ 
  A_5
& = \{ (t_1, t_2, t_3) \in A: \max(|t_1 -t_2|, |t_2-t_3|, |t_3-t_1|) \le
  2r_n + 4h\}\,.
\end{align*}
Trivially, $A \subset \bigcup_{i=1}^5 A_i$. We will use Theorem 17.2.1 in
\cite{ibragimov1971} to provide upper bounds for 
\begin{align*}
\sum_{(t_1, t_2, t_3) \in A_i} \big| \E[|Z_{t_1}|Z_{t_2} Z_{t_3}] \big|\,, \quad
  i=1,2,\ldots, 5\,.
\end{align*}
Recall that $\E[Z_t] =0$. We take $\sum_{(t_1, t_2, t_3) \in A_2} \big|
\E[|Z_{t_1}|Z_{t_2} Z_3] \big|$ as example.
\begin{align*}
& \sum_{(t_1, t_2, t_3) \in A_2} \big| \E[|Z_{t_1}|Z_{t_2} Z_{t_3}] \big|\\
& \leq \sum_{(t_1, t_2, t_3) \in A_2} \big| \E[|Z_{t_1}|Z_{t_2} Z_{t_3}] -
  \E[|Z_{t_1}|] \E[Z_{t_2}Z_{t_3}] \big| + \big|  \E[|Z_{t_1}|]
  \E[Z_{t_2}Z_{t_3}] \big| \\ 
& \le \sum_{(t_1, t_2, t_3) \in A_2} (cm_n^3 \alpha\big( \min(|t_1-t_3|,
  |t_1-t_2|) \big) + cm_n^{-2})\\ 
& \le cm_n^3 r_nl \sum_{h_1=r_n+1}^{\infty} \alpha(h_1) + cm_n^{-2}r_nl^2\,. 
\end{align*}
By using similar arguments, we have
\begin{align*}
 \sum_{(t_1, t_2, t_3) \in A_1} \big| \E[|Z_{t_1}|Z_{t_2} Z_{t_3}] \big|& \le c m_n^3l^2\sum_{h_1=r_n+1}^{\infty} \alpha(h_1)\,,\\ 
 \sum_{(t_1, t_2, t_3) \in A_3} \big| \E[|Z_{t_1}|Z_{t_2} Z_{t_3}] \big|
& \le c m_n^3r_n l\sum_{h_1=r_n+1}^{\infty} \alpha(h_1)\,,\\ 
 \sum_{(t_1, t_2, t_3) \in A_4} \big| \E[|Z_{t_1}|Z_{t_2} Z_{t_3}] \big|
& \le c m_n^3r_n l\sum_{h_1=r_n+1}^{\infty} \alpha(h_1)\,,\\ 
 \sum_{(t_1, t_2, t_3) \in A_5} \big| \E[|Z_{t_1}|Z_{t_2} Z_{t_3}] \big|
& \le cm_n^{1/2}r_n^2l\,.
\end{align*}
For $Q_1$, we have
\begin{align*}
 Q_1
& \le  \sum_{l=1}^{\infty} \theta( 1- \theta)^{l-1} \Big( \sum_{A_1} + \sum_{A_2} +  \sum_{A_3} + \sum_{A_4} + \sum_{A_5}  \Big) \big|  \E\big[ |Z_{t_1}| Z_{t_2} Z_{t_3} \big]
  \big|\\ 
& \le  c (m_n^3 \theta^{-2}+ m_n^3r_n \theta^{-1})  \sum_{h_1 =r_n +1}^{\infty}
  \alpha(h_1) + c r_n m_n^{-2} \theta^{-2} + c m_n^{1/2} r_n^2 \theta^{-1} \,.
\end{align*}
Therefore, we have 
\begin{align*}
\frac{m_n^{3/2} \theta}{n^{1/2}} Q_1 \le c \Big( \frac{m_n^{9/2}}{n^{1/2}
  \theta} + \frac{m_n^{9/2}r_n}{n^{1/2}}  \Big)\sum_{h_1 =r_n +1}^{\infty}
\alpha(h_1) + c \frac{r_n}{m_n^{1/2}n^{1/2}\theta} + c \frac{m_n^{2}r_n}{n^{1/2}}\to 0\,, \quad n\to \infty\,. 
\end{align*}

\end{proof}

\end{appendix}


\bibliographystyle{imsart-number}
\bibliography{all}
\end{document}